\documentclass{IEEEtran} 

%\IEEEoverridecommandlockouts

%\IEEEoverridecommandlockouts                              % This command is only needed if you want to use the \thanks command
\usepackage{graphics} % for pdf, bitmapped graphics files
\usepackage{epsfig} % for postscript graphics files
\usepackage{amsmath} % assumes amsmath package installed
\usepackage{amssymb}  % assumes amsmath package installed

\usepackage{algpseudocode}
\usepackage{algorithm}
\usepackage{epstopdf}
\usepackage{comment}

\usepackage{hyperref}

\usepackage{tikz,pgfplots}%

\usepackage{amsthm}

\newtheorem{assumption}{\bf Assumption}

\newtheorem{theorem}{\bf Theorem}
\newtheorem{proposition}{\bf Proposition}
\newtheorem{lemma}{\bf Lemma}

\newtheorem{remark}{\bf Remark}

\usepackage{setspace}
\usepackage{tikz}               
\usepgflibrary{arrows}

\usepackage{enumitem}

\begin{document}
\title{A robust adaptive model predictive control framework for nonlinear uncertain systems}
\author{Johannes K\"ohler$^1$, Peter K\"otting$^2$, Raffaele Soloperto$^1$, Frank Allg\"ower$^1$, Matthias A. M\"uller$^2$% <-this % stops a space
\thanks{%
This work was supported by the German Research Foundation under Grants GRK 2198/1 - 277536708, AL 316/12-2, and MU 3929/1-2 - 279734922. 
The authors thank the International Max Planck Research School for Intelligent Systems (IMPRS-IS) for supporting Raffaele Soloperto.
% and by the International Max Planck Research School for Intelligent Systems (IMPRS-IS).
}
\thanks{$^1$Johannes K\"ohler, Raffaele Soloperto,  and Frank Allg\"ower are with the Institute for Systems Theory and Automatic Control, University of Stuttgart, 70550 Stuttgart, Germany.
(email:$\{$johannes.koehler, raffaele.soloperto, frank.allgower\}@ist.uni-stuttgart.de)}
\thanks{$^2$Peter K\"otting and Matthias A. M\"uller are with the Institute of Automatic Control, Leibniz University Hannover, 30167 Hannover, Germany (e-mail:koetting@stud.uni-hannover.de,mueller@irt.uni-hannover.de)}
}

\IEEEoverridecommandlockouts
\IEEEpubid{\begin{minipage}{\textwidth}\ \\[18pt] \\ \\
         \copyright 2020 John Wiley \& Sons, Ltd.  All rights reserved.
     \end{minipage}}

\maketitle
%
%%%%%%%%%%%%%%%%%%%%%%%%%%%%%%%%%%%%%%%%%%%%%%%%%%%%%%%%%%%%%%%%%%%%%%%%%%%%%%%%
\begin{abstract}
In this paper, we present a tube-based framework for robust adaptive model predictive control (RAMPC) for nonlinear systems subject to parametric uncertainty and additive disturbances.
Set-membership estimation is used to provide accurate bounds on the parametric uncertainty, which are employed for the construction of the tube in a robust MPC scheme.
The resulting RAMPC framework ensures robust recursive feasibility and robust constraint satisfaction, while allowing for less conservative operation compared to robust MPC schemes without model/parameter adaptation.
Furthermore, by using an additional mean-squared point estimate in the objective function the framework ensures finite-gain $\mathcal{L}_2$ stability w.r.t. additive disturbances.

%-----
As a first contribution we derive suitable monotonicity and non-increasing properties on general parameter estimation algorithms and tube/set based RAMPC schemes that ensure robust recursive feasibility and robust constraint satisfaction under recursive model updates.
Then, as the main contribution of this paper, we provide similar conditions for a tube based formulation that is parametrized using an incremental Lyapunov function, a scalar contraction rate and a function bounding the uncertainty.
With this result, we can provide simple constructive designs for different RAMPC schemes with varying computational complexity and conservatism.
As a corollary, we can demonstrate that state of the art formulations for nonlinear RAMPC are a special case of the proposed framework.
We provide a numerical example that demonstrates the flexibility of the proposed framework and showcase improvements compared to state of the art approaches. 
\end{abstract}

\begin{IEEEkeywords}
Nonlinear MPC, Constrained control,  Adaptive control, Uncertain systems. %Tube-based control, Robust forward invariant tube. % Nonlinear systems %
\end{IEEEkeywords}

%!TEX root = ./Adaptive_Nonlin.tex
%%%%%%%%%%%%%%%%%%%%%%%%%%%%%%%%%%%%%%%%%%%%%%%%%%%%%%%%%%%%%%%%%%%%%%%%%%%%%%%
\section{Introduction}
Model predictive control (MPC)~\cite{rawlings2009model} is an optimization based control strategy that can cope with complex nonlinear systems and general nonlinear constraints on state and input. %, maybe~\cite{kouvaritakis2016model}
The performance and theoretical properties of MPC schemes can be highly dependent on accurate prediction models.
In particular, neglecting possible computational limitations,  the lack of an accurate prediction model is one of the main practical challenges for MPC implementations. 
This has motivated an increasing amount of research focused on online model adaptation/learning in MPC, spanning the last two decades~\cite{mayne1993adaptive,shouche1998simultaneous,kim2004robust}, with current research focused on robust adaptive formulations~\cite{tanaskovic2014adaptive,lorenzen2019robust,Lu2019RAMPC,Koehler2019Adaptive,lopez2018adaptive,adetola2011robust,guay2015robust,gonccalves2016robust}, dual/learning formulations~\cite{mesbah2018stochastic,thangavel2018dual} and machine learning based approaches~\cite{hewing2019cautious,bradford2020stochastic,manzano2018robust,mckinnon2019learn}. 
However, all of these approaches suffer from at least one of the following shortcomings: 
\begin{enumerate}[label=\alph*)]
\item limitation to restrictive system classes, such as linear systems~\cite{tanaskovic2014adaptive,lorenzen2019robust,Lu2019RAMPC,Koehler2019Adaptive} or feedback linearizable systems~\cite{lopez2018adaptive},
\item failure to provide theoretical guarantees regarding recursive feasibility, closed-loop stability and constraint satisfaction~\cite{thangavel2018dual,hewing2019cautious,bradford2020stochastic,mckinnon2019learn}, 
\item significant increase in the computational complexity~\cite[Chap.~10.4]{guay2015robust}, \cite{thangavel2018dual,hewing2019cautious,manzano2018robust}, 
\item overly conservative formulation~\cite{adetola2011robust}, \cite[Chap.~10.5]{guay2015robust}, \cite{gonccalves2016robust,manzano2018robust}.
\end{enumerate}
In this paper we provide a novel framework for RAMPC that addresses all these challenges for a class of nonlinear uncertain systems.

\subsection*{Related work}
In general, RAMPC schemes~\cite{tanaskovic2014adaptive,lorenzen2019robust,Lu2019RAMPC,Koehler2019Adaptive,lopez2018adaptive,adetola2011robust,guay2015robust,gonccalves2016robust}
 use  set membership estimation to provide a set of non falsified parameters, which is utilized in a robust tube-based MPC approach to provide robustness w.r.t. uncertain parameters and disturbances.
One of the main differentiating factors among the above mentioned approaches is the parametrization of the parameter set and the construction of the robust tube that confines all uncertain predicted trajectories. 
These design choices highly effect the computational complexity and conservatism of the resulting scheme.
In general, guaranteeing robust recursive feasibility under online adapted models is highly dependent on the interplay of the considered parametrizations and, hence, results to be non trivial in the case of nonlinear uncertain systems. 

Approaches for general linear uncertain systems with polytopic tubes  can be found in~\cite{lorenzen2019robust,Lu2019RAMPC,Koehler2019Adaptive}. 
The special case of finite impulse response (FIR) systems and unknown constant/time-varying offsets are discussed in~\cite{tanaskovic2014adaptive} and \cite{bujarbaruah2019adaptive}, respectively.
%\cite{bujarbaruah2018adaptive}

One key challenge in the design of RAMPC schemes for nonlinear systems is the requirement of a suitable robust MPC approach that is applicable to nonlinear systems with parametric uncertainty, which is currently an active research field~\cite{DynamicTube_Lopez_19,villanueva2017robust,singh2017robust,Robust_TAC_19}. 
Recently, in~\cite{lopez2018adaptive} a RAMPC scheme for the special case of feedback linearizable nonlinear systems has been proposed using boundary layer control~\cite{DynamicTube_Lopez_19}. 
Except for this recent approach for feedback linearizable systems, existing RAMPC schemes for nonlinear systems with theoretical guarantees have been developed exclusively\footnote{%
In~\cite{wang2014adaptive} the approach in~\cite{adetola2011robust} has been extended to time-varying parameters, but only a min-max RAMPC formulation was considered, which is not computationally tractable. 
} by Martin Guay and coauthors~\cite{adetola2011robust,guay2015robust,gonccalves2016robust,dahliwal2014set,adetola2014adaptive}.

In particular, parameter update schemes have been developed that provide guaranteed bounds on the parameter error for a large class of nonlinear systems, including continuous time~\cite{adetola2011robust}, discrete time~\cite{gonccalves2016robust}, time varying~\cite{dahliwal2014set}, and  nonlinearly parametrized systems~\cite{adetola2014adaptive}, compare~\cite{guay2015robust} for a general overview. 
However, all these approaches  consider a simple robust tube approach based on Lipschitz constants similar to~\cite{marruedo2002input,pin2009robust}, which can be prohibitively conservative in many scenarios, see numerical comparisons in~\cite{Robust_TAC_19,kohler2018novel,IncremStochComparison_Mesbah_19}. 

On the other hand, in the last decade several robust MPC approaches for nonlinear systems have been developed, using: 
interval arithmetic~\cite{limon2005robust}; boundary layer control for feedback linearizable systems~\cite{DynamicTube_Lopez_19}; 
$\min$--$\max$ differential inequalities~\cite{villanueva2017robust}; control contraction metrics~\cite{singh2017robust}; and incremental Lyapunov functions~\cite{Robust_TAC_19,bayer2013discrete}. 
Employing more recent robust MPC approaches may alleviate some of the inherent limitations of state of the art RAMPC schemes. 
However, due to difficulties in the analysis, these robust MPC approaches have not yet been employed to design suitable nonlinear RAMPC schemes. As stated in\footnote{%
A possible exception to this statement may be the RAMPC scheme in the thesis~\cite{lopez2018adaptive}, that addresses similar challenges as the proposed approach, but is limited to continuous time feedback linearizable systems. 
}~\cite{lopez2018adaptive}: \textit{A tractable nonlinear AMPC approach that does
not rely on unrealistic assumptions has yet to be developed.}

More recently, machine learning inspired and data driven approaches for model identification/adaptation/refinement have gained a lot of attention, e.g. using Gaussian Processes (GPs)~\cite{hewing2019cautious,bradford2020stochastic}, (local) weighted Bayesian linear regression (wBLR)~\cite{mckinnon2019learn} and kinky inference~\cite{manzano2018robust}, to name a few.
Such approaches use different (potentially less restrictive) a priori assumptions~\cite{hewing2019cautious,bradford2020stochastic,manzano2018robust} and also perform well in some experiments~\cite{hewing2019cautious,bradford2020stochastic,mckinnon2019learn}.
However, currently the corresponding robust MPC literature that can provide theoretical guarantees using such models seems rather immature compared to classical RAMPC approaches, see~\cite{benosman2018model} for a more general discussion.

\subsection*{Contribution}
In this paper, we present a tube-based framework for robust adaptive model predictive control (RAMPC) for a class of uncertain nonlinear systems subject to additive disturbances and parametric uncertainty.
The paper contains the following contributions: (i) present a general theoretical framework for nonlinear RAMPC; (ii) provide a computationally efficient RAMPC framework by extending the robust MPC framework in~\cite{Robust_TAC_19} to allow for recursive model updates; (iii) provide explicit design procedures, which contain the approaches in~ \cite{adetola2011robust,guay2015robust,gonccalves2016robust} as a special case.

First, given the plethora and diversity of existing robust MPC formulations for nonlinear systems~\cite{DynamicTube_Lopez_19,villanueva2017robust,singh2017robust,Robust_TAC_19,marruedo2002input,pin2009robust,kohler2018novel,limon2005robust,bayer2013discrete} and parameter set updates~\cite{tanaskovic2014adaptive,lorenzen2019robust,Lu2019RAMPC,Koehler2019Adaptive,lopez2018adaptive,guay2015robust,dahliwal2014set,adetola2014adaptive}, we consider the general problem of providing conditions on the robust MPC tube propagation and set updates.  
In particular, we provide general conditions regarding overapproximation of the uncertainty, nonincreasingness of the parameter set, and monotonicity properties of the tube propagation, that ensure robust recursive feasibility and robust constraint satisfaction.

Second, as the main contribution of this paper, we provide a framework for computationally efficient nonlinear RAMPC based on the nonlinear robust MPC approach in~\cite{Robust_TAC_19}.
We consider nonlinear systems  linear in uncertain parameters subject to bounded additive disturbances with nonlinear Lipschitz continuous constraints. 
We assume that the nominal nonlinear system is exponentially incrementally stabilizable with some known contraction rate $\rho$ and incremental Lyapunov function $V_{\delta}$. 
Furthermore, we design a nonlinear (state and input dependent) function $\tilde{w}$ that bounds the difference between the nominal and uncertain system.
The tube dynamic is then defined by a scalar $s$ that depends on the contraction rate $\rho$ and the function $\tilde{w}$. This formulation allows us to compute an efficient constraint tightening, while only moderately increasing the computational demand with respect to a nominal MPC scheme.
As the main technical contribution,  given this parametrization using $\rho,\tilde{w},V_{\delta}$, we derive suitable conditions that ensure robust recursive feasibility and robust constraint satisfaction. 
Thus we extend the robust MPC framework in~\cite{Robust_TAC_19} to allow for online parameter updates and reduction in the conservatism.
Furthermore, we show that by designing the objective function based on an additional mean squared point estimate $\hat{\theta}$, we can guarantee  finite gain $\mathcal{L}_2$ stability w.r.t. disturbances, extending the stability results in~\cite{lorenzen2019robust,Koehler2019Adaptive} to nonlinear systems.

Next, given the general conditions for this RAMPC framework, we provide explicit design procedures that satisfy the posed conditions. 
In particular, we propose a moving window parameter set update yielding online shrinking finite complexity hypercubes that contain the true parameters. 
Then, we provide two simple designs regarding the function $\tilde{w}$ that ensure robustness, while allowing for a flexible trade off regarding conservatism and computational complexity. 
Furthermore, we demonstrate that by using a ball  as an "incremental Lyapunov function" $V_{\delta}$, we recover the approach in~\cite{adetola2011robust,guay2015robust,gonccalves2016robust} as a special case. 

Finally, we demonstrate the applicability and advantages of the proposed framework with a nonlinear example.

To summarize, the resulting nonlinear RAMPC framework: 
 (i) reduces conservatism online using set membership updates,
 (ii) improves performance using a mean squared cost, 
(iii) can be significantly less conservative compared to state of the art approaches, i.e. \cite{adetola2011robust,guay2015robust,gonccalves2016robust},  
(iv) provides a flexible trade off regarding conservativism and computational complexity, 
and (v) avoids computationally expensive online optimizations stemming from matrix variables~\cite{hewing2019cautious,villanueva2017robust}, piece-wise definitions~\cite{manzano2018robust}, scenario trees~\cite{thangavel2018dual}, or general $\min$--$\max$ optimization~\cite[Chap.~10.4]{guay2015robust}.
 
The paper is structured as follows: 
Section~\ref{sec:setup} discusses the general problem of RAMPC.  
Section~\ref{sec:main} presents the general theory regarding the proposed framework for nonlinear RAMPC with a corresponding design procedure. 
Section~\ref{sec:num} provides a numerical example to demonstrate the applicability and advantages of the proposed approach. 
Section~\ref{sec:sum} concludes the paper.
In Appendix~\ref{app:term}, an alternative design for the terminal ingredients is presented.  

\subsection*{Notation}
The quadratic norm with respect to a positive definite matrix $Q=Q^\top$ is denoted by $\|x\|_Q^2=x^\top Q x$ and 
the minimal and maximal eigenvalue of $Q$ are denoted by $\lambda_{\min}(Q)$ and $\lambda_{\max}(Q)$, respectively. 
The positive real numbers are $\mathbb{R}_{\geq 0} = \{ r\in\mathbb{R}|r\geq  0\}$.
The vertices of a polytopic set $\Theta$ are denoted by $\theta^i\in\text{Vert}(\Theta)$. 
By $\mathcal{K}_{\infty}$ we denote the class of functions $\alpha:\mathbb{R}_{\geq 0}\rightarrow\mathbb{R}_{\geq 0}$, which are continuous, strictly increasing, unbounded and satisfy $\alpha(0)=0$. 
Denote the unit hypercube by $\mathbb{B}_\infty:=\{\theta|~\|\theta\|_\infty\leq 1\}$ and the unit ball by $\mathbb{B}_2:=\{\theta|~\|\theta\|\leq 1\}$.  
The Minkowski sum and Pontryagin difference for two sets $\mathbb{A},\mathbb{B}\subseteq\mathbb{R}^n$ are denoted by $\mathbb{A}\oplus\mathbb{B}:=\{a+b|~a\in\mathbb{A},b\in\mathbb{B}\}$ and $\mathbb{A}\ominus\mathbb{B}:=\{c|~c+b\in\mathbb{A}\forall b\in\mathbb{B}\}$, respectively.

%!TEX root = ./Adaptive_Nonlin.tex
%%%%%%%%%%%%%%%%%%%%%%%%%%%%%%%%%%%%%%%%%%%%%%%%%%%%%%%%%%%%%%%%%%%%%%%%%%%%%%%
\section{Setup and general theory}
\label{sec:setup}
In this section, we derive the general conditions for tube-based RAMPC in terms of set predictions, before presenting the proposed RAMPC framework in Section~\ref{sec:main}. 
We consider a nonlinear discrete time system of the form
\begin{align}
\label{eq:dyn}
x_{t+1}=f_{w}(x_t,u_t,d_t,\theta^*),
\end{align}
with  state $x_t\in\mathbb{R}^n$, control input $u\in\mathbb{R}^m$, disturbances $d_t\in\mathbb{D}\subset\mathbb{R}^q$, time $t\in\mathbb{N}$, and perturbed dynamics $f_{w}$ with some unknown but constant parameters $\theta^*\in\mathbb{R}^p$. 
We consider point-wise in time state and input constraints
\begin{align}
\label{eq:constraint_condition}
(x_t,u_t)\in\mathcal{Z},\quad t\geq 0,
\end{align}
with some compact nonlinear constraint set
\begin{align}
\label{eq:constraint_definition}
\mathcal{Z}=\{(x,u)\in\mathbb{R}^{n+m}|~h_j(x,u)\leq 0,~j=1,\dots,r\}.
\end{align}
Denote by $\mathcal{Z}_{x}$ the projection
of $\mathcal{Z}$ on $\mathbb{R}^n$. 
The following assumption characterizes the uncertainty in the parameters $\theta$. 
\begin{assumption}
\label{ass:param_set}
At each time step $t\in\mathbb{N}$, a parameter set $\Theta_t$ is computed satisfying $\Theta_{t+1}\subseteq\Theta_t\subseteq\Theta_0$ and $\theta^*\in\Theta_t$. 
\end{assumption}

For simplicity, the following presentation does not consider any stabilizing feedback $\kappa$, which is typically used in robust tube MPC,  since it complicates the exposition. 
The following results can, however, easily be adapted to this case. 
We consider some map $\Phi$ to predict sets $\mathbb{X}_{k|t}$, that characterizes the tube.
\begin{assumption}
\label{ass:set_propagation}
There exists a map $\Phi:2^{\mathbb{R}^n}\times\mathbb{R}^m\times2^{\mathbb{R}^q}\rightarrow2^{\mathbb{R}^n}$, such that for any $(\mathbb{X},u)\subseteq\mathcal{Z}$, $\Theta\subseteq\Theta_0$, we have
\begin{align}
\label{eq:set_inclusion}
f_w(x,u,d,\theta)\in\Phi(\mathbb{X},u,\Theta),
\end{align}
for any $x\in\mathbb{X}$, $d\in\mathbb{D}$ and $\theta\in\Theta$. 
Furthermore, the function $\Phi$ satisfies the following monotonicity property
\begin{align}
\label{eq:set_monotonicity}
\Phi(\mathbb{X}',u,\Theta')\subseteq\Phi(\mathbb{X},u,\Theta),
\end{align}
for any $(\mathbb{X}',u)\subseteq(\mathbb{X},u)\subseteq\mathcal{Z}$, and any $\Theta'\subseteq\Theta\subseteq\Theta_0$.
\end{assumption}
Given these conditions, a general tube based robust adaptive MPC scheme can be formulated with the following optimization problem using the measured state $x_t$, the parameter set $\Theta_t$ and some cost function $J_N$:
\begin{subequations}
\label{eq:set_MPC}
\begin{align}
&\min_{u_{\cdot|t},\mathbb{X}_{\cdot|t}}J_N(\mathbb{X}_{\cdot|t},u_{\cdot|t})\\
\text{s.t. }&x_t\in\mathbb{X}_{0|t},\\
\label{eq:set_MPC_inclusion}
& \mathbb{X}_{k+1|t}\supseteq\Phi(\mathbb{X}_{k|t},u_{k|t},\Theta_t),\\
\label{eq:set_MPC_tightened_constraints}
&(\mathbb{X}_{k|t},u_{k|t})\subseteq\mathcal{Z},\\
&k=0,\dots,N-1,\\
\label{eq:set_MPC_term}
&\Phi(\mathbb{X}_{N|t},u_{N|t},\Theta_t)\subseteq\mathbb{X}_{N|t},~(\mathbb{X}_{N|t},u_{N|t})\subseteq\mathcal{Z}.
\end{align}
\end{subequations}
The minimizers are denoted by $u_{\cdot|t}^*$, $\mathbb{X}_{\cdot|t}^*$. 
The corresponding closed loop input is given by $u_t=u_{0|t}^*$. 
 
The following theorem shows that this general set approach directly ensures robust recursive feasibility and robust constraint satisfaction.
\begin{theorem}
\label{thm:set_robust}
Let Assumptions \ref{ass:param_set} and \ref{ass:set_propagation} hold and suppose that Problem~\eqref{eq:set_MPC} is feasible at $t=0$.  
Then Problem~\eqref{eq:set_MPC} is feasible for all $t\in\mathbb{N}$ and the constraints~\eqref{eq:constraint_condition}  are satisfied for the resulting closed loop system. 
\end{theorem}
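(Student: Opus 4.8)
The plan is to prove this by the standard recursive feasibility argument for robust tube MPC, constructing a feasible candidate solution at time $t+1$ from the optimal solution at time $t$. First I would establish the base case: feasibility at $t=0$ is assumed, and I would verify that feasibility of Problem~\eqref{eq:set_MPC} at time $t$ implies constraint satisfaction at time $t$. Indeed, since $x_t \in \mathbb{X}_{0|t}^*$ and $(\mathbb{X}_{0|t}^*, u_{0|t}^*) \subseteq \mathcal{Z}$ by~\eqref{eq:set_MPC_tightened_constraints}, we get $(x_t, u_t) = (x_t, u_{0|t}^*) \in \mathcal{Z}$, which is~\eqref{eq:constraint_condition}. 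So the whole theorem reduces to proving recursive feasibility by induction on $t$.

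For the induction step, suppose Problem~\eqref{eq:set_MPC} is feasible at time $t$ with minimizers $u_{\cdot|t}^*$, $\mathbb{X}_{\cdot|t}^*$. The closed-loop state evolves as $x_{t+1} = f_w(x_t, u_t, d_t, \theta^*)$ for some $d_t \in \mathbb{D}$. I would propose the shifted candidate
\begin{align*}
u_{k|t+1} &= u_{k+1|t}^*, \quad k = 0, \dots, N-1,\\
\mathbb{X}_{k|t+1} &= \mathbb{X}_{k+1|t}^*, \quad k = 0, \dots, N-1,\\
u_{N|t+1} &= u_{N|t}^*, \quad \mathbb{X}_{N|t+1} = \mathbb{X}_{N|t}^*,
\end{align*}
i.e. drop the first stage, shift the rest, and re-use the terminal set/input for the last stage (this is where the terminal condition~\eqref{eq:set_MPC_term} pays off). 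I would then check all constraints of~\eqref{eq:set_MPC} at time $t+1$ for this candidate.

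The verification proceeds constraint by constraint. The initial condition $x_{t+1} \in \mathbb{X}_{0|t+1} = \mathbb{X}_{1|t}^*$ follows because $x_t \in \mathbb{X}_{0|t}^*$, $\theta^* \in \Theta_t$ by Assumption~\ref{ass:param_set}, and the set-inclusion property~\eqref{eq:set_inclusion} of Assumption~\ref{ass:set_propagation} gives $x_{t+1} = f_w(x_t, u_{0|t}^*, d_t, \theta^*) \in \Phi(\mathbb{X}_{0|t}^*, u_{0|t}^*, \Theta_t) \subseteq \mathbb{X}_{1|t}^*$, the last inclusion being~\eqref{eq:set_MPC_inclusion} at time $t$. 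The tightened state-input constraints~\eqref{eq:set_MPC_tightened_constraints} and the terminal constraint~\eqref{eq:set_MPC_term} for the shifted stages $k = 0, \dots, N-1$ hold directly because they held for the corresponding stages $k+1$ at time $t$; the last-stage terminal constraint holds by re-using $\mathbb{X}_{N|t}^*, u_{N|t}^*$, which satisfied~\eqref{eq:set_MPC_term} at time $t$. The only remaining point is the tube-propagation inequality~\eqref{eq:set_MPC_inclusion} at time $t+1$. For $k = 0, \dots, N-2$ this is inherited from time $t$. For the junction at $k = N-1$, I need $\mathbb{X}_{N|t+1} \supseteq \Phi(\mathbb{X}_{N-1|t+1}, u_{N-1|t+1}, \Theta_{t+1})$, i.e. $\mathbb{X}_{N|t}^* \supseteq \Phi(\mathbb{X}_{N|t}^*, u_{N|t}^*, \Theta_{t+1})$; this is exactly where I invoke the monotonicity property~\eqref{eq:set_monotonicity} together with $\Theta_{t+1} \subseteq \Theta_t$ (Assumption~\ref{ass:param_set}): $\Phi(\mathbb{X}_{N|t}^*, u_{N|t}^*, \Theta_{t+1}) \subseteq \Phi(\mathbb{X}_{N|t}^*, u_{N|t}^*, \Theta_t) \subseteq \mathbb{X}_{N|t}^*$, the last step being the terminal invariance~\eqref{eq:set_MPC_term} from time $t$. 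For $k = N-1$ at time $t+1$ in the non-terminal constraint~\eqref{eq:set_MPC_inclusion}, similarly $\Phi(\mathbb{X}_{N-1|t+1}, u_{N-1|t+1}, \Theta_{t+1}) = \Phi(\mathbb{X}_{N|t}^*, u_{N|t}^*, \Theta_{t+1}) \subseteq \mathbb{X}_{N|t}^* = \mathbb{X}_{N|t+1}$ by the same monotonicity argument.

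I expect the main subtlety — rather than a real obstacle — to be correctly handling the interplay between the shrinking parameter set ($\Theta_{t+1} \subseteq \Theta_t$) and the monotonicity of $\Phi$ at the terminal junction: one must be careful that the candidate at time $t+1$ uses $\Theta_{t+1}$ in the constraints, so the inclusions proved at time $t$ (which used $\Theta_t$) must be transported via~\eqref{eq:set_monotonicity}; this is the whole reason Assumption~\ref{ass:set_propagation} demands monotonicity in $\Theta$ and Assumption~\ref{ass:param_set} demands nesting of the parameter sets. Once the candidate is shown feasible at $t+1$, the induction closes, and combined with the base-case observation that feasibility at each $t$ yields $(x_t,u_t)\in\mathcal{Z}$, the theorem follows.
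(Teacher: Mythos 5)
Your proposal is correct and follows essentially the same argument as the paper's proof: shift the optimal inputs and sets by one step, append the terminal stage, and close the loop using the overapproximation property~\eqref{eq:set_inclusion}, the monotonicity of $\Phi$~\eqref{eq:set_monotonicity} together with $\Theta_{t+1}\subseteq\Theta_t$ (Assumption~\ref{ass:param_set}), and the terminal constraint~\eqref{eq:set_MPC_term}. One small sharpening: the transport via~\eqref{eq:set_monotonicity} is needed not only at the terminal junction but at every stage $k=0,\dots,N-2$ of~\eqref{eq:set_MPC_inclusion} as well, since the time-$(t+1)$ constraints involve $\Theta_{t+1}$ while the time-$t$ inclusions were established with $\Theta_t$ --- this is exactly the argument you already spell out at $k=N-1$, so it is not a direct ``inheritance'' but the same one-line monotonicity step applied at each stage.
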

\begin{proof}
The proof is similar to standard results in robust MPC, extended to the adaptive setting using a suitable monotonicity property in Assumption~\ref{ass:set_propagation}, compare e.g. \cite{Soloperto2019Collision} for similar arguments. 
In particular, consider any set $\mathbb{X}_{0|t+1}$ satisfying $\mathbb{X}^*_{1|t}\supseteq \mathbb{X}_{0|t+1} \ni x_{t+1}$, for example $\mathbb{X}_{0|t+1}=\mathbb{X}^*_{1|t}$, which satisfies $x_{t+1}\in\mathbb{X}_{0|t+1}$ due to the overapproximation property~\eqref{eq:set_inclusion}. 
Then  the candidate solution $u_{k|t+1}=u^*_{k+1|t}$, $u_{N|t+1}=u^*_{N|t}$, is a feasible solution with\footnote{%
In approaches directly utilizing sets, e.g.~\cite{lorenzen2019robust,Lu2019RAMPC,bayer2013discrete}, the candidate solution $\mathbb{X}_{k|t+1}=\mathbb{X}^*_{k+1|t}$ is standard. 
However, most tube-based approaches, e.g.~\cite{Koehler2019Adaptive,gonccalves2016robust,Robust_TAC_19,marruedo2002input,pin2009robust,limon2005robust}, including the proposed approach, consider a  candidate solution that does not necessarily satisfy $\mathbb{X}_{k|t+1}=\mathbb{X}^*_{k+1|t}$, but $\mathbb{X}_{k|t+1}\subseteq\mathbb{X}^*_{k+1|t}$. 
} $\mathbb{X}_{k|t+1}=\mathbb{X}^*_{k+1|t}$, $\mathbb{X}_{N|t+1}=\mathbb{X}_{N-1|t+1}=\mathbb{X}^*_{N|t}$, 
 due to the monotonicity of the operator $\Phi$~\eqref{eq:set_monotonicity}, the non-expansiveness of the parameter set $\Theta_t$ (Ass.~\ref{ass:param_set}) and the terminal constraint~\eqref{eq:set_MPC_term}. 
\end{proof}
The presented formulation in~\eqref{eq:set_MPC} and the corresponding theoretical properties in Theorem~\ref{thm:set_robust} are quite intuitive. 
However, without any tractable formulation for the sets $\mathbb{X}$, $\Theta$ and the propagation $\Phi$ this formulation cannot be used in practice. 
In this sense, the general formulation~\eqref{eq:set_MPC} and Theorem~\ref{thm:set_robust} are similar to the $\min$--$\max$ formulation in~\cite[Chap.~10.4]{guay2015robust}, as they provide general theoretical results but are not directly amenable to practical implementation. 

\subsubsection*{Existing tube formulations}
In the following, we briefly elaborate on different parametrizations for $\mathbb{X}$ and $\Theta$ which have been considered in the robust and robust adaptive MPC literature.

For linear systems, typically  polytopic sets $\Theta$ are considered and the tube is parametrized by a polytope $\mathbb{X}_t=\{x|~H x\leq \alpha_t\}$ with $H$ fixed offline. 
 The  inclusion~\eqref{eq:set_MPC_inclusion} can then be implemented using linear inequality constraints, compare e.g.~\cite{lorenzen2019robust,Lu2019RAMPC,Koehler2019Adaptive} and  \cite[Chap.~5]{kouvaritakis2016model}. 
In particular, the approach in~\cite{lorenzen2019robust} considers general polytopic parameter sets $\Theta$ and a homothetic tube approach, which directly formulates~\eqref{eq:set_MPC_inclusion} as equivalent linear inequality constraints, using additional dual variables $\Lambda$. 
In~\cite{Lu2019RAMPC},  zonotope parameter sets $\Theta_t=\{\Pi_\theta \theta\leq \pi_{t}\}$ with $\Pi_{\theta}$ fixed, are considered and the set inclusion~\eqref{eq:set_MPC_inclusion}  is implemented using $\alpha$ as optimization variables. 
In~\cite{Koehler2019Adaptive}, only scalars $s,\eta$ are used in parametrizing a hypercube parameter set $ \Theta_t=\overline{\theta}_t\oplus\eta_t \mathbb{B}_\infty$ and the polytopic tube $\mathbb{X}_t=\overline{x}_t\oplus s_t\cdot \mathcal{P}$. 
A more detailed  comparison regarding computational complexity and conservatism of these linear RAMPC approaches can be found in~\cite{Koehler2019Adaptive}.

In the papers~\cite{adetola2011robust,guay2015robust,gonccalves2016robust} by Martin Guay and coauthors, the parameter set $\Theta$ and the tube $\mathbb{X}$ are given by a scaled ball, i.e.,  $\Theta_t=\overline{\theta}_t\oplus\eta_t\mathbb{B}_2$ and $\mathbb{X}_t=\overline{x}_t\oplus s_t\mathbb{B}_2$. 
This simple scalar parametrization is crucial  in providing a tractable formulation, which allows the implementation of the set inclusion~\eqref{eq:set_MPC_inclusion} using scalar nonlinear dynamics for $s$. 
However, this can also yield very conservative bounds on the tube size $s$ along the prediction horizon, compare the numerical example in Section~\ref{sec:num}. 

In~\cite{lopez2018adaptive}, a hyper box $\Theta$ and a box shaped tube $\mathbb{X}$ are considered resulting from a boundary layer controller, while the boundary layer thickness $s$ is predicted using nonlinear dynamics for $s$, compare~\cite{DynamicTube_Lopez_19}.

Regarding general robust MPC schemes for nonlinear systems (without parameter adaptation and often without parametric uncertainty): 
In~\cite{limon2005robust} an interval arithmetic approach is considered, which improves Lipschitz based approaches~\cite{marruedo2002input,pin2009robust} by using a more flexible hyperbox tube $\mathbb{X}$. 
However, similar to the Lipschitz based approach, unless the considered tube parametrization $\mathbb{X}$ contains a robust positive invariant (RPI) set, the tube is growing unbounded along the prediction horizon and thus only short horizons and/or small uncertainty can be considered. 
In~\cite{singh2017robust,bayer2013discrete} additive disturbances are considered and a fixed RPI set $\mathbb{X}$  is  computed offline  as an incremental Lyapunov function or using control contraction metrics. 

In~\cite{villanueva2017robust}, the tube is parametrized with online optimized matrices $P_t\in\mathbb{R}^{n\times n}$, i.e., $\mathbb{X}_t=\{x|~\|\overline{x}_t-x\|_{P_t}^2\leq 1\}$, where \eqref{eq:set_MPC_inclusion} is ensured using $\min$--$\max$ differential inequalities.

In~\cite{Robust_TAC_19}, a scalar $s$ is used to parametrize the tube $\mathbb{X}$ with a given incremental Lyapunov function $V_{\delta}$, i.e., $\mathbb{X}_t=\{x|~V_{\delta}(x,\overline{x}_t)\leq s_t\}$, while the tube propagation~\eqref{eq:set_MPC_inclusion} is formulated as nonlinear dynamics for $s$, similar to~\cite{DynamicTube_Lopez_19,pin2009robust}.   
Thus, this approach shares the simple scalar characterization of the tube used in~\cite{adetola2011robust,guay2015robust,gonccalves2016robust}. %, but does not 
In the next section, we concretize the rather abstract conditions and assumptions for the specific parametrization of  $\mathbb{X}$ and $\Phi$ based on the robust MPC approach in~\cite{Robust_TAC_19}.

%!TEX root = ./Adaptive_Nonlin.tex
%%%%%%%%%%%%%%%%%%%%%%%%%%%%%%%%%%%%%%%%%%%%%%%%%%%%%%%%%%%%%%%%%%%%%%%%%%%%%%%
\section{Proposed framework - theoretical analysis} 
\label{sec:main}
The results in this section are the main contribution of this paper. 
In the following, we derive the proposed framework for nonlinear RAMPC using the nonlinear robust MPC framework in~\cite{Robust_TAC_19}.
In Section~\ref{sec:main_2} the general conditions and assumptions are introduced.
The RAMPC optimization problem is presented in Section~\ref{sec:main_1}. 
Constraint satisfaction and robust recursive feasibility are established in Theorem~\ref{thm:main} in Section~\ref{sec:main_3}. 
Theorem~\ref{thm:stability} in Section~\ref{sec:main_4} shows finite gain $\mathcal{L}_2$ stability using a least mean square (LMS) point estimate and a suitable stage cost $\ell:\mathcal{Z}\rightarrow\mathbb{R}$.
Section~\ref{sec:design_1} provides explicit design procedures and the overall algorithm.
In Section~\ref{sec:special_guay} we demonstrate that state of the art approaches~\cite{adetola2011robust,guay2015robust,gonccalves2016robust} are contained as a special case of the proposed formulation. 
Section~\ref{sec:main_6} discusses some extensions and open issues.
%define conditions
%!TEX root = ./Adaptive_Nonlin.tex
%%%%%%%%%%%%%%%%%%%%%%%%%%%%%%%%%%%%%%%%%%%%%%%%%%%%%%%%%%%%%%%%%%%%%%%%%%%%%%%
\subsection{Assumptions}
\label{sec:main_2}
In the following, we introduce assumptions regarding the nonlinear system $f_w$, including model structure (Ass.~\ref{ass:model}), nominal parameter updates (Ass.~\ref{ass:nominal}) , stabilizability (Ass.~\ref{ass:increm}), designed functions $\tilde{w}$ (Ass.~\ref{ass:w_tilde}) and terminal ingredients (Ass.~\ref{ass:term}). 

The following standing assumption characterizes the considered class of nonlinear systems.
\begin{assumption}
\label{ass:model}
There exist (locally) Lipschitz continuous functions $g_{i}:\mathcal{Z}\rightarrow\mathbb{R}^n$, $i=1,\dots,p$ and a matrix $E\in\mathbb{R}^{n\times q}$, such that the nonlinear system~\eqref{eq:dyn} is given by
\begin{align}
\label{eq:model_affine}
f_w(x,u,d,\theta)=&f(x,u)+G(x,u)\theta+Ed, \\
G(x,u):=&[g_1(x,u),\dots , g_p(x,u)].\nonumber
\end{align}
There exists a known set $\mathbb{D}\subset\mathbb{R}^q$ with $0\in\mathbb{D}$, such that the additive disturbance satisfy $d_t\in\mathbb{D}$ for all $t\geq 0$. \\
 The functions $h_j$ in~\eqref{eq:constraint_definition} are (locally) Lipschitz continuous. 
\end{assumption}	
The main restriction in the posed conditions is that the parameters $\theta$ enter affinely and the disturbances are only additive. 
Except for the paper~\cite{adetola2014adaptive} which explicitly handles the intricate case of nonlinearly parametrized systems, most existing RAMPC schemes for linear~\cite{tanaskovic2014adaptive,lorenzen2019robust,Lu2019RAMPC,Koehler2019Adaptive,bujarbaruah2019adaptive} and nonlinear systems~\cite{lopez2018adaptive,adetola2011robust,guay2015robust,gonccalves2016robust,dahliwal2014set} also consider a linear parametrization in $\theta$ and additive disturbances $d_t$.
In Section~\ref{sec:main_6} we discuss how to relax the Lipschitz continuity  of $h_j$ and extend the proposed approach to time varying parameters $\theta_t^*$. 

In the following, we consider a nominal prediction model $f_{\overline{\theta}_t}(x,u):=f(x,u)+G(x,u)\overline{\theta}_t$ with online determined parameters $\overline{\theta}_t$ satisfying the following assumption. 
\begin{assumption}
\label{ass:nominal}
At each time $t$, we compute a point estimate $\overline{\theta}_t$ with a corresponding uncertainty set $\widetilde{\Theta}_t$ that satisfy
\begin{align}
\label{eq:nominal_param}
\theta^*\in\overline{\theta}_{t+1}\oplus\widetilde{\Theta}_{t+1}\subseteq \overline{\theta}_t\oplus\widetilde{\Theta}_t,~\forall t\geq 0.
\end{align}
with some initial known prior parameter set $\overline{\theta}_0\oplus\widetilde{\Theta}_0$. 
\end{assumption}
In essence, this assumption is equivalent to Assumption~\ref{ass:param_set}, formulated in terms of a nominal point $\overline{\theta}$ and an uncertain set $\widetilde{\Theta}$. 
Correspondingly, the change in parameters satisfies $\Delta\overline{\theta}_t:=\overline{\theta}_{t+1}-\overline{\theta}_t\in\Delta\widetilde{\Theta}_t:=\widetilde{\Theta}_t\ominus\widetilde{\Theta}_{t+1}$. 
Set membership updates for $\widetilde{\Theta}_t,\overline{\theta}_t$ satisfying Assumption~\ref{ass:nominal} will be introduced in Sec.~\ref{sec:design_1}, Alg.~\ref{alg:HC}. 
The prediction mismatch satisfies 
\begin{align}
\label{eq:model_mismatch}
&x_{t+1}-f_{\overline{\theta}_t}(x_t,u_t)\in\mathbb{W}_{\widetilde{\Theta}_t,\mathbb{D}}(x_t,u_t), \\
&\mathbb{W}_{\widetilde{\Theta},\mathbb{D}}(x,u):=\{d_w\in\mathbb{R}^n|~d_w=d+G(x,u)\tilde{\theta}, ~d\in\mathbb{D},~\tilde{\theta}\in\widetilde{\Theta}\}.\nonumber
\end{align}

In order to design a suitable tube $\mathbb{X}$,  we assume that the system is locally incrementally
stabilizable, similar to~\cite[Ass.~2]{Robust_TAC_19}.
\begin{assumption}
\label{ass:increm}
There exists a continuous incremental Lyapunov function $V_{\delta}:\mathbb{R}^n\times\mathbb{R}^n\rightarrow\mathbb{R}_{\geq 0}$ satisfying 
\begin{subequations}
\label{eq:increm}
\begin{align}
\label{eq:increm_a}
c_{\delta,l}\|x-z\|\leq V_{\delta}(x,z)\leq& c_{\delta,u}\|x-z\|,
\end{align}
for all $x,z\in\mathbb{R}^n$ with constants $c_{\delta,l}$, $c_{\delta,u}>0$.
Furthermore, there exist a control law $\kappa:\mathcal{Z}_x\times\mathcal{Z}\rightarrow\mathbb{R}^m$, and constants $\delta_{loc}$, $\kappa_{\max}>0$, such that the following properties hold for all $(z,v)\in\mathcal{Z}$, $(x,\kappa(x,z,v))\in\mathcal{Z}$, $V_{\delta}(x,z)\leq \delta_{loc}$, and all $\overline{\theta}\in\overline{\theta}_0\oplus\widetilde{\Theta}_0$:
\begin{align}
\label{eq:increm_b}
\|\kappa(x,z,v)-v\|\leq &\kappa_{\max}V_{\delta}(x,z),\\
\label{eq:increm_c}
V_{\delta}(x^+,z^+)\leq &\rho_{\overline{\theta}}V_{\delta}(x,z),
\end{align}
with $x^+=f_{\overline{\theta}}(x,\kappa(x,z,v))$, $z^+=f_{\overline{\theta}}(z,v)$ and some $\rho_{\overline{\theta}}>0$. 
Furthermore, the following norm-like condition holds for any $x_1,x_2,\Delta x\in\mathbb{R}^n$: 
\begin{align}
\label{eq:increm_d}
&V_{\delta}(x_1+\Delta x,x_2)\leq V_{\delta}(x_1,x_2)+V_{\delta}(x_2+\Delta x,x_2).
\end{align} 
In addition, there exists a constant $L_{\delta}\geq 0$, such that the following continuity bound holds for any $x_1,x_2,\Delta x\in\mathbb{R}^n$:
\begin{align}
\label{eq:increm_e}
&V_{\delta}(x_1,x_2+\Delta x)
\leq (1+L_{\delta}\|\Delta x\|)V_{\delta}(x_1-\Delta x,x_2).
\end{align} 
\end{subequations}
\end{assumption}	
In case $\rho_{\overline{\theta}}<1$, conditions~\eqref{eq:increm_a}--\eqref{eq:increm_c} imply that $V_{\delta}$ is an incremental exponential Lyapunov function with some Lipschitz continuous feedback $\kappa$.  
A detailed discussion how existing tube parametrizations are related to Assumption~\ref{ass:increm} can be found in~\cite[Remark~1]{Robust_TAC_19}. 
The norm-like inequality~\eqref{eq:increm_d} and Lipschitz like condition~\eqref{eq:increm_e} are, for example, satisfied by polytopes $V_{\delta}(x,z)=\max_i P_i(x-z)$, ellipsoids $V_{\delta}(x,z)=\|x-z\|_P$ and functions of the form\footnote{%
A proof of inequality~\eqref{eq:increm_e} for $P(z)$ Lipschitz continuous can be found in~\cite[Prop.~15]{JK_periodic_automatica}. 
Examples of such incremental Lyapunov functions can be found in~\cite{Soloperto2019Collision,JK_periodic_automatica,JK_QINF} using a quasi-LPV parametrization, compare~\cite{JK_QINF} for a corresponding LMI design procedure. 
} $V_{\delta}(x,z)=\|x-z\|_{P(z)}$.

In the following, we denote the set 
$\Psi:=\{(x,z,v)\in\mathbb{R}^n\times\mathcal{Z}|~(x,\kappa(x,z,v))\in\mathcal{Z},~V_{\delta}(x,z)\leq \delta_{loc}\}$. 
For each constraint~\eqref{eq:constraint_definition}, we compute constants $c_j\geq 0$, $j=1,\dots,r$ satisfying
\begin{align}
\label{eq:c_j}
h_j(x,\kappa(x,z,v))-h_j(z,v)\leq c_j V_{\delta}(x,z),
\end{align}
for all $(x,z,v)\in\Psi$, which will later be used in the design.
Existence of finite constants $c_j$ satisfying~\eqref{eq:c_j} follows from $h_j$ Lipschitz continuous and the bound on $V_{\delta},\kappa$ in~\eqref{eq:increm_a}--\eqref{eq:increm_b}. 

The smallest contraction rate $\rho_{\theta}$ for a given value of $\theta\in\Theta_0$ satisfying~\eqref{eq:increm_c} is given by
\begin{align}
\label{eq:rho_theta}
\rho_{\theta}:=\max_{(x,z,v)\in\Psi}\dfrac{V_{\delta}(f_{\theta}(x,\kappa(x,z,v)),f_{\theta}(z,v))}{{V_{\delta}(x,z)}}.
\end{align}
The following proposition provides a bound on the change of $\rho_{\theta}$ under changing parameters. 
\begin{proposition}
%Prop.~4.4
\label{prop:properties_2}
Let Assumptions~\ref{ass:model} and \ref{ass:increm} hold.  
For any $\theta$, $\Delta\Theta$ there exists a Lipschitz constant $L_{\rho,\theta,\Delta\Theta}\geq 0$ according to~\eqref{eq:L_rho_def}, such that for any  ${\theta}^+\in{\theta}\oplus\Delta{\Theta}\subseteq\Theta_0$,  the contraction rate $\rho_{\theta}$ satisfies 
\begin{align}
\label{eq:rho_Lipschitz}
\rho_{{\theta}^+} \leq \rho_{{\theta}}+L_{\rho,\theta,\Delta{\Theta}}.
\end{align}
\end{proposition}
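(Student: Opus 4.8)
The plan is to bound the difference $\rho_{\theta^+} - \rho_\theta$ by exploiting the structure of the maximization in~\eqref{eq:rho_theta} together with the affine dependence of $f_\theta$ on $\theta$ from Assumption~\ref{ass:model} and the continuity property~\eqref{eq:increm_e} of $V_\delta$. First I would fix an arbitrary $\theta^+ \in \theta \oplus \Delta\Theta \subseteq \Theta_0$ and let $(x,z,v) \in \Psi$ be a maximizer (or near-maximizer) attaining $\rho_{\theta^+}$. Writing $\Delta\theta := \theta^+ - \theta \in \Delta\Theta$, the key observation is that
\begin{align}
f_{\theta^+}(x,\kappa(x,z,v)) &= f_\theta(x,\kappa(x,z,v)) + G(x,\kappa(x,z,v))\Delta\theta, \nonumber\\
f_{\theta^+}(z,v) &= f_\theta(z,v) + G(z,v)\Delta\theta, \nonumber
\end{align}
so that passing from $\theta$ to $\theta^+$ perturbs both arguments of $V_\delta$ by controlled amounts. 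I would then apply~\eqref{eq:increm_d} and~\eqref{eq:increm_e} to peel off these perturbations: writing $a^+ = f_{\theta^+}(x,\kappa)$, $b^+ = f_{\theta^+}(z,v)$, $a = f_\theta(x,\kappa)$, $b = f_\theta(z,v)$, one bounds $V_\delta(a^+,b^+)$ in terms of $V_\delta(a,b)$ plus error terms proportional to $\|G(x,\kappa)\Delta\theta\|$ and $\|G(z,v)\Delta\theta\|$, using that $G$ is locally Lipschitz (hence bounded on the compact set underlying $\Psi$).

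Concretely, the chain would look like: use~\eqref{eq:increm_e} with shift $\Delta x = (b^+ - b) = G(z,v)\Delta\theta$ to get $V_\delta(a^+,b^+) \leq (1 + L_\delta \|G(z,v)\Delta\theta\|)\, V_\delta(a^+ - (b^+-b),\, b)$, then use~\eqref{eq:increm_d} to split $V_\delta(a^+ - (b^+-b), b) \leq V_\delta(a,b) + V_\delta(b + [(a^+-a) - (b^+-b)], b)$, and finally bound the last term via~\eqref{eq:increm_a} by $c_{\delta,u}\|(a^+-a)-(b^+-b)\| \le c_{\delta,u}(\|G(x,\kappa)\| + \|G(z,v)\|)\|\Delta\theta\|$. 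Dividing through by $V_\delta(x,z)$ and using $V_\delta(a,b)/V_\delta(x,z) \le \rho_\theta$ for this same triple (by definition of $\rho_\theta$ as the max), the ratio defining $\rho_{\theta^+}$ is bounded by $\rho_\theta$ plus a term that is a product of $\|\Delta\theta\|$ with constants coming from $L_\delta$, $c_{\delta,u}$, $\sup_{\Psi}\|G\|$, and — in the multiplicative factor — the value $\rho_\theta V_\delta(x,z)$ divided by $V_\delta(x,z)$, i.e. just $\rho_\theta$. Taking the supremum over all $\theta \in \Theta_0$, all $\theta^+ \in \theta \oplus \Delta\Theta$, and all triples $(x,z,v) \in \Psi$ of the coefficient multiplying $\|\Delta\theta\|$, and multiplying by $\max_{\Delta\theta \in \Delta\Theta}\|\Delta\theta\|$, yields the explicit constant $L_{\rho,\theta,\Delta\Theta}$ referenced as~\eqref{eq:L_rho_def}; this gives~\eqref{eq:rho_Lipschitz}.

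The main obstacle I anticipate is handling the denominator $V_\delta(x,z)$ cleanly when it is small: the bound should be stated so that the additive error term does not blow up as $V_\delta(x,z) \to 0$. The trick is that all the perturbation terms I introduced are themselves proportional to $\|\Delta\theta\|$ but \emph{not} automatically proportional to $V_\delta(x,z)$, so naively dividing by $V_\delta(x,z)$ looks dangerous. This is resolved by observing that for $(x,z,v) \in \Psi$ we have $V_\delta(x,z) \le \delta_{loc}$ is bounded \emph{above}, not below — so instead one should keep the additive terms as absolute quantities and note that the relevant ratio $\frac{V_\delta(a^+,b^+)}{V_\delta(x,z)}$ is only ever evaluated where it could exceed $\rho_\theta$, and in that regime the extra terms contribute at most a fixed constant; alternatively, and more robustly, one argues that $\|G(x,\kappa)\Delta\theta\| \le L_G \|x - z\| \|\Delta\theta\| + \|G(z,\kappa(z,z,v))\Delta\theta\|$-type estimates tie the perturbation magnitude back to $\|x-z\|$, hence to $V_\delta(x,z)$ via~\eqref{eq:increm_a}, making the division harmless. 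Sorting out which of these routes gives the cleanest constant $L_{\rho,\theta,\Delta\Theta}$ — and confirming it is finite using compactness of the closure of $\Psi$ and local Lipschitz continuity of $f$, $G$, $\kappa$ — is where the real care is needed; the rest is the routine application of~\eqref{eq:increm_a}, \eqref{eq:increm_d}, \eqref{eq:increm_e}.
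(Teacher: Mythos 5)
Your proposal follows essentially the same route as the paper's proof: the chain \eqref{eq:increm_e} $\to$ \eqref{eq:increm_d} $\to$ \eqref{eq:increm_a} applied to the affinely perturbed dynamics is exactly how the paper derives its intermediate continuity bound \eqref{eq:increm_f}, and the constant \eqref{eq:L_rho_def} is then obtained, as you describe, with $\epsilon_{\Delta\Theta}=\max_{\tilde{\theta}\in\Delta\Theta}\|\tilde{\theta}\|$ and suprema over $\Psi$ and $\mathcal{Z}$. The one place where you hedge is the crucial one, and only your second route works. The additive error is not $c_{\delta,u}\bigl(\|G(x,\kappa(x,z,v))\|+\|G(z,v)\|\bigr)\|\Delta\theta\|$ but $c_{\delta,u}\|(G(x,\kappa(x,z,v))-G(z,v))\Delta\theta\|$: in $(a^+-a)-(b^+-b)$ the $G(z,v)\Delta\theta$ contributions cancel, so the quantity to bound is the \emph{difference} of $G$ at the two points. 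Since $G$ is Lipschitz, $\|\kappa(x,z,v)-v\|\leq\kappa_{\max}V_{\delta}(x,z)$ by \eqref{eq:increm_b} and $\|x-z\|\leq V_{\delta}(x,z)/c_{\delta,l}$ by \eqref{eq:increm_a}, this difference is bounded by $L_{G,\kappa}\|\Delta\theta\|\,V_{\delta}(x,z)$, which is precisely the paper's constant $L_{G,\kappa}$ and makes the division by $V_{\delta}(x,z)$ harmless. Your first suggested resolution (arguing the ratio is ``only evaluated where it could exceed $\rho_{\theta}$'') does not work: $\Psi$ contains triples with $V_{\delta}(x,z)$ arbitrarily small, so an additive error that is $O(\|\Delta\theta\|)$ but not $O(V_{\delta}(x,z))$ would give an unbounded constant after division. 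With the corrected error term, bounding the multiplicative factor by $1+L_{\delta}\overline{G}\|\Delta\theta\|$ with $\overline{G}=\max_{(z,v)\in\mathcal{Z}}\|G(z,v)\|$ and collecting the cross term yields exactly the linear and quadratic terms in $\epsilon_{\Delta\Theta}$ appearing in \eqref{eq:L_rho_def}.
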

\begin{proof}
First, note that for any $x_1,x_2,\Delta x_1,\Delta x_2\in\mathbb{R}^n$  the following continuity condition holds
\begin{align}
\label{eq:increm_f}
&V_{\delta}(x_1+\Delta x_1,x_2+\Delta x_2)\\
 \stackrel{\eqref{eq:increm_e}}{\leq}& (1+L_\delta\|\Delta x_2\|)V_{\delta}(x_1+\Delta x_1-\Delta x_2,x_2)\nonumber\\
\stackrel{\eqref{eq:increm_d}}{\leq}&(1+L_{\delta}\|\Delta x_2\|)(V_{\delta}(x_1,x_2)+V_{\delta}(x_2+\Delta x_1-\Delta x_2,x_2))\nonumber\\
\stackrel{\eqref{eq:increm_a}}{\leq}&(1+L_{\delta}\|\Delta x_2\|)(V_{\delta}(x_1,x_2)+c_{\delta,u}\|\Delta x_1-\Delta x_2\|).\nonumber
\end{align}
Denote $\Delta \theta=\theta^+-\theta$. 
For any $(x,z,v)\in\Psi$, we have
\begin{align*}
&V_{\delta}((f_{\theta^+}(x,\kappa(x,z,v)),f_{\theta^+}(z,v)))\\
\stackrel{\eqref{eq:increm_f}}{\leq}&(1+L_{\delta}\|G(z,v)\Delta \theta\|)(V_{\delta}(f_{\theta}(x,\kappa(x,z,v)),f_{\theta}(z,v))\\
&+c_{\delta,u}\|(G(x,\kappa(x,z,v))-G(z,v))\Delta\theta\|)\\
\stackrel{\eqref{eq:rho_theta}}{\leq} &(1+L_{\delta}\overline{G}\|\Delta \theta\|)(\rho_\theta+c_{\delta,u}L_{G,\kappa}\|\Delta\theta\|)V_{\delta}(x,z),
\end{align*}
where $\overline{G}=\max_{(z,v)\in\mathcal{Z}}\|G(z,v)\|$ and $L_{G,\kappa}$ is a suitable Lipschitz constant, given $G,\kappa$ Lipschitz and the lower bound in~\eqref{eq:increm_a}.
Thus, 
\begin{align}
\label{eq:L_rho_def}
L_{\rho,\Delta\Theta}:=(L_{\delta}\overline{G}\rho_\theta+c_{\delta,u}L_{G,\kappa})\epsilon_{\Delta\Theta}+L_{\delta}\overline{G}c_{\delta,u}L_{G,\kappa}\epsilon_{\Delta\Theta}^2,
\end{align}
with $\epsilon_{\Delta\Theta}=\max_{\theta\in\Delta\Theta}\|\theta\|$  satisfies~\eqref{eq:rho_Lipschitz}. 
\end{proof}
In Sec.~\ref{sec:design_1}, Prop.~\ref{prop:properties_2_quadratic} for the special case of $V_{\delta}(x,z)=\|x-z\|_P$, we will derive a simpler expression for $L_{\rho,\theta,\Delta\Theta}$. 

In order to facilitate an efficient evaluation of the uncertainty (possible model mismatch) at some point $(z,v)\in\mathcal{Z}$ or in a neighbourhood thereof, we design a function $\tilde{w}$ offline, satisfying the following conditions.
\begin{assumption}
\label{ass:w_tilde}
Consider the functions $V_{\delta},\kappa$ from Assumption~\ref{ass:increm} and the Lipschitz constant $L_{\rho,\overline{\theta},\Delta\widetilde{\Theta}}$ from Prop.~\ref{prop:properties_2}. 
For any sets $\widetilde{\Theta}^+$, $\Delta\widetilde{\Theta}$, $\widetilde{\Theta}$ and parameters $\overline{\theta}$, such that $\widetilde{\Theta}^+\oplus\Delta\widetilde{\Theta}\subseteq\widetilde{\Theta}$,  $\overline{\theta}\oplus\widetilde{\Theta}\subseteq\Theta_0$, 
there exist a \textit{scalar} disturbance bound $\tilde{w}_{\widetilde{\Theta},\mathbb{D}}:\mathcal{Z}\times\mathbb{R}_{\geq 0}\rightarrow\mathbb{R}_{\geq 0}$ and a constant $L_{\widetilde{\Theta}}\geq 0$, such that the following properties hold for all $(x,z,v)\in\Psi$, any model mismatch $d_w\in\mathbb{W}_{\widetilde{\Theta},\mathbb{D}}(z,v)$, any state $\tilde{z}\in\mathcal{Z}_x$: 
\begin{subequations}
\begin{align}
\label{eq:w_tilde_a}
&V_{\delta}(\tilde{z}+d_w,\tilde{z})\leq \tilde{w}_{\widetilde{\Theta},\mathbb{D}}(z,v),\\
\label{eq:w_tilde_b}
& \tilde{w}_{\widetilde{\Theta},\mathbb{D}}(x,\kappa(x,z,v))- \tilde{w}_{\widetilde{\Theta},\mathbb{D}}(z,v)\leq L_{\widetilde{\Theta}}V_{\delta}(x,z),\\
\label{eq:w_tilde_c}
&\tilde{w}_{\widetilde{\Theta},\mathbb{D}}(z,v)\geq \tilde{w}_{\widetilde{\Theta}^+,\mathbb{D}}(z,v)+\tilde{w}_{\Delta\widetilde{\Theta},\{0\}}(z,v),\\
\label{eq:w_tilde_d}
&L_{\widetilde{\Theta}}\geq L_{\widetilde{\Theta}^+}+L_{\Delta\widetilde{\Theta}},\\
\label{eq:w_tilde_e}
&L_{\rho,\overline{\theta},\Delta\widetilde{\Theta}}\leq L_{\Delta\widetilde{\Theta}}.
\end{align}
\end{subequations}
\end{assumption}
These conditions are a generalization of~\cite[Ass.~5, Prop.~2]{Robust_TAC_19} to the adaptive setting.
Condition~\eqref{eq:w_tilde_a} provides an upper bound on the model mismatch and will later be used to design a function $\Phi$ that ensures satisfaction of~\eqref{eq:set_inclusion} for $\mathbb{X}=\{\overline{x}\}$. 
Condition~\eqref{eq:w_tilde_b} provides a Lipschitz bound $L_{\widetilde{\Theta}}$ on $\tilde{w}$. 
Conditions~\eqref{eq:w_tilde_c} and \eqref{eq:w_tilde_d} provide a monotonicity property w.r.t. the parametric uncertainty. 
 Condition~\eqref{eq:w_tilde_e} can always be ensured by choosing $L_{\widetilde{\Theta}}$ large enough. 
 Conditions~~\eqref{eq:w_tilde_d} and \eqref{eq:w_tilde_e} imply that the possible increase in $\rho_\theta$ due to parameter updates is smaller than the decrease in   $L_{\widetilde{\Theta}}$ (and thus $\tilde{w}$). 
Corresponding designs will be introduced in Sec.~\ref{sec:design_1}, Prop.~\ref{prop:design_w_1}, \ref{prop:design_w_2}.

The following proposition defines a function $\tilde{w}_{\delta}$, which shares similar monotonicity and overapproximation properties, but holds for all points in a neighbourhood of size $s$ around a given point $(z,v)\in\mathcal{Z}$.
\begin{proposition}
\label{prop:w_tilde}
Let Assumptions~\ref{ass:increm} and \ref{ass:w_tilde} hold. 
Define 
\begin{subequations}
\label{eq:w_delta_tilde}
\begin{align}
\label{eq:def_w_delta_tilde}
\tilde{w}_{\delta,\widetilde{\Theta},\mathbb{D}}(z,v,s):=\tilde{w}_{\widetilde{\Theta},\mathbb{D}}(z,v)+L_{\widetilde{\Theta}}s.
\end{align}
For any $(x,z,v)\in\Psi$, $\Delta s\geq 0$ with $V_{\delta}(x,z)\leq \Delta s$, we have
\begin{align}
\label{eq:w_tilde_g}
\tilde{w}_{\delta,\widetilde{\Theta},\mathbb{D}}(x,\kappa(x,z,v),s)\leq \tilde{w}_{\delta,\widetilde{\Theta},\mathbb{D}}(z,v,s+\Delta s).
\end{align}
Furthermore, for any $\widetilde{\Theta}^+\oplus\Delta\widetilde{\Theta}\subseteq\widetilde{\Theta}\subseteq\Theta_0$, we have
\begin{align}
\label{eq:w_tilde_c_delta}
&\tilde{w}_{\delta,\widetilde{\Theta},\mathbb{D}}(z,v,s)\geq \tilde{w}_{\delta,\widetilde{\Theta}^+,\mathbb{D}}(z,v,s)+\tilde{w}_{\delta,\Delta\widetilde{\Theta},\{0\}}(z,v,s).
\end{align}
\end{subequations}
\end{proposition}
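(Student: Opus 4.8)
The plan is to prove the two claims of Proposition~\ref{prop:w_tilde} directly from the definition~\eqref{eq:def_w_delta_tilde} by substituting the relevant inequalities from Assumptions~\ref{ass:increm} and \ref{ass:w_tilde}. Both claims are essentially ``bookkeeping'' arguments; the only slightly delicate point is the correct interleaving of the Lipschitz bound~\eqref{eq:w_tilde_b} on $\tilde{w}$ with the $L_{\widetilde{\Theta}}s$ term.

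First I would prove~\eqref{eq:w_tilde_g}. Expanding the left-hand side using~\eqref{eq:def_w_delta_tilde} gives $\tilde{w}_{\delta,\widetilde{\Theta},\mathbb{D}}(x,\kappa(x,z,v),s)=\tilde{w}_{\widetilde{\Theta},\mathbb{D}}(x,\kappa(x,z,v))+L_{\widetilde{\Theta}}s$. By condition~\eqref{eq:w_tilde_b} the first term is bounded by $\tilde{w}_{\widetilde{\Theta},\mathbb{D}}(z,v)+L_{\widetilde{\Theta}}V_{\delta}(x,z)$, and since $V_{\delta}(x,z)\leq\Delta s$ this is at most $\tilde{w}_{\widetilde{\Theta},\mathbb{D}}(z,v)+L_{\widetilde{\Theta}}\Delta s$. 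Adding back the $L_{\widetilde{\Theta}}s$ term yields $\tilde{w}_{\widetilde{\Theta},\mathbb{D}}(z,v)+L_{\widetilde{\Theta}}(s+\Delta s)=\tilde{w}_{\delta,\widetilde{\Theta},\mathbb{D}}(z,v,s+\Delta s)$, which is the claim. (I should double-check that $(x,z,v)\in\Psi$ is exactly the hypothesis needed to invoke~\eqref{eq:w_tilde_b}, which it is by the statement of Assumption~\ref{ass:w_tilde}.)

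Next I would prove~\eqref{eq:w_tilde_c_delta}. Starting from $\tilde{w}_{\delta,\widetilde{\Theta},\mathbb{D}}(z,v,s)=\tilde{w}_{\widetilde{\Theta},\mathbb{D}}(z,v)+L_{\widetilde{\Theta}}s$, I would apply the monotonicity conditions~\eqref{eq:w_tilde_c} and~\eqref{eq:w_tilde_d}: the first gives $\tilde{w}_{\widetilde{\Theta},\mathbb{D}}(z,v)\geq\tilde{w}_{\widetilde{\Theta}^+,\mathbb{D}}(z,v)+\tilde{w}_{\Delta\widetilde{\Theta},\{0\}}(z,v)$ and the second gives $L_{\widetilde{\Theta}}s\geq L_{\widetilde{\Theta}^+}s+L_{\Delta\widetilde{\Theta}}s$ (using $s\geq 0$). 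Adding these two inequalities and regrouping the right-hand side as $\bigl(\tilde{w}_{\widetilde{\Theta}^+,\mathbb{D}}(z,v)+L_{\widetilde{\Theta}^+}s\bigr)+\bigl(\tilde{w}_{\Delta\widetilde{\Theta},\{0\}}(z,v)+L_{\Delta\widetilde{\Theta}}s\bigr)$ gives exactly $\tilde{w}_{\delta,\widetilde{\Theta}^+,\mathbb{D}}(z,v,s)+\tilde{w}_{\delta,\Delta\widetilde{\Theta},\{0\}}(z,v,s)$, as desired.

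I do not expect a genuine obstacle here, since both parts are immediate consequences of the assumed properties of $\tilde{w}$; the main thing to be careful about is keeping track of which set ($\widetilde{\Theta}$, $\widetilde{\Theta}^+$, or $\Delta\widetilde{\Theta}$) and which disturbance set ($\mathbb{D}$ or $\{0\}$) each term carries, and ensuring the hypothesis $V_{\delta}(x,z)\leq\Delta s$ is used at exactly the right place in part one. The proof is therefore a short chain of substitutions, roughly:
\begin{align*}
\tilde{w}_{\delta,\widetilde{\Theta},\mathbb{D}}(x,\kappa(x,z,v),s)
&=\tilde{w}_{\widetilde{\Theta},\mathbb{D}}(x,\kappa(x,z,v))+L_{\widetilde{\Theta}}s\\
&\stackrel{\eqref{eq:w_tilde_b}}{\leq}\tilde{w}_{\widetilde{\Theta},\mathbb{D}}(z,v)+L_{\widetilde{\Theta}}V_{\delta}(x,z)+L_{\widetilde{\Theta}}s\\
&\leq\tilde{w}_{\widetilde{\Theta},\mathbb{D}}(z,v)+L_{\widetilde{\Theta}}(s+\Delta s)
=\tilde{w}_{\delta,\widetilde{\Theta},\mathbb{D}}(z,v,s+\Delta s),
\end{align*}
and analogously for~\eqref{eq:w_tilde_c_delta} using~\eqref{eq:w_tilde_c} and~\eqref{eq:w_tilde_d}.
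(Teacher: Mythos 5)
Your proof is correct and follows essentially the same argument as the paper: the first claim is the same chain (definition of $\tilde{w}_{\delta}$, then \eqref{eq:w_tilde_b} together with $V_{\delta}(x,z)\leq\Delta s$), and the second claim is exactly the paper's "follows directly from \eqref{eq:w_tilde_c} and \eqref{eq:w_tilde_d}", which you merely spell out term by term (correctly using $s\geq 0$). No gaps.
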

\begin{proof}
Condition~\eqref{eq:w_tilde_g} follows from the definition~\eqref{eq:def_w_delta_tilde} and condition~\eqref{eq:w_tilde_b}:
\begin{align*}
\tilde{w}_{\delta,\tilde{\Theta},\mathbb{D}}(x,\kappa(x,z,v),s)\stackrel{\eqref{eq:def_w_delta_tilde}}{=}\tilde{w}_{\tilde{\Theta},\mathbb{D}}(x,\kappa(x,z,v))+L_{\tilde{\Theta}}s\\
\stackrel{\eqref{eq:w_tilde_b}}{\leq} \tilde{w}_{\tilde{\Theta},\mathbb{D}}(z,v)+L_{\tilde{\Theta}}(s+\Delta s)\stackrel{\eqref{eq:def_w_delta_tilde}}{=}\tilde{w}_{\delta,\tilde{\Theta},\mathbb{D}}(z,v,s+\Delta s).
\end{align*}
Condition~\eqref{eq:w_tilde_c_delta} follows directly from~\eqref{eq:w_tilde_c} and \eqref{eq:w_tilde_d}. 
\end{proof}
In the following, we denote the minimal uncertainty (due to additive disturbances) by $\overline{d}:=\min_{(x,u)\in\mathcal{Z}}\tilde{w}_{\{0\},\mathbb{D}}(x,u)$.

The following assumption captures the desired properties of the terminal set.
\begin{assumption}
\label{ass:term}
Consider $V_{\delta},\tilde{w}_{\delta,\widetilde{\Theta},\mathbb{D}}$ from Assumptions~\ref{ass:increm}, \ref{ass:w_tilde} and Prop.~\ref{prop:w_tilde}. 
There exist a control law $k_f:\mathbb{R}^n\rightarrow\mathbb{R}^m$, a terminal cost $V_{f}:\mathbb{R}^n\rightarrow\mathbb{R}_{\geq 0}$, a function $\alpha_v\in\mathcal{K}_{\infty}$,  scalars $\overline{w}_{\widetilde{\Theta}}\geq 0$, $\overline{s} \in(0,\delta_{loc}]$, and a terminal region $\mathcal{X}_{f,\overline{\theta},\widetilde{\Theta}}\subseteq\mathbb{R}^{n+1}$ such that for all 
\begin{enumerate}[label=\alph*)]
\item  $(x,s)\in\mathcal{X}_{f,\overline{\theta},\widetilde{\Theta}}$,
\label{cond_term_a}
\item $\overline{\theta}^+\oplus\widetilde{\Theta}^+\subseteq\overline{\theta}\oplus\widetilde{\Theta}\subseteq\overline{\theta}_0\oplus\widetilde{\Theta}_0$, $\Delta\widetilde{\Theta}:=\widetilde{\Theta}\ominus\widetilde{\Theta}^+$,
\label{cond_term_b}
\item  $\tilde{s}:$ $(\rho_{\overline{\theta}}+L_{\Delta\widetilde{\Theta}})^N\overline{d}\leq \tilde{s}\\
\leq(\rho_{\overline{\theta}}+L_{\Delta\widetilde{\Theta}})^N\overline{w}_{\widetilde{\Theta}}+(\overline{w}_{\Delta\widetilde{\Theta}}-\overline{d})\sum_{k=0}^{N-1}(\rho_{\overline{\theta}}+L_{\Delta\widetilde{\Theta}})^k$,
\label{cond_term_c}
\item $s^+\in[0,\rho_{\overline{\theta}}s+\tilde{w}_{\delta,\widetilde{\Theta},\mathbb{D}}(x,k_f(x),s)-\tilde{s}]$,
\label{cond_term_d}
\item$x^+\in\mathbb{R}^n$: $V_{\delta}(x^+,f_{\overline{\theta}}(x,k_f(x)))\leq \tilde{s}$
\label{cond_term_e}
\end{enumerate}
the following properties hold 
\begin{subequations}
\begin{align}
\label{eq:term_1}
(x^+,s^+)\in&\mathcal{X}_{f,\overline{\theta}^+,\widetilde{\Theta}^+},\\
\label{eq:term_2}
h_j(x,k_f(x))+c_j s\leq& 0,~j=1,\dots ,r,\\
\label{eq:term_3}
\tilde{w}_{\delta,\widetilde{\Theta},\mathbb{D}}(x,k_f(x),s)\leq &\overline{w}_{\widetilde{\Theta}},\\
\label{eq:term_4}
s\leq &\overline{s},\\
\label{eq:term_5}
V_{f}(x^+)-V_{f}(x)\leq &-\ell(x,k_f(x))+\alpha_v(\tilde{s}).
\end{align}
Furthermore, for any $\widetilde{\Theta}^+\subseteq\widetilde{\Theta}\subseteq\widetilde{\Theta}_0$ and any $(x,u,s)\in\mathcal{Z}\times\mathbb{R}_{\geq 0}$ the following implication holds:
\begin{align}
\label{eq:term_6}
\tilde{w}_{\delta,\widetilde{\Theta},\mathbb{D}}(x,u,s)\leq \overline{w}_{\widetilde{\Theta}} ~\Rightarrow~ \tilde{w}_{\delta,\widetilde{\Theta}^+,\mathbb{D}}(x,u,s)\leq \overline{w}_{\widetilde{\Theta}^+}.
\end{align}
\end{subequations}
\end{assumption}
The conditions~\ref{cond_term_c}-\ref{cond_term_e} on $\tilde{s},s^+,x^+$ directly follow from the candidate solution used in Theorem~\ref{thm:main} below.
Property~\eqref{eq:term_1} ensures recursive feasibility of the terminal constraint. 
Properties~\eqref{eq:term_2}--\eqref{eq:term_4} ensure that the tightened constraints are satisfied in the terminal region. %, \eqref{eq:term_3}, 
Property~\eqref{eq:term_5} is not needed for recursive feasibility and constraint satisfaction but will be used to provide suitable stability guarantees. 
Property~\eqref{eq:term_6} ensures that the bound $\overline{w}_{\widetilde{\Theta}}$ can be reduced if the uncertainty set $\widetilde{\Theta}$ shrinks, but only such that previously feasible trajectories $(x,u,s)$ remain feasible.
A constructive design procedure satisfying these conditions will be introduced in Sec.~\ref{sec:design_1}, Prop.~\ref{prop:terminal_design}. 
We would like to point out that the presented terminal conditions are more intricate than the simple robust control invariance conditions considered in the works by Martin Guay and coauthors~\cite{adetola2011robust,guay2015robust,gonccalves2016robust}.
However, to the best knowledge of the authors, these simpler terminal conditions are not sufficient to prove recursive feasibility for tube-based RAMPC schemes and only apply to conceptual $\min$--$\max$ RAMPC approaches.

%algorithm
%!TEX root = ./Adaptive_Nonlin.tex
%%%%%%%%%%%%%%%%%%%%%%%%%%%%%%%%%%%%%%%%%%%%%%%%%%%%%%%%%%%%%%%%%%%%%%%%%%%%%%%
\subsection{Proposed RAMPC formulation}
\label{sec:main_1}
In the following, we specify the optimization problem for the proposed RAMPC approach, given the nominal prediction model $f_{\theta}$, parameters and set $\widetilde{\Theta}_t,\overline{\theta}_t$ (Ass.~\ref{ass:nominal}), the controller $\kappa$ (Ass.~\ref{ass:increm}) and the terminal ingredients (Ass.~\ref{ass:term}).
At time $t$, given the measured state $x_t$, set $\widetilde{\Theta}_t$, nominal parameters $\overline{\theta}_t$ and a later specified point estimate $\hat{\theta}_t$, the optimization problem is given by:
\begin{subequations}
\label{eq:RAMPC}
\begin{align}
\label{eq:RAMPC_cost}
&\min_{\overline{u}_{\cdot|t},w_{\cdot|t}} \sum_{k=0}^{N-1}\ell(\hat{x}_{k|t},\hat{u}_{k|t})+V_{f}(\hat{x}_{N|t})\\
\label{eq:RAMPC_init}
\text{s.t. } &\hat{x}_{0|t}=\overline{x}_{0|t}=x_t,~s_{0|t}=0,\\
\label{eq:RAMPC_dyn}
&\overline{x}_{k+1|t}=f_{\overline{\theta}_t}(\overline{x}_{k|t},\overline{u}_{k|t}),~\hat{x}_{k+1|t}=f_{\hat{\theta}_t}(\hat{x}_{k|t},\hat{u}_{k|t}),\\
\label{eq:RAMPC_dyn_s}
&s_{k+1|t}=\rho_{\overline{\theta}_t} s_{k|t}+w_{k|t},\\
\label{eq:RAMPC_w}
&w_{k|t}\geq \tilde{w}_{\delta,\widetilde{\Theta}_t,\mathbb{D}}(\overline{x}_{k|t},\overline{u}_{k|t},s_{k|t}),\\
\label{eq:RAMPC_con}
&h_j(\overline{x}_{k|t},\overline{u}_{k|t})+c_js_{k|t}\leq 0,\\
\label{eq:RAMPC_u_hat}
&\hat{u}_{k|t}=\kappa(\hat{x}_{k|t},\overline{x}_{k|t},\overline{u}_{k|t}),\\
\label{eq:RAMPC_con_sw}
&s_{k|t}\leq \overline{s},\quad w_{k|t}\leq \overline{w}_{\widetilde{\Theta}_t},\\
\label{eq:RAMPC_term}
&(\overline{x}_{N|t},s_{N|t})\in\mathcal{X}_{f,\overline{\theta}_t,\widetilde{\Theta}_t},\\
&k=0,\dots,N-1,\quad j=1,\dots,r.\nonumber
\end{align}
\end{subequations}
The minimizers are denoted by $\overline{u}^*_{\cdot|t}$, $w^*_{\cdot|t}$ with $\overline{x}^*_{\cdot|t},\hat{x}^*_{\cdot|t},\hat{u}^*_{\cdot|t},s^*_{\cdot|t}$ according to~\eqref{eq:RAMPC_init}, \eqref{eq:RAMPC_dyn},  \eqref{eq:RAMPC_dyn_s}, \eqref{eq:RAMPC_u_hat}  and the corresponding value function $V_t$. 
In closed-loop operation the optimization problem~\eqref{eq:RAMPC} is solved at each time step $t\in\mathbb{N}$ and the input $u_t=\overline{u}^*_{0|t}$ is applied to the system yielding the following closed-loop system
\begin{align}
\label{eq:close}
x_{t+1}=f_{\theta^*}(x_t,\overline{u}^*_{0|t})+Ed_t\in\{\overline{x}^*_{1|t}\}\oplus\mathbb{W}_{\widetilde{\Theta}_t,\mathbb{D}}(x_t,u_t).
\end{align}
In the following, we explain the different elements in~\eqref{eq:RAMPC}. 
The trajectory $\overline{x},\overline{u}$~\eqref{eq:RAMPC_init}, \eqref{eq:RAMPC_dyn} corresponds to a nominal predicted trajectory, while robust constraint satisfaction is ensured by using tightened constraints~\eqref{eq:RAMPC_con} based on the predicted tube size $s$~\eqref{eq:RAMPC_dyn_s} and the uncertainty $w$~\eqref{eq:RAMPC_w}.
In particular, we can define the predicted tube $\mathbb{X}_{k|t}:=\{x|~V_{\delta}(x,\overline{x}^*_{k|t})\leq s_{k|t}^*\}$ with corresponding input $\mathbb{U}_{k|t}:=\kappa(\mathbb{X}_{k|t},\overline{x}_{k|t}^*,\overline{u}^*_{k|t})$. 
Thus, the constraints~\eqref{eq:RAMPC_con} with $c_j$ according to~\eqref{eq:c_j} directly guarantees $\mathbb{X}_{k|t}\times\mathbb{U}_{k|t}\subseteq\mathcal{Z}$, similar to~\eqref{eq:set_MPC_tightened_constraints}. 
Furthermore, the dynamics of the nominal system $\overline{x}$ and the tube $s$ correspond to the general tube propagation $\Phi$ used in~\eqref{eq:set_MPC_inclusion}. 
The monotonicity and overapproximation property (Ass.~\ref{ass:set_propagation}) are ensured by the posed conditions (Ass.~\ref{ass:increm}, \ref{ass:w_tilde}), which will be shown in Theorem~\ref{thm:main}. 
The terminal constraint~\eqref{eq:RAMPC_term} in combination with the condition in Assumption~\ref{ass:term} ensures that the condition~\eqref{eq:set_MPC_term} is satisfied. 
The constraints~\eqref{eq:RAMPC_con_sw} limit the tube size $s$ and uncertainty $w$, which may have practical benefits (avoiding regions with large uncertainty) and can be useful in some design for the terminal ingredients.
The trajectory $\hat{x},\hat{u}$ is used for improved stability properties based on a stage cost $\ell$ and a later specified LMS point estimate $\hat{\theta}_t\in\overline{\theta}_t\oplus\widetilde{\Theta}_t$ (c.f. Sec.~\ref{sec:main_4}), which corresponds to one specific trajectory in the predicted tube, i.e., $(\hat{x}_{k|t},\hat{u}_{k|t})\in\mathbb{X}_{k|t}\times\mathbb{U}_{k|t}$, compare Theorem~\ref{thm:stability}.   

If we compare this formulation in terms of computational complexity to a nominal MPC, we have additional decision variables $w_{\cdot|t}$ and additional inequality constraints~\eqref{eq:RAMPC_w}, \eqref{eq:RAMPC_con_sw}. 
In particular, the proposed formulation is equivalent to a nominal MPC scheme with an augmented state $(x,s)\in\mathbb{R}^{n+1}$, augmented input vector $(u,w)\in\mathbb{R}^{m+1}$ and additional nonlinear inequality constraints~\eqref{eq:RAMPC_w}. 
The fact that the parameter set $\widetilde{\Theta}_t$ and parameters $\overline{\theta}_t$ are updated online has no impact on the computational demand of solving~\eqref{eq:RAMPC}.

%theorem
%!TEX root = ./Adaptive_Nonlin.tex
%%%%%%%%%%%%%%%%%%%%%%%%%%%%%%%%%%%%%%%%%%%%%%%%%%%%%%%%%%%%%%%%%%%%%%%%%%%%%%%
\subsection{Main Theorem - robust recursive feasibility}
\label{sec:main_3}
\begin{figure}[tbp]
\begin{center}
\includegraphics[width=0.35\textwidth]{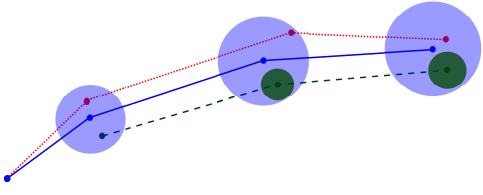}
\end{center}
\caption{Illustration - nested tubes property: Optimal trajectory $\overline{x}^*_{\cdot|t}$ (blue, solid), candidate trajectory $\overline{x}_{\cdot|t+1}$ (green, dashed), LMS trajectory $\hat{x}^*_{\cdot|t}$ (red, dotted), with corresponding tubes  $\mathbb{X}^*_{k|t}=\{z|V_{\delta}(z,\overline{x}^*_{k|t})\leq s^*_{k|t}\}$ (blue ellipses), $\mathbb{X}_{k|t+1}=\{\tilde{z}|~V_{\delta}(\tilde{z},\overline{x}_{k|t+1})\leq s_{k|t+1})\}$ (green ellipses).}
\label{fig:illustrate}
\end{figure}
The following theorem establishes recursive feasibility and robust constraint satisfaction of the proposed nonlinear RAMPC framework. 
\begin{theorem}
\label{thm:main}
Let Assumptions~\ref{ass:model}, \ref{ass:nominal}, \ref{ass:increm}, \ref{ass:w_tilde} and \ref{ass:term} hold.
Suppose that Problem~\eqref{eq:RAMPC} is feasible at $t=0$. 
Then Problem~\eqref{eq:RAMPC} is recursively feasible and the constraints~\eqref{eq:constraint_condition} are satisfied for the resulting closed-loop system. 
\end{theorem}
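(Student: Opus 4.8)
The strategy is the standard robust-MPC induction: assume Problem~\eqref{eq:RAMPC} is feasible at time $t$ with minimizers $\overline{u}^*_{\cdot|t}$, $w^*_{\cdot|t}$, $s^*_{\cdot|t}$, and construct an explicit feasible candidate at time $t+1$. The candidate input is the shifted optimal input corrected by the ancillary controller $\kappa$, namely $\overline{u}_{k|t+1} = \kappa(\overline{x}_{k+1|t},\overline{x}^*_{k+1|t},\overline{u}^*_{k+1|t})$ for $k=0,\dots,N-2$ and $\overline{u}_{N-1|t+1} = k_f(\overline{x}_{N-1|t+1})$, where $\overline{x}_{\cdot|t+1}$ is the nominal trajectory propagated with the \emph{updated} parameter $\overline{\theta}_{t+1}$ starting from $\overline{x}_{0|t+1} = x_{t+1}$. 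The key structural claim to establish first is the \emph{nested tubes property} (illustrated in Fig.~\ref{fig:illustrate}): with $\tilde{s}_t := (\rho_{\overline{\theta}_t} + L_{\Delta\widetilde{\Theta}_t})^? $-type bound on $V_\delta(x_{t+1},\overline{x}^*_{1|t})$ coming from~\eqref{eq:close} and~\eqref{eq:w_tilde_a}, one shows inductively in $k$ that $V_\delta(\overline{x}_{k|t+1},\overline{x}^*_{k+1|t}) \leq s^*_{k+1|t} - s_{k|t+1}$, i.e. the candidate tube $\mathbb{X}_{k|t+1}$ is contained in the previous optimal tube $\mathbb{X}^*_{k+1|t}$.

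The induction step for the nested tubes is where the designed conditions do their work. Given $V_\delta(\overline{x}_{k|t+1},\overline{x}^*_{k+1|t}) \leq s^*_{k+1|t} - s_{k|t+1}$, applying~\eqref{eq:increm_c} (with contraction rate $\rho_{\overline{\theta}_{t+1}}$ and the feedback $\kappa$) together with the parameter-change bound $\rho_{\overline{\theta}_{t+1}} \leq \rho_{\overline{\theta}_t} + L_{\rho,\overline{\theta}_t,\Delta\widetilde{\Theta}_t} \leq \rho_{\overline{\theta}_t} + L_{\Delta\widetilde{\Theta}_t}$ from Prop.~\ref{prop:properties_2} and~\eqref{eq:w_tilde_e}, one propagates $V_\delta$ one step forward. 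The crux is then to absorb the discrepancy: the gap $s^*_{k+1|t} - s_{k|t+1}$ must not shrink faster than $V_\delta$ contracts, after accounting for (i) the extra growth $L_{\Delta\widetilde{\Theta}_t}\cdot V_\delta$ from the parameter update and (ii) the difference in the disturbance bounds $w^*_{k+1|t}$ versus $w_{k|t+1}$. Here one invokes Prop.~\ref{prop:w_tilde}: by~\eqref{eq:w_tilde_g}, $\tilde{w}_{\delta,\widetilde{\Theta}_{t+1},\mathbb{D}}(\overline{x}_{k|t+1},\overline{u}_{k|t+1},s_{k|t+1}) \leq \tilde{w}_{\delta,\widetilde{\Theta}_{t+1},\mathbb{D}}(\overline{x}^*_{k+1|t},\overline{u}^*_{k+1|t},s^*_{k+1|t})$ since $V_\delta(\overline{x}_{k|t+1},\overline{x}^*_{k+1|t}) \leq s^*_{k+1|t}-s_{k|t+1}$ plays the role of "$\Delta s$", and by~\eqref{eq:w_tilde_c_delta} this in turn is bounded by $w^*_{k+1|t} - \tilde{w}_{\delta,\Delta\widetilde{\Theta}_t,\{0\}}(\cdots)$, so the candidate $w_{k|t+1}$ can be chosen at least $L_{\Delta\widetilde{\Theta}_t}$-worth smaller — exactly enough to compensate the extra contraction-rate term. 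Chaining these inequalities and using $s_{0|t+1}=0$ against the slack $V_\delta(x_{t+1},\overline{x}^*_{1|t}) \leq s^*_{1|t}$ gives the nested property for all $k$; the telescoping/geometric-series bookkeeping (the $\sum_{k=0}^{N-1}(\rho_{\overline{\theta}}+L_{\Delta\widetilde{\Theta}})^k$ appearing in condition~\ref{cond_term_c}) is the routine part.

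With the nested tubes in hand, feasibility of the remaining constraints follows: the tightened state-input constraints~\eqref{eq:RAMPC_con} hold for the candidate because $h_j(\overline{x}_{k|t+1},\overline{u}_{k|t+1}) + c_j s_{k|t+1} \leq h_j(\overline{x}^*_{k+1|t},\overline{u}^*_{k+1|t}) + c_j s^*_{k+1|t} \leq 0$ using~\eqref{eq:c_j} applied to the triple $(\overline{x}_{k|t+1},\overline{x}^*_{k+1|t},\overline{u}^*_{k+1|t})\in\Psi$ (which requires $V_\delta \leq \delta_{loc}$, guaranteed since $s^*_{k+1|t}\leq\overline{s}\leq\delta_{loc}$) and the nested bound. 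The bounds $s_{k|t+1}\leq\overline{s}$, $w_{k|t+1}\leq\overline{w}_{\widetilde{\Theta}_{t+1}}$ follow from the optimal ones at $t$ together with the monotonicity implications~\eqref{eq:term_6}. For the terminal step $k=N-1$, one uses the terminal-ingredient Assumption~\ref{ass:term}: conditions~\ref{cond_term_a}--\ref{cond_term_e} are precisely tailored to this candidate (with $\tilde{s}$ the slack $V_\delta(\overline{x}_{N-1|t+1},\overline{x}^*_{N|t})$, lying in the interval of condition~\ref{cond_term_c} by the geometric bookkeeping and~\eqref{eq:term_3}), so~\eqref{eq:term_1} gives $(\overline{x}_{N|t+1},s_{N|t+1})\in\mathcal{X}_{f,\overline{\theta}_{t+1},\widetilde{\Theta}_{t+1}}$ and~\eqref{eq:term_2}--\eqref{eq:term_4} give the remaining constraints inside the terminal region. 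Finally, robust constraint satisfaction in closed loop is immediate: $x_t\in\mathbb{X}^*_{0|t}=\{x_t\}$ and more generally the true state stays in the optimal tube, so $(x_t,u_t)\in\mathbb{X}^*_{0|t}\times\mathbb{U}^*_{0|t}\subseteq\mathcal{Z}$. \textbf{The main obstacle} I expect is getting the nested-tubes induction step exactly right — specifically, verifying that the three competing effects (contraction of $V_\delta$, extra growth $L_{\Delta\widetilde{\Theta}}$ from the model update, and the reduction in $w$ permitted by~\eqref{eq:w_tilde_c_delta}/\eqref{eq:w_tilde_d}) balance so that the inequality $V_\delta(\overline{x}_{k+1|t+1},\overline{x}^*_{k+2|t}) \leq s^*_{k+2|t} - s_{k+1|t+1}$ is preserved, and then threading this through the terminal interval condition~\ref{cond_term_c}; this is exactly the place where Assumptions~\ref{ass:w_tilde}\eqref{eq:w_tilde_c}--\eqref{eq:w_tilde_e} and Prop.~\ref{prop:w_tilde} are indispensable.
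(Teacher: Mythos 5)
Your overall route is the paper's: shift the optimal solution, re-center it with the ancillary feedback $\kappa$, prove a nested-tubes property, and let the extra growth $L_{\Delta\widetilde{\Theta}_t}$ caused by the model update be absorbed by the guaranteed decrease of $\tilde{w}$ via \eqref{eq:w_tilde_c_delta}, \eqref{eq:w_tilde_d}, \eqref{eq:w_tilde_e}, before checking \eqref{eq:RAMPC_con}, \eqref{eq:RAMPC_con_sw} (with \eqref{eq:term_6}) and the terminal constraint. Your single combined induction on $V_{\delta}(\overline{x}_{k|t+1},\overline{x}^*_{k+1|t})\leq s^*_{k+1|t}-s_{k|t+1}$ does balance correctly (the paper reaches the same conclusion by first bounding the deviation by an auxiliary sequence $\tilde{s}_{k|t+1}$, eq.~\eqref{eq:def_tilde_s}, and then proving $s_{k|t+1}+\tilde{s}_{k|t+1}\leq s^*_{k+1|t}$), so Parts I--II of the argument are sound in your sketch.

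The genuine gap is at the terminal step, which is exactly where Assumption~\ref{ass:term} must be threaded. First, your candidate input $\overline{u}_{N-1|t+1}=k_f(\overline{x}_{N-1|t+1})$ is not supported by the assumptions: the contraction bound \eqref{eq:increm_c} and the tightening constants $c_j$ in \eqref{eq:c_j} are only valid for the structured feedback $\kappa(x,z,v)$ around a feasible reference, and \eqref{eq:term_2} gives $h_j(x,k_f(x))+c_js\leq 0$ only for $(x,s)\in\mathcal{X}_{f,\overline{\theta}_t,\widetilde{\Theta}_t}$ --- which is known for $(\overline{x}^*_{N|t},s^*_{N|t})$ but not for $(\overline{x}_{N-1|t+1},s_{N-1|t+1})$; likewise, conditions \ref{cond_term_d})--\ref{cond_term_e}) of Assumption~\ref{ass:term} describe an $x^+$ close to $f_{\overline{\theta}}(x,k_f(x))$ with $x$ the \emph{old} terminal state, not $k_f$ applied at the shifted candidate state. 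The paper instead extends the previous solution by one step with $\overline{u}^*_{N|t}=k_f(\overline{x}^*_{N|t})$ (eq.~\eqref{eq:extended_solution}) and keeps $\overline{u}_{N-1|t+1}=\kappa(\overline{x}_{N-1|t+1},\overline{x}^*_{N|t},\overline{u}^*_{N|t})$, which is what makes \eqref{eq:increm_c}, \eqref{eq:c_j} and \eqref{eq:term_1}--\eqref{eq:term_2} applicable at $k=N-1$ and $k=N$. Second, identifying $\tilde{s}$ with the actual slack $V_{\delta}(\overline{x}_{N-1|t+1},\overline{x}^*_{N|t})$ does not work: the relevant deviation is at index $N$ (between $\overline{x}_{N|t+1}$ and $\overline{x}^*_{N+1|t}$), and more importantly the true slack can be arbitrarily small, in which case the lower bound $(\rho_{\overline{\theta}}+L_{\Delta\widetilde{\Theta}})^N\overline{d}\leq\tilde{s}$ in condition \ref{cond_term_c}) fails, while the same scalar must simultaneously upper-bound the deviation (condition \ref{cond_term_e})) and appear subtracted in the admissible interval for $s^+$ (condition \ref{cond_term_d})), linked to the nestedness $s_{N|t+1}\leq\rho_{\overline{\theta}_t}s^*_{N|t}+\tilde{w}_{\delta,\widetilde{\Theta}_t,\mathbb{D}}(\overline{x}^*_{N|t},k_f(\overline{x}^*_{N|t}),s^*_{N|t})-\tilde{s}$. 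The missing piece is the explicit construction of $\tilde{s}$ as the recursively propagated bound $\tilde{s}_{0|t+1}=s^*_{1|t}$, $\tilde{s}_{k+1|t+1}=\rho_{\overline{\theta}_t}\tilde{s}_{k|t+1}+\tilde{w}_{\delta,\Delta\widetilde{\Theta}_t,\{0\}}(\overline{x}^*_{k+1|t},\overline{u}^*_{k+1|t},\tilde{s}_{k|t+1})$, for which both bounds in \ref{cond_term_c}), the deviation bound \ref{cond_term_e}) and the interval in \ref{cond_term_d}) can all be verified simultaneously; the ``geometric bookkeeping'' you defer is precisely this construction and is not routine without it.
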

\begin{proof}
The following proof is an extension of~\cite[Thm.~1]{Robust_TAC_19} to recursively updated nominal parameters $\overline{\theta}_t$ and uncertainty sets $\widetilde{\Theta}_t$. 
We first construct a suitable candidate solution based on the stabilizability condition (Ass.~\ref{ass:increm}) and derive a bound on the deviation between the candidate solution and the previous optimal solution. 
Then, as the main step, we show a nestedness property between the previous optimal solution and the new candidate solution, compare Fig.~\ref{fig:illustrate} for an illustration using ellipsoidal sets.
 Finally, we show that the candidate solution also satisfies the posed inequality constraints~\eqref{eq:RAMPC_w}, \eqref{eq:RAMPC_con}, \eqref{eq:RAMPC_con_sw}, \eqref{eq:RAMPC_term}.  \\
\textbf{Part I. } Candidate solution:
For convenience, define
\begin{align}
\label{eq:extended_solution}
&\overline{u}^*_{N|t}=k_f(\overline{x}^*_{N|t}),~w^*_{N|t}=\tilde{w}_{\delta,\widetilde{\Theta}_{t},\mathbb{D}}(\overline{x}^*_{N|t},\overline{u}^*_{N|t},s_{N|t}^*),\\
&\overline{x}^*_{N+1|t}=f_{\overline{\theta}_t}(\overline{x}^*_{N|t},\overline{u}_{N|t}),~s^*_{N+1|t}=\rho_{\overline{\theta}_t}s^*_{N|t}+w^*_{N|t}.\nonumber
\end{align}
As a candidate solution, we use the stabilizing feedback $\kappa$ to stabilize the previous optimal solution, i.e.,  
\begin{subequations}
\label{eq:candidate}
\begin{align}
\label{eq:candidate_1}
\overline{u}_{k|t+1}=&\kappa(\overline{x}_{k|t+1},\overline{x}^*_{k+1|t},\overline{u}^*_{k+1|t}),\\
\label{eq:candidate_2}
w_{k|t+1}=&\tilde{w}_{\delta,\widetilde{\Theta}_{t+1},\mathbb{D}}(\overline{x}_{k|t+1},\overline{u}_{k|t+1},s_{k|t+1}),
\end{align}
\end{subequations}
for $k=0,\dots,N-1$, with $\overline{x},\hat{x},\hat{u},s$ defined according to~\eqref{eq:RAMPC_init}, \eqref{eq:RAMPC_dyn}, \eqref{eq:RAMPC_dyn_s} with $\overline{\theta}_{t+1}$, $\widetilde{\Theta}_{t+1}$, $\hat{\theta}_{t+1}$. 
Note that, due to the parameter change $\Delta\overline{\theta}_t$ the prediction model $f_{\overline{\theta}}$ changes, yielding:  
\begin{align*}
\overline{x}_{k+1|t+1}=&f_{\overline{\theta}_{t+1}}(\overline{x}_{k|t+1},\overline{u}_{k|t+1})\\
\stackrel{\eqref{eq:model_affine}}=&f_{\overline{\theta}_t}(\overline{x}_{k|t+1},\overline{u}_{k|t+1})+G(\overline{x}_{k|t+1},\overline{u}_{k|t+1})\Delta\overline{\theta}_t.
\end{align*}
Let us define 
\begin{align}
\label{eq:def_tilde_s}
\tilde{s}_{0|t+1}:=&s^*_{1|t},\\
\tilde{s}_{k+1|t+1}:=&\rho_{\overline{\theta}_{t}}\tilde{s}_{k|t+1}+\tilde{w}_{\delta,\Delta\widetilde{\Theta},\{0\}}(\overline{x}^*_{k+1|t},\overline{u}^*_{k+1|t},\tilde{s}_{k|t+1}).\nonumber
\end{align}
In the following, we show that  $\tilde{s}_{\cdot|t+1}$ bounds the deviation between the previous optimal trajectory $\overline{x}^*_{\cdot|t}$ and the candidate solution $\overline{x}_{\cdot|t+1}$, i.e., we show the following inequality by induction:
\begin{align}
\label{eq:bound_candidate}
V_{\delta}(\overline{x}_{k|t+1},\overline{x}^*_{k+1|t})\leq \tilde{s}_{k|t+1},~k=0,\dots,N.
\end{align}
Induction start: Condition~\eqref{eq:bound_candidate} is satisfied at $k=0$ with
\begin{align}
\label{eq:bound_candidate_induction_start}
&V_{\delta}(\overline{x}_{0|t+1},\overline{x}^*_{1|t})\nonumber\\
\stackrel{\eqref{eq:model_affine},\eqref{eq:RAMPC_dyn}}{=}&V_{\delta}(f_{\overline{\theta}_t}(x_t,u_t)+Ed_t+G(x_t,u_t)(\theta^*-\theta_t),f_{\overline{\theta}_t}(x_t,u_t)) \nonumber\\
\stackrel{\eqref{eq:model_mismatch},\eqref{eq:w_tilde_a}}{\leq}& \tilde{w}_{\widetilde{\Theta}_t,\mathbb{D}}(x_t,u_t)\stackrel{\eqref{eq:RAMPC_w}}{\leq} w^*_{0|t}\stackrel{\eqref{eq:RAMPC_dyn_s}}{=}s^*_{1|t}\stackrel{\eqref{eq:def_tilde_s}}{=}\tilde{s}_{0|t+1}. 
\end{align}
Induction step: Suppose~\eqref{eq:bound_candidate} holds for some $k\in\{0,\dots,N-1\}$, then condition~\eqref{eq:bound_candidate} also holds at $k+1$ using
\begin{align*}
&V_{\delta}(\overline{x}_{k+1|t+1},\overline{x}^*_{k+2|t})\\
\stackrel{\eqref{eq:RAMPC_dyn}}{=}&V_{\delta}(f_{\overline{\theta}_{t+1}}(\overline{x}_{k|t+1},\overline{u}_{k|t+1}),f_{\overline{\theta}_t}(\overline{x}^*_{k+1|t},\overline{u}^*_{k+1|t})) \nonumber\\
\stackrel{\eqref{eq:increm_d}}{\leq} &V_{\delta}(f_{\overline{\theta}_{t}}(\overline{x}_{k|t+1},\overline{u}_{k|t+1}),f_{\overline{\theta}_t}(\overline{x}^*_{k+1|t},\overline{u}^*_{k+1|t}))\\
&+V_{\delta}(\overline{x}^*_{k+2|t}+G(\overline{x}_{k|t+1},\overline{u}_{k|t+1})\Delta\overline{\theta}_t,\overline{x}^*_{k+2|t})\\
\stackrel{ \eqref{eq:increm_c},\eqref{eq:w_tilde_a}}{\leq} &\rho_{\overline{\theta}_t}V_{\delta}(\overline{x}_{k|t+1},\overline{x}^*_{k+1|t})
+\tilde{w}_{\Delta\widetilde{\Theta}_t,\{0\}}(\overline{x}_{k|t+1},\overline{u}_{k|t+1})\\
\stackrel{\eqref{eq:w_tilde_b},\eqref{eq:candidate_1}}{\leq} &(\rho_{\overline{\theta}_t}+L_{\Delta\widetilde{\Theta}})V_{\delta}(\overline{x}_{k|t+1},\overline{x}^*_{k+1|t})\\
&+\tilde{w}_{\Delta\widetilde{\Theta}_t,\{0\}}(\overline{x}^*_{k+1|t},\overline{u}^*_{k+1|t})\\
\stackrel{\eqref{eq:bound_candidate}}{\leq} &\rho_{\overline{\theta}_t}\tilde{s}_{k|t+1}+\tilde{w}_{\delta,\Delta\widetilde{\Theta}_t,\{0\}}(\overline{x}^*_{k+1|t},\overline{u}^*_{k+1|t},\tilde{s}_{k|t+1})\\
\stackrel{\eqref{eq:def_tilde_s}}{=}&\tilde{s}_{k+1|t+1}. 
\end{align*}
The bound $\tilde{s}$ consists of two components: first, a term bounding the initial prediction mismatch $x_{t+1}-\overline{x}^*_{1|t}$ using $w^*_{0|t}$, and a  second term $\tilde{w}_{\Delta\widetilde{\Theta}_t,\{0\}}$, which depends on the parameter update. 
In the absence of parameter updates, we recover the robust MPC proof in~\cite{Robust_TAC_19} as a special case with $\tilde{s}_{k|t+1}=\rho^kw^*_{0|t}$. \\
\textbf{Part II. } In the following we show that the new candidate solution $\overline{x}$ with corresponding tube $s$ satisfies a nestedness property (c.f. Fig~\ref{fig:illustrate}) w.r.t. the previous optimal solution. 
First, note that the following bound holds
\begin{align}
\label{eq:w_bound_alternative}
&w_{k|t+1}
\stackrel{\eqref{eq:candidate_2}}{=} 
\tilde{w}_{\delta,\widetilde{\Theta}_{t+1},\mathbb{D}}(\overline{x}_{k|t+1},\overline{u}_{k|t+1},s_{k|t+1})\nonumber\\
\stackrel{\eqref{eq:w_tilde_g}, \eqref{eq:candidate_1},\eqref{eq:bound_candidate}}{\leq} &\tilde{w}_{\delta,\widetilde{\Theta}_{t+1},\mathbb{D}}(\overline{x}^*_{k+1|t},\overline{u}^*_{k+1|t},s_{k|t+1}+\tilde{s}_{k|t+1})\nonumber\\
\stackrel{\eqref{eq:w_tilde_c_delta}}{\leq }&\tilde{w}_{\delta,\widetilde{\Theta}_{t},\mathbb{D}}(\overline{x}^*_{k+1|t},\overline{u}^*_{k+1|t},s_{k|t+1}+\tilde{s}_{k|t+1})\nonumber\\
&-\tilde{w}_{\delta,\Delta\widetilde{\Theta}_{t},\{0\}}(\overline{x}^*_{k+1|t},\overline{u}^*_{k+1|t},s_{k|t+1}+\tilde{s}_{k|t+1})\nonumber\\
\stackrel{\eqref{eq:def_w_delta_tilde}, \eqref{eq:RAMPC_w}}{\leq}&w^*_{k+1|t}+L_{\widetilde{\Theta}_t}(s_{k|t+1}-s_{k+1|t}^*+\tilde{s}_{k|t+1})\\
&-\tilde{w}_{\delta,\Delta\widetilde{\Theta}_{t},\{0\}}(\overline{x}^*_{k+1|t},\overline{u}^*_{k+1|t},s_{k|t+1}+\tilde{s}_{k|t+1}).\nonumber
\end{align}
In the following, we show
\begin{align}
\label{eq:s_bound_alternative}
&s_{k|t+1}+\tilde{s}_{k|t+1}-s^*_{k+1|t}\leq 0,
\end{align}
for $k=0,\dots,N$, using a proof by induction. 
Induction start: Condition~\eqref{eq:s_bound_alternative} is satisfied at $k=0$ with equality:
\begin{align*}
s_{0|t+1}-s^*_{1|t}+\tilde{s}_{0|t+1}\stackrel{\eqref{eq:RAMPC_init},\eqref{eq:RAMPC_dyn_s},\eqref{eq:def_tilde_s}}{=}0-w^*_{0|t}+w^*_{0|t}=0.
\end{align*}
Induction step: Suppose~\eqref{eq:s_bound_alternative} holds for some $k\in\{0,\dots,N-1\}$, then condition~\eqref{eq:s_bound_alternative} holds at $k+1$ using
\begin{align*}
&s_{k+1|t+1}+\tilde{s}_{k+1|t+1}-s^*_{k+2|t}\\
\stackrel{\eqref{eq:RAMPC_dyn_s},\eqref{eq:def_tilde_s}}{=}&\rho_{\overline{\theta}_{t+1}}s_{k|t+1}+w_{k|t+1}+\rho_{\overline{\theta}_t}(\tilde{s}_{k|t+1}-s^*_{k+1|t})\\
&+\tilde{w}_{\delta,\Delta\widetilde{\Theta}_t,\{0\}}(\overline{x}^*_{k+1|},\overline{u}^*_{k+1|t},\tilde{s}_{k|t+1})-w^*_{k+1|t}\\
\stackrel{\eqref{eq:rho_Lipschitz},\eqref{eq:w_tilde_e}}{\leq} &(\rho_{\overline{\theta}_t}+L_{\Delta\widetilde{\Theta}_t})(s_{k|t+1}+\tilde{s}_{k|t+1}-s^*_{k+1|t})+w_{k|t+1}\\
&-w^*_{k+1|t}
+\tilde{w}_{\delta,\Delta\widetilde{\Theta}_t,\{0\}}(\overline{x}^*_{k+1|},\overline{u}^*_{k+1|t},s^*_{k+1|t}) \\
\stackrel{\eqref{eq:w_bound_alternative}}{\leq}&(\rho_{\overline{\theta}_t}+L_{\widetilde{\Theta}_t})(s_{k|t+1}+\tilde{s}_{k|t+1}-s^*_{k+1|t})\stackrel{\eqref{eq:s_bound_alternative}}{\leq} 0.
\end{align*}
\textbf{Part III. } Constraints~\eqref{eq:RAMPC_con}, \eqref{eq:RAMPC_con_sw}, \eqref{eq:RAMPC_term}:
Regarding the terminal set~\eqref{eq:RAMPC_term} constraint: 
First, note that we have
\begin{align}
\label{eq:w_intermediate_bound_2}
\tilde{w}_{\delta,\widetilde{\Theta}_t,\mathbb{D}}(\overline{x}^*_{k+1|t},\overline{u}^*_{k+1|t},s^*_{k+1|t})\stackrel{\eqref{eq:RAMPC_w}}{\leq}  w_{k+1|t}^*\stackrel{\eqref{eq:RAMPC_con_sw}}{\leq} \overline{w}_{\widetilde{\Theta}_t},
\end{align}
for $k=0,\dots,N-1$, with $k=N-1$ using~\eqref{eq:term_3}, which implies  
\begin{align*}
\overline{w}_{\Delta\widetilde{\Theta}_t}\stackrel{\eqref{eq:term_6}}{\geq } \tilde{w}_{\delta,\Delta\widetilde{\Theta}_t,\mathbb{D}}(\overline{x}_{k+1|t}^*,\overline{u}_{k+1|t}^*,s^*_{k+1|t})\\
\stackrel{\eqref{eq:w_tilde_c_delta}}{\geq} \tilde{w}_{\delta,\Delta\widetilde{\Theta}_t,\{0\}}(\overline{x}_{k+1|t}^*,\overline{u}_{k+1|t}^*,s^*_{k+1|t})+\overline{d}. 
\end{align*}
Thus, \eqref{eq:def_tilde_s} ensures 
\begin{align*}
&(\rho_{\overline{\theta}_{t}}+L_{\Delta\widetilde{\Theta}_t})^N\overline{d}\leq \tilde{s}_{N|t+1}\\
\leq&(\rho_{\overline{\theta}_{t}}+L_{\Delta\widetilde{\Theta}_t})^N\overline{w}_{\widetilde{\Theta}_t}+(\overline{w}_{\Delta\widetilde{\Theta}_t}-\overline{d})\sum_{k=0}^{N-1}(\rho_{\overline{\theta}_t}+L_{\Delta\widetilde{\Theta}_t})^k.
\end{align*}
Furthermore, we have $V_{\delta}(\overline{x}_{N|t+1},\overline{x}^*_{N+1|t})\stackrel{\eqref{eq:bound_candidate}}{\leq} \tilde{s}_{N|t+1}$, $s_{N|t+1}\stackrel{\eqref{eq:s_bound_alternative}}{\leq} \rho_{\overline{\theta}_t}s^*_{N|t}+\tilde{w}_{\delta,\widetilde{\Theta}_t,\mathbb{D}}(\overline{x}^*_{N|t},k_f(\overline{x}^*_{N|t}),s^*_{N|t})-\tilde{s}_{N|t+1}$.
Thus, $(\overline{x}^*_{N|t},{s}^*_{N|t})\in\mathcal{X}_{f,\overline{\theta}_t,\widetilde{\Theta}_t}$
 ensures $(\overline{x}_{N|t+1},s_{N|t+1})\in\mathcal{X}_{f,\overline{\theta}_{t+1},\widetilde{\Theta}_{t+1}}$ using~\eqref{eq:term_1}. 
Satisfaction of the tightened state and input constraints~\eqref{eq:RAMPC_con} direclty follows from \eqref{eq:bound_candidate}, \eqref{eq:s_bound_alternative} 
\begin{align*}
&h_j(\overline{x}_{k|t+1},\overline{u}_{k|t+1})+c_js_{k|t+1}\\
\stackrel{\eqref{eq:c_j},\eqref{eq:candidate_1},\eqref{eq:bound_candidate}}{\leq} &h_j(\overline{x}^*_{k+1|t},\overline{u}^*_{k+1|t})+c_j\tilde{s}_{k|t+1}+c_js_{k|t+1}\\
\stackrel{\eqref{eq:s_bound_alternative}}{\leq}& h_j(\overline{x}^*_{k+1|t},\overline{u}^*_{k+1|t})+c_js^*_{k+1|t}\stackrel{\eqref{eq:RAMPC_con}}{\leq} 0.
\end{align*} 
for $k=0,\dots,N-2$. 
Satisfaction at $k=N-1$ follows similarly, using \eqref{eq:term_2} and~\eqref{eq:candidate_1}.
The constraints $s_{k|t+1}\leq \overline{s}$ in~\eqref{eq:RAMPC_con_sw} hold  due to~\eqref{eq:s_bound_alternative} for $k=0,\dots,N-2$ and using~\eqref{eq:term_4} for $k=N-1$. 
Using $\widetilde{\Theta}_{t+1}\subseteq\widetilde{\Theta}_t$, condition~\eqref{eq:w_intermediate_bound_2} implies
\begin{align*}
w_{k|t+1}\stackrel{\eqref{eq:w_bound_alternative},\eqref{eq:s_bound_alternative}}{\leq} \tilde{w}_{\delta,\widetilde{\Theta}_{t+1},\mathbb{D}}(\overline{x}^*_{k+1|t},\overline{u}_{k+1|t}^*,s^*_{k+1|t})\stackrel{\eqref{eq:w_intermediate_bound_2},\eqref{eq:term_6}}{\leq} \overline{w}_{\widetilde{\Theta}_{t+1}},
\end{align*}
for $k=0,\dots,N-1$ and thus satisfaction~\eqref{eq:RAMPC_con_sw}.

We would like to point out that we repeatedly used $(\overline{x}_{k|t+1},\overline{u}_{k|t+1})\in\mathcal{Z}$ and $V_{\delta}(\overline{x}_{k|t+1},\overline{x}_{k+1|t})\leq \delta_{loc}$ when applying conditions from Assumption~\ref{ass:increm}, which holds since $V_{\delta}$ is continuous, $0\leq \tilde{s}_{k|t+1}\leq \tilde{s}_{k|t+1}+s_{k|t+1}\stackrel{\eqref{eq:s_bound_alternative}}{\leq} s^*_{k+1|t}\stackrel{\eqref{eq:RAMPC_con_sw}}{\leq} \overline{s}\leq \delta_{loc}$ and \eqref{eq:RAMPC_con} holds.
\end{proof}
 In~\cite{Robust_TAC_19}, it was shown that a similar online constructed tube size $s_{\cdot|t}$  in combination with the constraint tightening~\eqref{eq:RAMPC_con} ensures robust recursive feasibility and robust constraint satisfaction. 
Theorem~\ref{thm:main} extends this robust MPC framework to adaptive MPC, utilizing online updates of the nominal model and the uncertainty to reduce conservatism. 
The approach is somewhat similar to the RAMPC approaches in~\cite{adetola2011robust,guay2015robust,gonccalves2016robust}, which also uses some scalar dynamics in $s$ to ensure robust constraint satisfaction.
However, the proposed framework is significantly more flexible, compare the designs in Sec.~\ref{sec:design_1}--\ref{sec:special_guay} and the numerical example in Sec.~\ref{sec:num}.

%stability
%!TEX root = ./Adaptive_Nonlin.tex
%%%%%%%%%%%%%%%%%%%%%%%%%%%%%%%%%%%%%%%%%%%%%%%%%%%%%%%%%%%%%%%%%%%%%%%%%%%%%%%
\subsection{Stability results and LMS updates} 
\label{sec:main_4}
In the following, we discuss the LMS update and show finite-gain stability w.r.t. additive disturbances $d_t$. 
Before proceedings, we would like to point out that simply using $\hat{\theta}_t=\overline{\theta}_t$ would be sufficient to show weaker practical asymptotic stability properties similar to the stability properties of the robust MPC in~\cite{Robust_TAC_19}. 
Denote the prediction error $\tilde{x}_{1|t}=x_{t+1}-f_{\hat{\theta}_t}(x_t,u_t)$. 
The LMS point estimate update is given by 
\begin{subequations}
\label{eq:hat_theta_update}
\begin{align}
\label{eq:hat_theta_update_1}
\tilde{\theta}_t:=&\hat{\theta}_{t-1}+\mu G(x_{t-1},u_{t-1})^\top \tilde{x}_{1|t-1},\\
\label{eq:hat_theta_update_2}
\hat{\theta}_t:=&\arg\min_{\theta\in\overline{\theta}_t\oplus\widetilde{\Theta}_t}\|\theta- \tilde{\theta}_{t} \|,
\end{align}
with some initial parameter $\hat{\theta}_0\in\overline{\theta}_0\oplus\widetilde{\Theta}_0$.
The update~\eqref{eq:hat_theta_update_1} corresponds to a standard LMS update, while the projection step in~\eqref{eq:hat_theta_update_2} uses the existing set membership bound to improve the estimate, which is also necessary to ensure the desired properties. 
Similar LMS updates have also been considered in the linear RAMPC schemes in~\cite{lorenzen2019robust,Koehler2019Adaptive,bujarbaruah2019adaptive}.
The update gain $\mu>0$ is chosen such that
\begin{align}
\label{eq:mu}
\dfrac{1}{\mu}>\|G(x,u)\|^2,~\forall (x,u)\in\mathcal{Z},
\end{align}
\end{subequations}
which is possible since $\mathcal{Z}$ compact and $G$ is Lipschitz continuous.

The following analysis is an extension of the stability proof in~\cite[Thm.~14]{lorenzen2019robust} to nonlinear systems.
\begin{theorem}
\label{thm:stability}
Suppose the conditions in Theorem~\ref{thm:main} hold. 
Assume further that $\ell(x,u)=\|x\|_Q^2+\|u\|_R^2$, $V_f(x)=\|x\|_{P_f}^2$  with $Q,R,P_f$ positive definite, the feedback $\kappa$ in Assumption~\ref{ass:increm} has the form $\kappa(x,z,v)=v+\kappa_x(x)-\kappa_x(z)$ and the terminal set $\mathcal{X}_f$ has a non-empty interior.
Then the closed-loop system~\eqref{eq:close} is finite-gain $\mathcal{L}_2$ stable w.r.t. additive disturbances $d_t$, i.e., there exist constants $c_0,c_1,c_2>0$, such that for all $T\in\mathbb{N}$:
\begin{align}
\label{eq:finite_gain_stable}
\sum_{t=0}^{T}\|x_t\|^2\leq c_0\|x_0\|^2+c_1\|\Hat{\theta}_0-\theta^*\|^2+c_2\sum_{t=0}^{T}\|d_t\|^2.
\end{align} 
\end{theorem}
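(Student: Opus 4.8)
The plan is to establish a dissipation-type inequality for the value function $V_t$ along the closed loop, then sum it over time. The natural storage function candidate is $V_t$ augmented by a term penalizing the parameter estimation error, i.e.\ something of the form $\mathcal{V}_t := V_t + \gamma \|\hat\theta_t - \theta^*\|^2$ for a suitable constant $\gamma>0$. The first step is to analyze the LMS update: using \eqref{eq:hat_theta_update_1}, \eqref{eq:hat_theta_update_2}, the projection nonexpansiveness (since $\theta^*\in\overline\theta_t\oplus\widetilde\Theta_t$ by Assumption~\ref{ass:nominal}), and the gain condition \eqref{eq:mu}, I would derive the standard LMS contraction estimate
\begin{align*}
\|\hat\theta_t - \theta^*\|^2 \leq \|\hat\theta_{t-1}-\theta^*\|^2 - \left(\tfrac1\mu - \|G(x_{t-1},u_{t-1})\|^2\right)\mu^2\|G(x_{t-1},u_{t-1})^\top\tilde x_{1|t-1}\|^2 + (\text{cross term with }d_{t-1}),
\end{align*}
where the prediction error $\tilde x_{1|t-1} = x_t - f_{\hat\theta_{t-1}}(x_{t-1},u_{t-1}) = G(x_{t-1},u_{t-1})(\theta^*-\hat\theta_{t-1}) + Ed_{t-1}$ decomposes into a parameter-error part and an additive-disturbance part. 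The cross term is handled by Young's inequality, producing a genuine negative definite contribution in the regressor direction plus a $\|d_{t-1}\|^2$ remainder.

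The second step is the MPC value function decrease. Reusing the candidate solution from the proof of Theorem~\ref{thm:main} (the $\hat x,\hat u$ trajectory shifted by one, stabilized via $\kappa$ of the form $\kappa(x,z,v)=v+\kappa_x(x)-\kappa_x(z)$), together with the terminal inequality \eqref{eq:term_5}, one obtains $V_{t+1} \leq V_t - \ell(x_t,u_t) + \alpha_v(\tilde s_{N|t+1}) + (\text{error terms from }\hat x_{0|t+1}=x_{t+1}\neq \hat x^*_{1|t})$. The key point is to bound the mismatch between the realized $x_{t+1}$ and the predicted $\hat x^*_{1|t}$: by construction $x_{t+1}-f_{\hat\theta_t}(x_t,u_t) = \tilde x_{1|t}$, which by the previous step is $O(\|G(\theta^*-\hat\theta_t)\| + \|d_t\|)$. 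Since $\ell,V_f$ are quadratic and $\kappa_x$ (hence the closed-loop propagation) is Lipschitz, the cost increments incurred by re-initializing at $x_{t+1}$ instead of $\hat x^*_{1|t}$ are bounded by $\epsilon\|x_{t+1}\|^2 + C_\epsilon(\|G(x_t,u_t)(\theta^*-\hat\theta_t)\|^2 + \|d_t\|^2)$ for arbitrarily small $\epsilon$, using quadratic lower/upper bounds on $V_t$ (here the non-empty interior of $\mathcal{X}_f$ and positive definiteness of $Q,R,P_f$ give $\lambda_{\min}\|x\|^2 \le V_t \le \lambda_{\max}\|x\|^2$ on the feasible region, at least locally; this may require an argument that the closed loop stays in a bounded set, which follows from recursive feasibility and the constraint $s\le\overline s$).

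The third step is to combine: choose $\gamma$ large enough that the negative $\|G(x_t,u_t)(\theta^*-\hat\theta_t)\|^2$ term coming from $\gamma$ times the LMS decrease dominates the positive $\|G(x_t,u_t)(\theta^*-\hat\theta_t)\|^2$ terms from the MPC value-function step; the $\alpha_v(\tilde s_{N|t+1})$ term needs to be controlled as well — $\tilde s_{N|t+1}$ depends on $w^*_{0|t}$ and the parameter update $\Delta\widetilde\Theta_t$, and $w^*_{0|t}$ in turn is bounded via $\tilde w$ in terms of the realized disturbance $d_t$ and parameter error, so $\alpha_v(\tilde s_{N|t+1})$ contributes further $O(\|d_t\|^2 + \|\theta^*-\hat\theta_t\|^2)$ terms (here one likely needs $\alpha_v$ quadratic, or at least the argument small enough to use a quadratic overbound). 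After absorbing all cross terms, one arrives at
\begin{align*}
\mathcal{V}_{t+1} \leq \mathcal{V}_t - (1-\epsilon)\lambda_{\min}(Q)\|x_t\|^2 + c_2'\|d_t\|^2 .
\end{align*}
Summing from $t=0$ to $T$, using $\mathcal{V}_{T+1}\geq 0$ and $\mathcal{V}_0 = V_0 + \gamma\|\hat\theta_0-\theta^*\|^2 \leq c_0'\|x_0\|^2 + \gamma\|\hat\theta_0-\theta^*\|^2$ (value function at $t=0$ bounded by $\|x_0\|^2$ via the quadratic upper bound), rearranging yields \eqref{eq:finite_gain_stable} with appropriate $c_0,c_1,c_2$.

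\textbf{Main obstacle.} The delicate part is the bookkeeping of the cross terms so that all the indefinite $\|\theta^*-\hat\theta_t\|$-dependent contributions — from re-initializing the MPC at the true next state, from $\alpha_v(\tilde s_{N|t+1})$, and from the LMS cross term with $d_t$ — are genuinely dominated by the single negative term $\gamma\mu(1-\mu\|G\|^2)\|G(x_t,u_t)^\top\tilde x_{1|t}\|^2$ supplied by the LMS decrease, and in particular that the regressor $G(x_t,u_t)$ appears in a consistent enough way (i.e.\ the $\|G(x_t,u_t)(\theta^*-\hat\theta_t)\|^2$ structure matches on both sides rather than needing a persistence-of-excitation-type lower bound on $\|G^\top G\|$, which is \emph{not} assumed). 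I would expect this to work precisely because the MPC error terms are also proportional to $\|G(x_t,u_t)(\theta^*-\hat\theta_t)\|$ (the model mismatch enters through $G(x_t,u_t)\tilde\theta$), so no excitation condition is needed — but verifying this alignment carefully, and handling the projection step in \eqref{eq:hat_theta_update_2} correctly (it can only help, by nonexpansiveness onto a convex set containing $\theta^*$), is where the real work lies.
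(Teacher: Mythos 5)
Your overall skeleton coincides with the paper's: shift the previous optimal $\hat{x},\hat{u}$ trajectory, exploit the structured feedback $\kappa(x,z,v)=v+\kappa_x(x)-\kappa_x(z)$ so that the new $\hat{\cdot}$-trajectory tracks the old one, bound the tracking error by the one-step prediction error $\tilde{x}_{1|t}$ (the projected LMS step satisfies $\|\hat{\theta}_{t+1}-\hat{\theta}_t\|\leq\mu\|G(x_t,u_t)^\top\tilde{x}_{1|t}\|$ by non-expansiveness of the projection \eqref{eq:hat_theta_update_2} and the gain condition \eqref{eq:mu}), turn this into $(1+\epsilon)$-multiplicative plus $O(\|\tilde{x}_{1|t}\|^2)$-additive bounds on the quadratic stage and terminal costs, use the local bound $V_t\leq c_4\|x_t\|^2$, and absorb $\sum_t\|\tilde{x}_{1|t}\|^2$ via the LMS property. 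Your composite storage function $V_t+\gamma\|\hat{\theta}_t-\theta^*\|^2$ is merely the per-step packaging of the summed bound \eqref{eq:intermediate_LMS} that the paper borrows from \cite{lorenzen2019robust}, and the "main obstacle" you flag resolves itself exactly as you hope: since $G(x_t,u_t)(\theta^*-\hat{\theta}_t)=\tilde{x}_{1|t}-Ed_t$, the decisive negative term supplied by the LMS cross term is of order $\mu\|\tilde{x}_{1|t}\|^2$ (not merely $\|G^\top\tilde{x}_{1|t}\|^2$), i.e.\ it is expressed in the full prediction error, which is precisely the quantity all MPC error terms scale with, so no excitation condition is needed. (You also implicitly need $(\hat{x}_{k|t},\hat{u}_{k|t})\in\mathcal{Z}$ to invoke \eqref{eq:mu} and the incremental bounds along the $\hat{\cdot}$-trajectory; the paper establishes this first, but it is a detail.)

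The genuine gap is your treatment of the terminal step. You insert $\alpha_v(\tilde{s}_{N|t+1})$ from \eqref{eq:term_5} into the dissipation inequality and claim it contributes $O(\|d_t\|^2+\|\theta^*-\hat{\theta}_t\|^2)$, on the grounds that $w^*_{0|t}$ "is bounded via $\tilde{w}$ in terms of the realized disturbance $d_t$ and parameter error". This is false: by \eqref{eq:RAMPC_w} one has $w^*_{0|t}\geq\tilde{w}_{\widetilde{\Theta}_t,\mathbb{D}}(x_t,u_t)\geq\overline{d}$, a quantity that scales with the worst-case disturbance set $\mathbb{D}$ and the diameter of $\widetilde{\Theta}_t$, not with the realized $d_t$ or with $\|\theta^*-\hat{\theta}_t\|$; correspondingly, any admissible $\tilde{s}$ in Assumption~\ref{ass:term} is bounded below by $(\rho_{\overline{\theta}_t}+L_{\Delta\widetilde{\Theta}_t})^N\overline{d}>0$ whenever $\mathbb{D}\neq\{0\}$. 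With your bookkeeping the per-step inequality therefore carries a constant offset, and summing yields only a practical (offset) $\mathcal{L}_2$ bound of the form $\sum_t\|x_t\|^2\leq\ldots+T\,\alpha_v(\mathrm{const}\cdot\overline{d})$, not \eqref{eq:finite_gain_stable}. The paper's route avoids this entirely: because the cost \eqref{eq:RAMPC_cost} is evaluated exclusively along the $\hat{\cdot}$-trajectory, every term of the candidate cost, including the terminal cost $V_f(\hat{x}_{N|t+1})$, is compared with the corresponding term of the extended previous optimal $\hat{\cdot}$-trajectory, whose distance is bounded uniformly over the horizon by $c_1\|\tilde{x}_{1|t}\|$ using \eqref{eq:increm_c}; the tube quantities $w^*_{0|t}$, $s^*_{\cdot|t}$, $\tilde{s}_{N|t+1}$ and condition \eqref{eq:term_5} evaluated at the realized tube size never enter the cost comparison, and the appended terminal step is handled as in the linear case of \cite{lorenzen2019robust} (a terminal cost decrease for the estimated model, with residual again $O(\|\tilde{x}_{1|t}\|^2)$). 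You would need to restructure your Step 2 along these lines; as written, the use of \eqref{eq:term_5} with $\tilde{s}_{N|t+1}$ does not deliver the stated gain bound.
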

\begin{proof}
\textbf{Part I. } 
First, we show that any feasible solution in~\eqref{eq:RAMPC} satisfies $(\hat{x}_{k|t},\hat{u}_{k|t})\in\mathcal{Z}$.   
Similar to~\eqref{eq:bound_candidate}, one can show $V_{\delta}(\hat{x}_{k|t},\overline{x}_{k|t})\leq s_{k|t}$ with a proof of induction using: 
\begin{align*}
&V_{\delta}(\hat{x}_{k+1|t},\overline{x}_{k+1|t})\\
\stackrel{\eqref{eq:RAMPC_dyn},\eqref{eq:increm_d}}{\leq} &V_{\delta}(f_{\overline{\theta}_t}(\hat{x}_{k|t},\hat{u}_{k|t}),f_{\overline{\theta}_t}(\overline{x}_{k|t},\overline{u}_{k|t}))\\
&+V_{\delta}(\overline{x}_{k+1|t}+G(\hat{x}_{k|t},\hat{u}_{k|t})(\hat{\theta}_t-\overline{\theta}_t),\overline{x}_{k+1|t}) \\
\stackrel{\eqref{eq:RAMPC_u_hat},\eqref{eq:increm_c},\eqref{eq:w_tilde_a}}{\leq} &\rho_{\overline{\theta}_t}V_{\delta}(\hat{x}_{k|t},\overline{x}_{k|t})+\tilde{w}_{\hat{\theta}_t-\overline{\theta}_t,\{0\}}(\hat{x}_{k|t},\hat{u}_{k|t})\\
\stackrel{\eqref{eq:def_w_delta_tilde}, \eqref{eq:w_tilde_g}, \eqref{eq:w_tilde_c_delta},\eqref{eq:hat_theta_update_2}}{\leq} &\rho_{\overline{\theta}_t}s_{k|t}+\tilde{w}_{\delta,\widetilde{\Theta}_t,\mathbb{D}}(\overline{x}_{k|t},\overline{u}_{k|t},s_{k|t})\\
\stackrel{\eqref{eq:RAMPC_dyn_s},\eqref{eq:RAMPC_w}}{\leq}& s_{k+1|t}.
\end{align*}
Constraint satisfaction follows directly using 
\begin{align}
\label{eq:constraint_satisfaction_hat_x}
h_j(\hat{x}_{k|t},\hat{u}_{k|t})\stackrel{\eqref{eq:c_j}}\leq h_j(\overline{x}_{k|t},\overline{u}_{k|t})+c_js_{k|t}\stackrel{\eqref{eq:RAMPC_con}}{\leq} 0.
\end{align}
\textbf{Part II. } 
Using the assumed structure of $\kappa$, the candidate input $\hat{u}$ satisfies
 \begin{align}
 \label{eq:hat_u_candidate}
&\hat{u}_{k|t+1}\stackrel{\eqref{eq:RAMPC_u_hat}}{=}\kappa(\hat{x}_{k|t+1},\overline{x}_{k|t+1},\overline{u}_{k|t+1})\\
\stackrel{\eqref{eq:candidate_1}}{=}&\kappa(\hat{x}_{k|t+1},\overline{x}^*_{k+1|t},\overline{u}^*_{k+1|t})
\stackrel{\eqref{eq:RAMPC_u_hat}}{=}\kappa(\hat{x}_{k|t+1},\hat{x}^*_{k+1|t},\hat{u}^*_{k+1|t}),\nonumber
\end{align}
for $k=0,\dots,N-1$.
Denote $\Delta\hat{\theta}_t=\hat{\theta}_{t+1}-\hat{\theta}_t$. We have
\begin{align*}
&V_{\delta}(\hat{x}_{k+1|t+1},\hat{x}^*_{k+2|t})\\
=&V_{\delta}(f_{\hat{\theta}_{t+1}}(\hat{x}_{k|t+1},\hat{u}_{k|t+1}),f_{\hat{\theta}_t}(\hat{x}^*_{k+1|t},\hat{u}^*_{k+1|t}))\\
\stackrel{\eqref{eq:increm_d}}{\leq} &V_{\delta}(f_{\hat{\theta}_{t}}(\hat{x}_{k|t+1},\hat{u}_{k|t+1}),f_{\hat{\theta}_t}(\hat{x}^*_{k+1|t},\hat{u}^*_{k+1|t}))\\
&+V_{\delta}(\hat{x}^*_{k+2|t}+G(\hat{x}_{k+1|t+1},\hat{u}_{k+1|t+1})\Delta\hat{\theta}_t,\hat{x}^*_{k+2|t})\\
\stackrel{\eqref{eq:increm_a},\eqref{eq:increm_c},\eqref{eq:hat_u_candidate}}{\leq} &\rho_{\hat{\theta}_t}V_{\delta}(\hat{x}_{k|t+1},\hat{x}^*_{k+1|t})\\
&+c_{\delta,u}\|G(\hat{x}_{k|t+1},\hat{u}_{k|t+1})\|\|\Delta\hat{\theta}_t\|
\end{align*}
Due to the fact that the projection operator is non-expansive, we have $\|\Delta\hat{\theta}_t\|\leq \|\tilde{\theta}_{t+1}-\hat{\theta}_t\|=\mu\|G(x_t,u_t)^\top \tilde{x}_{1|t}\|$.
Using this bound, constraint satisfaction~\eqref{eq:constraint_satisfaction_hat_x} and the choice of $\mu$~\eqref{eq:mu}, we obtain
\begin{align*}
V_{\delta}(\hat{x}_{k+1|t+1},\hat{x}^*_{k+2|t})\leq &\rho_{\hat{\theta}_t}V_{\delta}(\hat{x}_{k|t+1},\hat{x}^*_{k+1|t})+c_{\delta,u}\|\tilde{x}_{1|t}\|,
\end{align*}
which recursively applied
 ensures
\begin{align*}
V_{\delta}(\hat{x}_{k|t+1},\hat{x}^*_{k+1|t})\leq &c_{\delta,u}\|\tilde{x}_{1|t}\|\sum_{j=0}^{k}\rho_{\hat{\theta}_t}^j\stackrel{\eqref{eq:rho_Lipschitz},\eqref{eq:w_tilde_e},\eqref{eq:hat_theta_update_2}}{\leq} c_1 \|\tilde{x}_{1|t}\|,\\
c_1:=&c_{\delta,u}\sum_{j=0}^{N}(\rho_{\overline{\theta}_0}+L_{\widetilde{\Theta}_0})^j .\nonumber
\end{align*}
for $k=0,\dots,N$ using $\rho_{\hat{\theta}_t}\leq \rho_{\overline{\theta}_0}+L_{\widetilde{\Theta}_0}$  and initial condition $V_{\delta}(\hat{x}_{0|t+1},\hat{x}^*_{1|t})\leq c_{\delta,u}\|\tilde{x}_{1|t}\|$.
Correspondingly, the deviation in state $\Delta\hat{x}_{k|t+1}:=\hat{x}_{k|t+1}-\hat{x}^*_{k+1|t}$and  input $\Delta \hat{u}_{k|t+1}:=\hat{u}_{k|t+1}-\hat{u}^*_{k+1|t}$ satisfy
\begin{align*}
\|\Delta \hat{x}_{k|t+1}\|\stackrel{\eqref{eq:increm_a}}{\leq}& V_{\delta}(\hat{x}_{k|t+1},\hat{x}^*_{k+1|t})/c_{\delta,l}\leq c_1/c_{\delta,l}\|\tilde{x}_{1|t}\|,\\
\|\Delta\hat{u}_{k|t+1}\|\stackrel{\eqref{eq:increm_b},\eqref{eq:hat_u_candidate}}{\leq} &\kappa_{\max} V_{\delta}(\hat{x}_{k|t+1},\hat{x}^*_{k+1|t})\leq \kappa_{\max}c_1\|\tilde{x}_{1|t}\|.
\end{align*}
Using the Cauchy-Schwarz and Young's inequality and the quadratic stage cost $\ell$ this implies for any $\epsilon>0$:
\begin{align*}
\ell(\hat{x}_{k|t+1},\hat{u}_{k|t+1})\leq &(1+\epsilon)\ell(\hat{x}^*_{k+1|t},\hat{u}^*_{k+1|t})+\left(1+\frac{1}{\epsilon}\right)c_2\|\tilde{x}_{1|t}\|^2,\nonumber\\
{c}_2:=&c_1^2(\lambda_{\max}(Q)/c_{\delta,l}^2+\lambda_{\max}(R)\kappa_{\max}^2).
\end{align*}
Similarly, the terminal cost $V_f$ satisfies
\begin{align*}
V_f(\hat{x}_{N|t+1})\leq &(1+\epsilon)V_f(\hat{x}^*_{N+1|t})+\left(1+\frac{1}{\epsilon}\right)c_3\|\tilde{x}_{1|t}\|^2,\nonumber\\
{c}_3:=&c_1^2\lambda_{\max}(P_f)/c_{\delta,l}^2.
\end{align*}
Finally, we note that the quadratic terminal cost $V_f$, compact constraints $\mathcal{Z}$, and the fact that the terminal set has a non-empty interior  ensures that there exists some constant $c_4>0$ such that the value function $V_t$ satisfies
$V_t\leq c_4 \|x_t\|^2$ for any feasible point $x_t$, compare~\cite[Prop. 2.18]{rawlings2012postface}.
The remainder of the proof is analogous to the linear case in~\cite[Thm.~14]{lorenzen2019robust}, using a small enough $\epsilon>0$ and the following bound of the LMS (c.f.~\cite[Lemma~5]{lorenzen2019robust})
\begin{align}
\label{eq:intermediate_LMS}
\sum_{t=0}^{T}\|\tilde{x}_{1|t}\|^2\leq \dfrac{1}{\mu}\|\hat{\theta}_0-\theta^*\|^2+\sum_{t=0}^T\|d_t\|^2.
\end{align} 
\end{proof}
Considering a quadratic stage cost $\ell$, a terminal cost $V_f$ and a terminal set $\mathcal{X}_f$ with a non-empty interior is quite common in MPC, compare e.g.~\cite{rawlings2009model,chen1998quasi}. 
The considered restriction for $\kappa$  is equivalent to assuming that some nonlinear feedback $\kappa_x(x)$ exists, such that the system is incrementally stable.  
While most formulations in the robust MPC literature satisfy the conditions using linear feedbacks $K$~\cite{lorenzen2019robust,Lu2019RAMPC,Koehler2019Adaptive,hewing2019cautious,mckinnon2019learn,limon2005robust,bayer2013discrete}, quasi-LPV based designs can only be applied if the feedback $K(z,v)$ is not parametrized by the input $v$ (c.f.~\cite[Prop.~3]{JK_QINF}), control contraction metrics~\cite{singh2017robust} only satisfy this conditions for linear feedbacks $K$, and boundary layer controllers~\cite{lopez2018adaptive,DynamicTube_Lopez_19}, \cite[Cor.~6]{villanueva2017robust} typically do not satisfy this condition.

 %design - algorithms
%!TEX root = ./Adaptive_Nonlin.tex
%%%%%%%%%%%%%%%%%%%%%%%%%%%%%%%%%%%%%%%%%%%%%%%%%%%%%%%%%%%%%%%%%%%%%%%%%%%%%%%
\subsection{Design procedure and overall algorithm}
\label{sec:design_1}
In the following we provide explicit design procedures satisfying the conditions in  Assumptions~\ref{ass:nominal}, \ref{ass:increm},  \ref{ass:w_tilde}, \ref{ass:term}.
Here the focus is on providing procedures that are simple to apply. 
We first discuss set membership estimation (Ass.~\ref{ass:nominal}) using Alg.~\ref{alg:HC} and Lemma~\ref{lemma:setmembership}.
Then we discuss the design of the incremental Lyapunov function $V_{\delta}$ (Ass.~\ref{ass:increm}), and the computation of $L_{\Delta\Theta,\rho}$ (Prop.~\ref{prop:properties_2_quadratic}).
Then we provide two different design for the uncertainty characterization $\tilde{w}_\delta$ (Ass.~\ref{ass:w_tilde}) in Prop.~\ref{prop:design_w_1} and \ref{prop:design_w_2}. 
Prop.~\ref{prop:terminal_design} provides a simple design for the terminal ingredients (Ass.~\ref{ass:term}).
The overall offline design and online operation are summarized in Algorithms~\ref{alg:offline} and \ref{alg:online}.
\subsubsection*{Set membership estimation} 
In the following, we detail how $\overline{\theta}_t,\tilde{\Theta}_t$ satisfying Assumption~\ref{ass:nominal} can be computed using set membership estimation, similar to~\cite{tanaskovic2014adaptive,lorenzen2019robust,Lu2019RAMPC,Koehler2019Adaptive,lopez2018adaptive,bujarbaruah2019adaptive}. 
Given $(x_{t-1},u_{t-1},x_t)$, the non falsified parameter set is given by the polytope 
\begin{align*}
\Delta_t:=\{\theta\in\mathbb{R}^p|~x_t-f(x_{t-1},u_{t-1}) -G(x_{t-1},u_{t-1})\theta\in\mathbb{D}\}.
\end{align*}
For $t\leq 0$ set $\Delta_{k}=\mathbb{R}^p$. 
For simplicity, we consider hypercubes of the form $\widetilde{\Theta}_{t-1}:=\eta_{t-1}\mathbb{B}_{\infty}$ with some scalar $\eta_{t-1}\geq 0$.
The following algorithm uses a given hypercube $\overline{\theta}_{t-1}\oplus\eta_{t-1}\mathbb{B}_{\infty}$ and the past $M\in\mathbb{N}$ sets $\Delta_{t-k}$ in a moving window fashion to compute a smaller uncertainty set. 
\begin{algorithm}[h]
\caption{Moving window set membership updates} 
\label{alg:HC}
Input: $\{\Delta_{k}\}_{k=t,\dots,t-M-1}$, $\overline{\theta}_{t-1},\eta_{t-1}$. Output: $\overline{\theta}_t$, $\eta_t$
\begin{algorithmic}
\State Define set $\Theta_t^M:=(\overline{\theta}_{t-1}\oplus\eta_{t-1}\mathbb{B}_{\infty})\bigcap_{k=t-M-1}^{t}\Delta_k$.
\State Solve $2 p$ optimization problems ($i=1,\dots,p$):    
\Statex ~$\theta_{i,t,\min}:=\min_{\theta\in\Theta_t^M}e_i^\top \theta$, $\theta_{i,t,\max}:=\max_{\theta\in\Theta_t^M}e_i^\top \theta$,
\Statex ~\text{with unit vector }$e_i=[0,\dots,1,\dots,0]\in\mathbb{R}^p,~[e_i]_i=1$.
\State Set $[\overline{\theta}_t]_i=0.5(\theta_{i,t,\min}+\theta_{i,t,\max})$.
\State Set $\eta_t=0.5\max_i(\theta_{i,t,\max}-\theta_{i,t,\min})$.
\State Project: $\overline{\theta}_t$ on $\overline{\theta}_{t-1}\oplus (\eta_{t-1}-\eta_t)\mathbb{B}_{\infty}$.
\end{algorithmic}
\end{algorithm}

This algorithm first computes the \textit{unique} hyperbox overapproximation ($\theta_{i,t,\min},\theta_{i,t,\max}$) of the non-falsified parameter set $\Theta_t^M$. 
Then in a second step an overapproximating hypercube in form of $\overline{\theta}_t\oplus\eta_t\mathbb{B}_{\infty}$ is computed. 
In case that $\mathbb{D}$ polytoptic, $\Theta_t^M$ is a polytope and  executing Alg.~\ref{alg:HC} requires the solution to $2p$ linear programs (LPs) with $p$ decision variables. 
\begin{lemma}
\label{lemma:setmembership}
Let Assumptions~\ref{ass:model} hold and suppose $\theta^*\in\overline{\theta}_0\oplus\eta_0\mathbb{B}_{\infty}$. 
The recursively updated sets in Algorithm~\ref{alg:HC} satisfy
\begin{align}
\theta^*\in\Theta_t^M\subseteq \overline{\theta}_t\oplus \eta_{t}\mathbb{B}_{\infty}\subseteq \overline{\theta}_{t-1}\oplus\eta_{t-1}\mathbb{B}_{\infty},~\forall t\geq 0. 
\end{align}
\end{lemma}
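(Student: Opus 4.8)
The plan is to prove the chain of inclusions by induction on $t$, exploiting the fully explicit, coordinate-wise structure of Algorithm~\ref{alg:HC}. The base case is immediate from the standing hypothesis $\theta^*\in\overline{\theta}_0\oplus\eta_0\mathbb{B}_{\infty}$ (recall $\Delta_k=\mathbb{R}^p$ for $k\le 0$). For the induction step I would assume $\theta^*\in\overline{\theta}_{t-1}\oplus\eta_{t-1}\mathbb{B}_{\infty}$ and establish, in this order: (a)~$\theta^*\in\Theta_t^M$; (b)~the tightest axis-aligned box $B:=\prod_{i=1}^{p}[\theta_{i,t,\min},\theta_{i,t,\max}]$ around $\Theta_t^M$ obeys $\Theta_t^M\subseteq B\subseteq\overline{\theta}_{t-1}\oplus\eta_{t-1}\mathbb{B}_{\infty}$ and $\eta_t\le\eta_{t-1}$; (c)~the re-centered hypercube still satisfies $\Theta_t^M\subseteq\overline{\theta}_t\oplus\eta_t\mathbb{B}_{\infty}\subseteq\overline{\theta}_{t-1}\oplus\eta_{t-1}\mathbb{B}_{\infty}$.

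For (a), I would use the model structure~\eqref{eq:model_affine}: the residual $x_k-f(x_{k-1},u_{k-1})-G(x_{k-1},u_{k-1})\theta^*$ equals $Ed_{k-1}$, an admissible disturbance value, so $\theta^*\in\Delta_k$ for every $k$; combined with the induction hypothesis this gives $\theta^*\in\Theta_t^M$, which also shows $\Theta_t^M\neq\emptyset$, hence $\theta_{i,t,\min}\le\theta_{i,t,\max}$ and the $2p$ programs are well posed. For (b), since $B$ is by definition the componentwise-smallest box containing $\Theta_t^M$, while $\overline{\theta}_{t-1}\oplus\eta_{t-1}\mathbb{B}_{\infty}$ is itself a box containing $\Theta_t^M$ (by the definition of $\Theta_t^M$ as an intersection), I get $B\subseteq\overline{\theta}_{t-1}\oplus\eta_{t-1}\mathbb{B}_{\infty}$, i.e.\ $[\theta_{i,t,\min},\theta_{i,t,\max}]\subseteq[\,[\overline{\theta}_{t-1}]_i-\eta_{t-1},\,[\overline{\theta}_{t-1}]_i+\eta_{t-1}\,]$ for each $i$; in particular $\theta_{i,t,\max}-\theta_{i,t,\min}\le 2\eta_{t-1}$, so $\eta_t\le\eta_{t-1}$ and the box $\overline{\theta}_{t-1}\oplus(\eta_{t-1}-\eta_t)\mathbb{B}_{\infty}$ onto which we project is nonempty.

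For (c), write $w_i:=0.5(\theta_{i,t,\max}-\theta_{i,t,\min})\le\eta_t$ and let $\overline{\theta}_t'$ denote the pre-projection center, $[\overline{\theta}_t']_i=0.5(\theta_{i,t,\min}+\theta_{i,t,\max})$, so $\Theta_t^M\subseteq B=\overline{\theta}_t'\oplus\prod_i[-w_i,w_i]$. Projection onto the axis-aligned box $\overline{\theta}_{t-1}\oplus(\eta_{t-1}-\eta_t)\mathbb{B}_{\infty}$ clamps each coordinate independently, so $[\overline{\theta}_t]_i=[\overline{\theta}_t']_i$ unless $[\overline{\theta}_t']_i$ falls outside $[\,[\overline{\theta}_{t-1}]_i-(\eta_{t-1}-\eta_t),\,[\overline{\theta}_{t-1}]_i+(\eta_{t-1}-\eta_t)\,]$; in that case --- say $[\overline{\theta}_t']_i$ overshoots the upper endpoint --- the shift equals $[\overline{\theta}_t']_i-[\overline{\theta}_{t-1}]_i-(\eta_{t-1}-\eta_t)$, which by the coordinate bound $\theta_{i,t,\max}=[\overline{\theta}_t']_i+w_i\le[\overline{\theta}_{t-1}]_i+\eta_{t-1}$ from (b) is at most $\eta_t-w_i$ (the lower side is symmetric). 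Hence $|[\overline{\theta}_t]_i-[\overline{\theta}_t']_i|\le\eta_t-w_i$ for all $i$, so every $\theta\in B$ satisfies $|[\theta]_i-[\overline{\theta}_t]_i|\le w_i+(\eta_t-w_i)=\eta_t$, giving $\Theta_t^M\subseteq B\subseteq\overline{\theta}_t\oplus\eta_t\mathbb{B}_{\infty}$; and since $\overline{\theta}_t\in\overline{\theta}_{t-1}\oplus(\eta_{t-1}-\eta_t)\mathbb{B}_{\infty}$ by construction, $\overline{\theta}_t\oplus\eta_t\mathbb{B}_{\infty}\subseteq\overline{\theta}_{t-1}\oplus\eta_{t-1}\mathbb{B}_{\infty}$, closing the induction.

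The step I expect to be the real obstacle is (c): the final, projection line of Algorithm~\ref{alg:HC} re-centers the hypercube, and one has to make sure this does not destroy $\Theta_t^M\subseteq\overline{\theta}_t\oplus\eta_t\mathbb{B}_{\infty}$. The point that makes it work is exactly that the tight box $B$ around $\Theta_t^M$ is already contained in the previous hypercube $\overline{\theta}_{t-1}\oplus\eta_{t-1}\mathbb{B}_{\infty}$, which caps the per-coordinate clamping by precisely the slack $\eta_t-w_i$ left over in the new hypercube; everything else is routine bookkeeping with Minkowski sums of axis-aligned boxes.
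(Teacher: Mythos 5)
Your proof is correct, and it is exactly the standard argument the paper delegates to the reference (\cite[Lemma~1]{Koehler2019Adaptive}): $\theta^*$ is never falsified so it survives the intersection, the tight box around $\Theta_t^M$ sits inside the previous hypercube, and the coordinate-wise clamping in the projection step moves the center by at most the slack $\eta_t-w_i$, so containment in $\overline{\theta}_t\oplus\eta_t\mathbb{B}_\infty$ and nestedness in $\overline{\theta}_{t-1}\oplus\eta_{t-1}\mathbb{B}_\infty$ both survive. In particular, your handling of the projection step, which the paper leaves implicit, is the right (and only nontrivial) part of the argument.
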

\begin{proof}
The proof is analogous to~\cite[Lemma~1]{Koehler2019Adaptive}.
\end{proof}
Overall, the complexity of Alg.~\ref{alg:HC} is typically small compared to the MPC optimization problem~\eqref{eq:RAMPC}.  
The usage of multiple measurements in the moving window approach often allows for a significant reduction in the uncertainty set while keeping the simple hypercube parametrization. 
Similar ideas for parameter estimation using moving window updates are used in~\cite[Rk.~4]{lorenzen2019robust}, \cite{chisci1998block}. 
Set membership updates with polytopes are considered in  \cite{tanaskovic2014adaptive,lorenzen2019robust,Lu2019RAMPC,Koehler2019Adaptive,lopez2018adaptive,bujarbaruah2019adaptive}, while the nonlinear approaches in~\cite{adetola2011robust,guay2015robust,gonccalves2016robust,dahliwal2014set,adetola2014adaptive} typically consider some ellipsoidal sets $\widetilde{\Theta}_t=\{\theta|~\|\theta\|_{\Sigma_t}\leq 1\}$ corresponding to a sublevel set of a Lyapunov function of the estimation scheme. 

\subsubsection*{Incremental Lyapunov function and contraction rate $\rho_\theta$}
A simple class of incremental Lyapunov functions (Ass.~\ref{ass:increm}) is given by $V_{\delta}(x,z)=\|x-z\|_{P(z)}$ and feedback $\kappa(x,z,v)=v+K(x)x-K(z)z$, with $K,P$ nonlinearly parametrized. 
Similar functions are also employed in the numerical examples in~\cite{Robust_TAC_19,Soloperto2019Collision,JK_periodic_automatica,JK_QINF}, which can be computed offline using LMIs and a quasi-LPV parametrization~\cite{JK_QINF}. 
For simplicity, in the following we only consider ellipsoidal tubes $V_{\delta}(x,z)=\|x-z\|_P$ to simplify the design. 
For a given matrix $P\succ 0$, condition~\eqref{eq:increm_a} is directly satisfied with $c_{\delta,l}:=\sqrt{\lambda_{\min}(P)}$, $c_{\delta,u}:=\sqrt{\lambda_{\max}(P)}$.
Conditions~\eqref{eq:increm_d}--\eqref{eq:increm_e} follow from the triangular inequality with $L_{\delta}=0$. 
Condition~\eqref{eq:increm_b} holds with some $\kappa_{\max}\leq \max_{z\in\mathcal{Z}_x}\|K(z)\|/c_{\delta,l}$. 
Condition~\eqref{eq:increm_c} requires the computation of the contraction rate $\rho_{\theta}$ offline, which is similar to the computation of a Lipschitz constant.

The following proposition shows how the computation of $L_{\Delta\Theta,\rho}$ in Prop.~\ref{prop:properties_2} simplifies in this case. 
\begin{proposition}
\label{prop:properties_2_quadratic}
Let Assumption~\ref{ass:model} hold. 
Suppose Assumption~\ref{ass:increm} holdy with a quadratic incremental Lyapunov function $V_{\delta}(x,z)=\|x-z\|_P$. 
For any set $\Delta \Theta=\Delta\eta\mathbb{B}_{\infty}$, $\Delta\eta\geq 0$, and any $\theta^+\in\theta\oplus\Delta\Theta$, Inequality~\eqref{eq:rho_Lipschitz} holds with  $L_{\rho,\theta,\Delta{\Theta}}:=\Delta\eta L_{\mathbb{B},\rho}$, 
\begin{align}
\label{eq:L_B_rho}
L_{\mathbb{B},\rho}:=\max_j \max_{(x,z,v)\in\Psi}\dfrac{\|(G(x,\kappa(x,z,v))-G(z,v))\theta^j\|_P}{\|x-z\|_P},
\end{align}
with $\theta^j\in\text{vert}(\mathbb{B}_{\infty})$, $j=1,\dots,2^p$. 
\end{proposition}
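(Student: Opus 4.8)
The plan is to repeat the computation in the proof of Proposition~\ref{prop:properties_2}, but exploiting that $V_{\delta}(x,z)=\|x-z\|_P$ is an honest norm. For this choice one has $L_{\delta}=0$ and both~\eqref{eq:increm_d} and~\eqref{eq:increm_e} hold with equality (translation invariance together with the triangle inequality), so the intricate continuity bound~\eqref{eq:increm_f} used in the general proof collapses to a plain triangle inequality, and --- crucially --- the factors $c_{\delta,u}$ that appear in~\eqref{eq:L_rho_def} can be avoided by estimating directly in the $P$-norm instead of passing through the Euclidean norm via~\eqref{eq:increm_a}. This is what produces the sharper constant~\eqref{eq:L_B_rho}.

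First I would fix $(x,z,v)\in\Psi$, abbreviate $u_x:=\kappa(x,z,v)$ and $\Delta\theta:=\theta^+-\theta$, and use the affine structure $f_{\theta^+}(\cdot)=f_{\theta}(\cdot)+G(\cdot)\Delta\theta$ from Assumption~\ref{ass:model} together with the triangle inequality for $\|\cdot\|_P$ to obtain
\begin{align*}
V_{\delta}\big(f_{\theta^+}(x,u_x),f_{\theta^+}(z,v)\big)
&\le V_{\delta}\big(f_{\theta}(x,u_x),f_{\theta}(z,v)\big)+\big\|\big(G(x,u_x)-G(z,v)\big)\Delta\theta\big\|_P\\
&\le \rho_{\theta}\,V_{\delta}(x,z)+\big\|\big(G(x,u_x)-G(z,v)\big)\Delta\theta\big\|_P,
\end{align*}
where the second inequality is the definition~\eqref{eq:rho_theta} of $\rho_{\theta}$ (recalling $\theta\in\Theta_0$).

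It then remains to bound the last term by $\Delta\eta\,L_{\mathbb{B},\rho}\,V_{\delta}(x,z)$. Since $\Delta\Theta=\Delta\eta\,\mathbb{B}_{\infty}$, I would write $\Delta\theta=\Delta\eta\,\xi$ with $\xi\in\mathbb{B}_{\infty}$ and express $\xi=\sum_{j=1}^{2^p}\lambda_j\theta^j$ as a convex combination of the vertices $\theta^j\in\text{vert}(\mathbb{B}_{\infty})$. By linearity and the triangle inequality,
\begin{align*}
\big\|\big(G(x,u_x)-G(z,v)\big)\Delta\theta\big\|_P
\le \Delta\eta\sum_j\lambda_j\big\|\big(G(x,u_x)-G(z,v)\big)\theta^j\big\|_P
\le \Delta\eta\,\max_j\big\|\big(G(x,u_x)-G(z,v)\big)\theta^j\big\|_P,
\end{align*}
and dividing by $V_{\delta}(x,z)=\|x-z\|_P$ and taking the supremum over $(x,z,v)\in\Psi$ identifies the right-hand side with $\Delta\eta\,L_{\mathbb{B},\rho}\,V_{\delta}(x,z)$. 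Combining the two displays gives $V_{\delta}(f_{\theta^+}(x,u_x),f_{\theta^+}(z,v))\le(\rho_{\theta}+\Delta\eta\,L_{\mathbb{B},\rho})V_{\delta}(x,z)$ for every $(x,z,v)\in\Psi$; taking the maximum over $\Psi$ and invoking~\eqref{eq:rho_theta} for $\theta^+\in\theta\oplus\Delta\Theta\subseteq\Theta_0$ yields $\rho_{\theta^+}\le\rho_{\theta}+\Delta\eta\,L_{\mathbb{B},\rho}$, which is~\eqref{eq:rho_Lipschitz} with $L_{\rho,\theta,\Delta\Theta}=\Delta\eta\,L_{\mathbb{B},\rho}$.

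The only delicate point --- and the nearest thing to an obstacle --- is the well-definedness and finiteness of the supremum~\eqref{eq:L_B_rho}: when $x=z$ the incremental feedback satisfies $\kappa(z,z,v)=v$ by~\eqref{eq:increm_b} (since $V_{\delta}(z,z)=0$), so the numerator vanishes and the quotient is of the form $0/0$. I would handle this either by interpreting the quotient as its $0$ limit on the diagonal, or, equivalently, by first bounding $\|(G(x,u_x)-G(z,v))\theta^j\|_P\le L\|x-z\|$ via Lipschitz continuity of $G$ and $\kappa(\cdot,z,v)$ on the compact set underlying $\Psi$ and then using $\|x-z\|\le\|x-z\|_P/\sqrt{\lambda_{\min}(P)}$; this shows $L_{\mathbb{B},\rho}<\infty$. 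Everything else is a direct specialization of Proposition~\ref{prop:properties_2}.
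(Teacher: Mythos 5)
Your argument is correct and is essentially the paper's own proof: the same triangle inequality in the $P$-norm exploiting the affine parameter dependence, the contraction bound $\rho_{\theta}\|x-z\|_P$ for the nominal term, and the reduction of the hypercube perturbation to its vertices via convexity, yielding $\rho_{\theta^+}\leq\rho_{\theta}+\Delta\eta\,L_{\mathbb{B},\rho}$. Your additional remarks on the diagonal case $x=z$ and the finiteness of the supremum in~\eqref{eq:L_B_rho} are welcome refinements of points the paper leaves implicit, but they do not change the route.
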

\begin{proof}
The statement directly follows using the contraction property~\eqref{eq:increm_c}, the triangular inequality and linearity in the parameters
\begin{align*}
&\|f_{\theta^+}(x,\kappa(x,z,v))-f_{\theta^+}(z,v)\|_P\\
\leq &\|f_{\theta}(x,\kappa(x,z,v))-f_{\theta}(z,v)\|_P\\
&+\|(G(x,\kappa(x,z,v))-G(z,v))\Delta\theta\|_P\\
\stackrel{\eqref{eq:increm_c}}{\leq} &\rho_{\theta}\|x-z\|_P+\Delta\eta\max_{j}\|(G(x,\kappa(x,z,v))-G(z,v))\Delta\theta^j\|_P\\
\stackrel{\eqref{eq:L_B_rho}}{\leq} &(\rho_{\theta}+\Delta\eta L_{\mathbb{B},\rho})\|x-z\|_P.
\end{align*}
\end{proof}
Again, the complexity of computing~\eqref{eq:L_B_rho} is similar to the computation of a Lipschitz constant ($2^p$ times). 

\subsubsection*{Uncertainty Description}
In the following we provide two formulations for $\tilde{w}$ satisfying Assumption~\ref{ass:w_tilde} with different complexity and conservatism.
\begin{proposition}
\label{prop:design_w_1}
Let Assumption~\ref{ass:model} hold. 
Suppose Assumption~\ref{ass:nominal} holds with hypercubes $\widetilde{\Theta}_t=\eta_t\mathbb{B}_\infty$ and Assumption~\ref{ass:increm} holds with a quadratic incremental Lyapunov function $V_{\delta}(x,z)=\|x-z\|_P$. 
Then the following function satisfies Assumption~\ref{ass:w_tilde}
\begin{subequations}
\label{eq:design_w_1}
\begin{align}
\label{eq:design_w_1_1}
\tilde{w}_{\tilde{\Theta}_t,\mathbb{D}}(z,v):=&\eta_t\max_j \|G(z,v)\theta^j\|_P+\overline{d},\\
\label{eq:design_w_1_2}
\overline{d}:=&\max_{d\in\mathbb{D}}\|d\|_P,\quad 
L_{\tilde{\Theta}_t}:=\eta_t L_{\mathbb{B},\rho},
\end{align}
\end{subequations}
with $\theta^j\in\text{vert}(\mathbb{B}_{\infty})$. 
\end{proposition}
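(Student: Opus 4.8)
The plan is to verify each of the five conditions \eqref{eq:w_tilde_a}--\eqref{eq:w_tilde_e} in Assumption~\ref{ass:w_tilde} directly from the explicit formulas~\eqref{eq:design_w_1}, using linearity of $G(z,v)\theta$ in $\theta$ and the fact that $\|\cdot\|_P$ is a genuine norm (so $L_\delta=0$ and the triangle inequality is available). First I would establish~\eqref{eq:w_tilde_a}: for any model mismatch $d_w=d+G(z,v)\tilde\theta$ with $d\in\mathbb{D}$ and $\tilde\theta\in\widetilde{\Theta}_t=\eta_t\mathbb{B}_\infty$, we have $V_\delta(\tilde z+d_w,\tilde z)=\|d_w\|_P\leq\|G(z,v)\tilde\theta\|_P+\|d\|_P$; since $\tilde\theta/\eta_t\in\mathbb{B}_\infty$, convexity of $\theta\mapsto\|G(z,v)\theta\|_P$ gives $\|G(z,v)\tilde\theta\|_P\leq\eta_t\max_j\|G(z,v)\theta^j\|_P$, and $\|d\|_P\leq\overline{d}$, which yields exactly $\tilde{w}_{\widetilde{\Theta}_t,\mathbb{D}}(z,v)$. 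Note this also identifies $\overline{d}$ in the statement with the quantity $\min_{(x,u)\in\mathcal{Z}}\tilde{w}_{\{0\},\mathbb{D}}(x,u)$ used earlier, up to the harmless fact that \eqref{eq:design_w_1_2} uses a state-independent $\max_{d\in\mathbb{D}}\|d\|_P$.

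Next I would check the Lipschitz bound~\eqref{eq:w_tilde_b}. Since $\tilde{w}_{\widetilde{\Theta}_t,\mathbb{D}}(x,\kappa(x,z,v))-\tilde{w}_{\widetilde{\Theta}_t,\mathbb{D}}(z,v)=\eta_t\big(\max_j\|G(x,\kappa(x,z,v))\theta^j\|_P-\max_j\|G(z,v)\theta^j\|_P\big)$ and $\max$ is $1$-Lipschitz, this is bounded by $\eta_t\max_j\big(\|G(x,\kappa(x,z,v))\theta^j\|_P-\|G(z,v)\theta^j\|_P\big)\leq\eta_t\max_j\|(G(x,\kappa(x,z,v))-G(z,v))\theta^j\|_P$ by the reverse triangle inequality. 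Dividing and multiplying by $\|x-z\|_P=V_\delta(x,z)$ and recognizing the quotient as exactly the expression defining $L_{\mathbb{B},\rho}$ in~\eqref{eq:L_B_rho} gives $\leq\eta_t L_{\mathbb{B},\rho}V_\delta(x,z)=L_{\widetilde{\Theta}_t}V_\delta(x,z)$, as required, with the understanding that $(x,\kappa(x,z,v))$ lies in the domain $\Psi$ so that~\eqref{eq:L_B_rho} applies.

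The monotonicity conditions~\eqref{eq:w_tilde_c}--\eqref{eq:w_tilde_d} I would handle via the additivity of the hypercube scaling: if $\widetilde{\Theta}_t=\eta_t\mathbb{B}_\infty$, $\widetilde{\Theta}^+=\eta^+\mathbb{B}_\infty$, $\Delta\widetilde{\Theta}=\Delta\eta\,\mathbb{B}_\infty$ with $\widetilde{\Theta}^+\oplus\Delta\widetilde{\Theta}\subseteq\widetilde{\Theta}$, then $\eta^++\Delta\eta\leq\eta_t$, so $\tilde{w}_{\widetilde{\Theta}_t,\mathbb{D}}(z,v)=\eta_t\max_j\|G(z,v)\theta^j\|_P+\overline{d}\geq(\eta^++\Delta\eta)\max_j\|G(z,v)\theta^j\|_P+\overline{d}=\tilde{w}_{\widetilde{\Theta}^+,\mathbb{D}}(z,v)+\tilde{w}_{\Delta\widetilde{\Theta},\{0\}}(z,v)$, since the $\{0\}$-disturbance version contributes no $\overline{d}$ term; this is~\eqref{eq:w_tilde_c}. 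Similarly $L_{\widetilde{\Theta}_t}=\eta_t L_{\mathbb{B},\rho}\geq(\eta^++\Delta\eta)L_{\mathbb{B},\rho}=L_{\widetilde{\Theta}^+}+L_{\Delta\widetilde{\Theta}}$, giving~\eqref{eq:w_tilde_d}. Finally~\eqref{eq:w_tilde_e}: by Proposition~\ref{prop:properties_2_quadratic} we have $L_{\rho,\overline\theta,\Delta\widetilde{\Theta}}=\Delta\eta\,L_{\mathbb{B},\rho}$, which equals $L_{\Delta\widetilde{\Theta}}$ by definition~\eqref{eq:design_w_1_2}, so~\eqref{eq:w_tilde_e} holds with equality. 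I expect the only mild subtlety — hardly an obstacle — is to argue carefully that the convexity/Lipschitz manipulations of the $\max_j$ over vertices are valid, i.e. that the maximum of the convex function $\theta\mapsto\|G\theta\|_P$ over a hypercube is attained at a vertex, and to note that all invocations of~\eqref{eq:L_B_rho} require the relevant triples to lie in $\Psi$, which is exactly the hypothesis under which~\eqref{eq:w_tilde_b} is required to hold.
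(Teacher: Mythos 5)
Your proposal is correct and follows essentially the same route as the paper's proof: bound the mismatch by the vertex maximum plus $\overline{d}$ for~\eqref{eq:w_tilde_a}, use the (reverse) triangle inequality and the definition of $L_{\mathbb{B},\rho}$ in~\eqref{eq:L_B_rho} for~\eqref{eq:w_tilde_b}, and exploit affinity/linearity in $\eta$ (with Prop.~\ref{prop:properties_2_quadratic}) for~\eqref{eq:w_tilde_c}--\eqref{eq:w_tilde_e}. Your explicit handling of $\eta^++\Delta\eta\leq\eta_t$ and of the domain restriction to $\Psi$ merely spells out details the paper leaves implicit.
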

\begin{proof}
Condition~\eqref{eq:w_tilde_a} follows with
\begin{align*}
V_{\delta}(\tilde{z}+d_w,\tilde{z})=\|d_w\|_P\leq \max_{d\in\mathbb{D},\theta\in\eta_t\mathbb{B}_\infty}\|G(z,v)\theta+d\|_P\\
\leq \max_{d\in\mathbb{D}}\|d\|_P+\eta_t\max_j\|G(z,v)\theta^j\|_P\stackrel{\eqref{eq:design_w_1}}{=}\tilde{w}_{\tilde{\Theta}_t,\mathbb{D}}(z,v).
\end{align*}
The Lipschitz bound~\eqref{eq:w_tilde_b} holds with
\begin{align*}
&\tilde{w}_{\tilde{\Theta}_t,\mathbb{D}}(x,\kappa(x,z,v))-\overline{d}\\
=&\eta_t \max_i \|G(x,\kappa(x,z,v))\theta^i\|_P\\
\leq &\eta_t\max_i(\|(G(x,\kappa(x,z,v))-G(z,v))\theta^i\|_P\\
&+\|G(z,v)\theta^i\|_P)\\
\stackrel{\eqref{eq:L_B_rho},\eqref{eq:design_w_1_1}}{\leq} &\eta_t L_{\mathbb{B},\rho}\|x-z\|_P+\tilde{w}_{\widetilde{\Theta}_t,\mathbb{D}}(z,v)-\overline{d}.
\end{align*}
Condition~\eqref{eq:w_tilde_c} is trivially satisfied with equality using the fact that $\tilde{w}_{\widetilde{\Theta}_t,\mathbb{D}}$ is affine in $\eta$ and $\overline{d}$, with $\tilde{w}_{\widetilde{\Theta}_t,\{0\}}(z,v)=\tilde{w}_{\widetilde{\Theta}_t,\mathbb{D}}(z,v)-\overline{d}$.
Similarly, condition~\eqref{eq:w_tilde_d} is satisfied with equality using the fact that $L_{\widetilde{\Theta}}$ is linear in $\eta$. 
Condition~\eqref{eq:w_tilde_e} also holds with equality by definition~\eqref{eq:design_w_1_2}.
\end{proof}
\begin{proposition}
\label{prop:design_w_2}
Let Assumption~\ref{ass:model} hold. 
Suppose Assumption~\ref{ass:nominal} holds with hypercubes $\widetilde{\Theta}_t=\eta_t\mathbb{B}_\infty$ and Assumption~\ref{ass:increm} holds with a quadratic incremental Lyapunov function $V_{\delta}(x,z)=\|x-z\|_P$. 
Then the following function satisfies Assumption~\ref{ass:w_tilde}
\begin{subequations}
\label{eq:design_w_2}
\begin{align}
\label{eq:design_w_2_1}
&\tilde{w}_{\tilde{\Theta}_t,\mathbb{D}}(z,v):=\eta_tc_{\mathbb{B}} \|G(z,v)\|_P+\overline{d},\\
\label{eq:design_w_2_2}
&\overline{d}:=\max_{d\in\mathbb{D}}\|d\|_P,\quad 
L_{\tilde{\Theta}_t}:=\eta_t L_{\mathbb{B}},\quad c_{\mathbb{B}}:=\sqrt{p},\\
\label{eq:design_w_2_3}
&L_{\mathbb{B}}:=c_{\mathbb{B}}\max_{(x,z,v)\in\Psi}\dfrac{\|G(x,\kappa(x,z,v))-G(z,v)\|_P}{\|x-z\|_P}.
\end{align}
\end{subequations}
\end{proposition}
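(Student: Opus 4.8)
The plan is to verify the five conditions \eqref{eq:w_tilde_a}--\eqref{eq:w_tilde_e} of Assumption~\ref{ass:w_tilde} in turn, mirroring the proof of Proposition~\ref{prop:design_w_1} but replacing the vertex-wise quantity $\max_j\|G(z,v)\theta^j\|_P$ by the coarser, vertex-free bound $c_{\mathbb{B}}\|G(z,v)\|_P$. The one elementary fact that drives everything is the norm equivalence $\|\theta\|_2\le\sqrt{p}\,\|\theta\|_\infty=c_{\mathbb{B}}\|\theta\|_\infty$, which for any $\tilde{\theta}\in\eta_t\mathbb{B}_\infty$ gives $\|G(z,v)\tilde{\theta}\|_P\le\|G(z,v)\|_P\|\tilde{\theta}\|_2\le\eta_t c_{\mathbb{B}}\|G(z,v)\|_P$. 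Together with $L_\delta=0$ (so that \eqref{eq:increm_d}--\eqref{eq:increm_e} reduce to the triangle inequality for $\|\cdot\|_P$) and $V_\delta(x,z)=\|x-z\|_P$, this is essentially all the machinery required.

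For \eqref{eq:w_tilde_a} I would write $V_\delta(\tilde{z}+d_w,\tilde{z})=\|d_w\|_P$ with $d_w=d+G(z,v)\tilde{\theta}$, $d\in\mathbb{D}$, $\tilde{\theta}\in\eta_t\mathbb{B}_\infty$, and bound it by the triangle inequality and the estimate above by $\overline{d}+\eta_t c_{\mathbb{B}}\|G(z,v)\|_P=\tilde{w}_{\widetilde{\Theta}_t,\mathbb{D}}(z,v)$. For \eqref{eq:w_tilde_b} I would subtract $\overline{d}$ from both sides, insert $\pm G(z,v)$ inside the norm of $G(x,\kappa(x,z,v))$, apply the triangle inequality, and use the definition \eqref{eq:design_w_2_3} of $L_{\mathbb{B}}$, obtaining the term $\eta_t c_{\mathbb{B}}\|G(x,\kappa(x,z,v))-G(z,v)\|_P\le\eta_t L_{\mathbb{B}}\|x-z\|_P=L_{\widetilde{\Theta}_t}V_\delta(x,z)$ plus the leftover $\tilde{w}_{\widetilde{\Theta}_t,\mathbb{D}}(z,v)-\overline{d}$. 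Since $\tilde{w}_{\widetilde{\Theta}_t,\mathbb{D}}$ is affine in $\eta_t$ with $\tilde{w}_{\widetilde{\Theta}_t,\{0\}}(z,v)=\eta_t c_{\mathbb{B}}\|G(z,v)\|_P$ and $L_{\widetilde{\Theta}_t}=\eta_t L_{\mathbb{B}}$ is linear in $\eta_t$, the monotonicity conditions \eqref{eq:w_tilde_c} and \eqref{eq:w_tilde_d} follow immediately once one observes that for hypercubes the inclusion $\widetilde{\Theta}^+\oplus\Delta\widetilde{\Theta}\subseteq\widetilde{\Theta}$ is equivalent to $\eta^++\Delta\eta\le\eta$ for the radii (and holds with equality when $\Delta\widetilde{\Theta}=\widetilde{\Theta}\ominus\widetilde{\Theta}^+$).

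For \eqref{eq:w_tilde_e} I would invoke Proposition~\ref{prop:properties_2_quadratic}, which gives $L_{\rho,\overline{\theta},\Delta\widetilde{\Theta}}=\Delta\eta\,L_{\mathbb{B},\rho}$ with $L_{\mathbb{B},\rho}$ a maximum over the vertices $\theta^j$ of $\mathbb{B}_\infty$. Since each such vertex has $\|\theta^j\|_\infty=1$ and hence $\|\theta^j\|_2\le\sqrt{p}=c_{\mathbb{B}}$, we get $\|(G(x,\kappa(x,z,v))-G(z,v))\theta^j\|_P\le c_{\mathbb{B}}\|G(x,\kappa(x,z,v))-G(z,v)\|_P$; taking the maximum over $j$ and $(x,z,v)\in\Psi$ yields $L_{\mathbb{B},\rho}\le L_{\mathbb{B}}$, hence $L_{\rho,\overline{\theta},\Delta\widetilde{\Theta}}\le\Delta\eta\,L_{\mathbb{B}}=L_{\Delta\widetilde{\Theta}}$, which is \eqref{eq:w_tilde_e}. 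There is no real obstacle here: every step is the triangle inequality plus the $\ell_\infty$--$\ell_2$ conversion factor $\sqrt{p}$. The only care needed is bookkeeping $c_{\mathbb{B}}$ consistently -- it appears both in the overapproximation $\|G(z,v)\tilde{\theta}\|_P\le c_{\mathbb{B}}\eta_t\|G(z,v)\|_P$ and in the definition of $L_{\mathbb{B}}$ -- so that the factors cancel and \eqref{eq:w_tilde_b}, \eqref{eq:w_tilde_e} come out with exactly the claimed constants. It is worth noting that this design is strictly more conservative than Proposition~\ref{prop:design_w_1} (because $\max_j\|G\theta^j\|_P\le\sqrt{p}\,\|G\|_P$ in general), but it removes the $2^p$ vertex evaluations of $\|G(\overline{x},\overline{u})\theta^j\|_P$ from the online problem \eqref{eq:RAMPC}, which is precisely the complexity/conservatism trade-off the framework is meant to expose.
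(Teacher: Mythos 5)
Your proposal is correct and follows essentially the same route as the paper: the key observation $\|G\theta\|_P\le\|G\|_P\|\theta\|_2\le\sqrt{p}\,\|G\|_P$ for $\theta\in\mathbb{B}_\infty$ makes \eqref{eq:design_w_2} a conservative overapproximation of \eqref{eq:design_w_1}, from which \eqref{eq:w_tilde_a} and \eqref{eq:w_tilde_e} follow (your explicit bound $L_{\mathbb{B},\rho}\le L_{\mathbb{B}}$ is exactly the implicit step in the paper), \eqref{eq:w_tilde_b} follows by the same $\pm G(z,v)$ triangle-inequality argument as in Proposition~\ref{prop:design_w_1}, and \eqref{eq:w_tilde_c}--\eqref{eq:w_tilde_d} follow from affinity/linearity in $\eta$. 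You merely spell out details the paper delegates to Proposition~\ref{prop:design_w_1}, so no substantive difference.
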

\begin{proof}
First, note that for any $\theta\in\mathbb{B}_{\infty}$, we have $\|G\cdot \theta\|_P\leq \|G\|_P\|\theta\|\leq \sqrt{p}\|G\|_P=:c_{\mathbb{B}}\|G\|_P$.
Thus, satisfaction of condition~\eqref{eq:w_tilde_a} and \eqref{eq:w_tilde_e} directly follows Prop.~\ref{prop:design_w_1}, using the fact the that the formulas in~\eqref{eq:design_w_2} for $\tilde{w},L_{\tilde{\Theta}}$ are conservative over approximations of the formulas in~\eqref{eq:design_w_1}. 
Satisfaction of condition~\eqref{eq:w_tilde_b} follows similar to the proof in Prop.~\ref{prop:design_w_1}. 
As in Prop.~\ref{prop:design_w_1}, Conditions~\eqref{eq:w_tilde_c} and \eqref{eq:w_tilde_d} hold since $\tilde{w}$ is affine  and $L_{\widetilde{\Theta}}$  linear in $\eta$. 
\end{proof}
Prop.~\ref{prop:design_w_1} offers a less conservative design, while the design in Prop.~\ref{prop:design_w_2} has a smaller computational complexity. 
In particular, implementing constraint~\eqref{eq:RAMPC_w} requires $2^{p}/2\cdot N$ nonlinear constraints\footnote{%
Note that due to symmetry only half the $2^p$ vertices $\theta^j$ need to be enumerated to evaluate~\eqref{eq:design_w_1}.
} using the formula in~\eqref{eq:design_w_1}, while the simpler formula in~\eqref{eq:design_w_2} only requires $N$ nonlinear constraints. 
In case of linear dynamics with $V_{\delta}$ polytopic, a similar design can be used which can be implemented using \textit{linear} inequality constraints, compare~\cite{Koehler2019Adaptive}.

\subsubsection*{Terminal ingredients}
The following proposition provides a simple design satisfying all the conditions in Assumption~\ref{ass:term}. 
\begin{proposition}
\label{prop:terminal_design}
Let Assumptions~\ref{ass:model} and \ref{ass:w_tilde} hold. 
Suppose Assumption~\ref{ass:increm} holds with a quadratic incremental Lyapunov function $V_{\delta}(x,z)=\|x-z\|_P$.
Suppose there exists a steady-state $(x_s,u_s)\in\mathcal{Z}$ satisfying $f_{\theta}(x_s,u_s)=x_s$ for all $\theta\in\overline{\theta}_0\oplus\tilde{\Theta}_0$. % and denotes $$.  
Assume further that the following inequality holds
\begin{subequations}
\begin{align}
\label{eq:cond_term}
\rho_{\overline{\theta}_0}+L_{\widetilde{\Theta}_0}+c_{x_s}\tilde{w}_{\tilde{\Theta}_0,\mathbb{D}}(x_s,u_s)\leq 1,\\
\label{eq:c_xs}
c_{x_s}:=\min\{ -h_j(x_s,u_s)/c_j,\delta_{loc}\}.
\end{align}
Then the terminal set
\begin{align}
\label{eq:term_set}
\mathcal{X}_{f}=\{(x,s)\in\mathbb{R}^{n+1}|~(V_{\delta}(x,x_s)+s)\leq c_{x_s}\},
\end{align}
with terminal control law $k_f(x)=\kappa(x,x_s,u_s)$  and $\overline{s}=\delta_{loc}$, $\overline{w}=\infty$ satisfies conditions~\eqref{eq:term_1}--\eqref{eq:term_4} in Assumption~\ref{ass:term}.
Suppose further that there exists a constant $c>0$, such that $\ell(x,u)\leq c\|(x-x_s,u-u_s)\|^2$, $\forall (x,u)\in\mathcal{Z}$. 
Then there exists a constant $\alpha>0$, such that the terminal cost 
\begin{align}
\label{eq:term_cost}
V_f(x):=V^2_{\delta}(x,x_s)\frac{\alpha }{1-(\rho_{\overline{\theta}_0}+L_{\widetilde{\Theta}_0})^2}.
\end{align}
satisfies Assumption~\ref{ass:term}.
\end{subequations}
\end{proposition}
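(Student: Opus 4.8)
The plan is to verify the conditions of Assumption~\ref{ass:term} one by one for the proposed terminal ingredients, writing throughout $\gamma:=\rho_{\overline{\theta}_0}+L_{\widetilde{\Theta}_0}$, which is $<1$ (this is forced by \eqref{eq:cond_term} since $\tilde{w}_{\widetilde{\Theta}_0,\mathbb{D}}(x_s,u_s)\geq\overline{d}\geq 0$, strictly so whenever there is any disturbance or model mismatch at $(x_s,u_s)$, and it is needed for \eqref{eq:term_cost} to be well defined). First I would record two monotonicity facts valid for every admissible pair with $\overline{\theta}\oplus\widetilde{\Theta}\subseteq\overline{\theta}_0\oplus\widetilde{\Theta}_0$: with $\Delta\widetilde{\Theta}:=\widetilde{\Theta}_0\ominus\widetilde{\Theta}$ one has $\overline{\theta}-\overline{\theta}_0\in\Delta\widetilde{\Theta}$ and $\widetilde{\Theta}\oplus\Delta\widetilde{\Theta}\subseteq\widetilde{\Theta}_0$, so Proposition~\ref{prop:properties_2} with \eqref{eq:w_tilde_e} gives $\rho_{\overline{\theta}}\leq\rho_{\overline{\theta}_0}+L_{\Delta\widetilde{\Theta}}$, while \eqref{eq:w_tilde_d} gives $L_{\widetilde{\Theta}}\leq L_{\widetilde{\Theta}_0}-L_{\Delta\widetilde{\Theta}}$; adding these the term $L_{\Delta\widetilde{\Theta}}$ cancels, so $\rho_{\overline{\theta}}+L_{\widetilde{\Theta}}\leq\gamma$, and \eqref{eq:w_tilde_c} likewise yields $\tilde{w}_{\widetilde{\Theta},\mathbb{D}}(x_s,u_s)\leq\tilde{w}_{\widetilde{\Theta}_0,\mathbb{D}}(x_s,u_s)$. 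Since $\mathcal{X}_f$ in \eqref{eq:term_set} does not depend on $(\overline{\theta},\widetilde{\Theta})$, conditions \eqref{eq:term_3} and \eqref{eq:term_6} are vacuous ($\overline{w}=\infty$); \eqref{eq:term_4} holds because $s\leq V_\delta(x,x_s)+s\leq c_{x_s}\leq\delta_{loc}=\overline{s}$ for $(x,s)\in\mathcal{X}_f$; and \eqref{eq:term_2} follows from \eqref{eq:c_j} applied at $(z,v)=(x_s,u_s)$, namely $h_j(x,k_f(x))+c_js\leq h_j(x_s,u_s)+c_j(V_\delta(x,x_s)+s)\leq h_j(x_s,u_s)+c_jc_{x_s}\leq 0$ by \eqref{eq:c_xs}. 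The mild circularity here --- \eqref{eq:c_j} presupposes $(x,k_f(x))\in\mathcal{Z}$, which is exactly what \eqref{eq:term_2} asserts --- is handled as in the robust case \cite{Robust_TAC_19}, using that $c_j$ may be taken as a Lipschitz constant on a neighbourhood of $\mathcal{Z}$ and that $\mathcal{X}_f\subseteq\{x:V_\delta(x,x_s)\leq\delta_{loc}\}$ keeps the relevant triples in $\Psi$.

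For the invariance condition \eqref{eq:term_1}, the key point is that $f_{\overline{\theta}}(x_s,u_s)=x_s$ together with $(x,x_s,u_s)\in\Psi$ lets me apply \eqref{eq:increm_c} at the steady state, $V_\delta(f_{\overline{\theta}}(x,k_f(x)),x_s)\leq\rho_{\overline{\theta}}V_\delta(x,x_s)$; combining this with condition~\ref{cond_term_e} and the triangle inequality for $\|\cdot\|_P$ gives $V_\delta(x^+,x_s)\leq\tilde{s}+\rho_{\overline{\theta}}V_\delta(x,x_s)$. Adding condition~\ref{cond_term_d}, the $\tilde{s}$ terms cancel, and after expanding $\tilde{w}_\delta$ through \eqref{eq:def_w_delta_tilde} and bounding $\tilde{w}_{\widetilde{\Theta},\mathbb{D}}(x,k_f(x))\leq\tilde{w}_{\widetilde{\Theta},\mathbb{D}}(x_s,u_s)+L_{\widetilde{\Theta}}V_\delta(x,x_s)$ via \eqref{eq:w_tilde_b}, I obtain
\begin{align*}
V_\delta(x^+,x_s)+s^+&\leq(\rho_{\overline{\theta}}+L_{\widetilde{\Theta}})\big(V_\delta(x,x_s)+s\big)+\tilde{w}_{\widetilde{\Theta},\mathbb{D}}(x_s,u_s)\\
&\leq\gamma c_{x_s}+\tilde{w}_{\widetilde{\Theta}_0,\mathbb{D}}(x_s,u_s)\leq c_{x_s},
\end{align*}
using $(x,s)\in\mathcal{X}_f$ and the monotonicity facts in the second step and \eqref{eq:cond_term} in the last; hence $(x^+,s^+)\in\mathcal{X}_f=\mathcal{X}_{f,\overline{\theta}^+,\widetilde{\Theta}^+}$.

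Finally, for \eqref{eq:term_5} I would start from $V_\delta(x^+,x_s)\leq\gamma V_\delta(x,x_s)+\tilde{s}$ (the same chain, now with $\rho_{\overline{\theta}}\leq\gamma$), square it and apply Young's inequality in the form $(a+b)^2\leq\lambda a^2+\tfrac{\lambda}{\lambda-1}b^2$ with a fixed $\lambda>1$ chosen so that $\lambda\gamma^2<1$. With $V_f$ as in \eqref{eq:term_cost} this yields $V_f(x^+)-V_f(x)\leq\tfrac{\alpha}{1-\gamma^2}\big((\lambda\gamma^2-1)V_\delta^2(x,x_s)+\tfrac{\lambda}{\lambda-1}\tilde{s}^2\big)$; I then set $\alpha_v(\tilde{s}):=\tfrac{\alpha\lambda}{(1-\gamma^2)(\lambda-1)}\tilde{s}^2\in\mathcal{K}_\infty$, and for the remaining negative term I bound $\ell(x,k_f(x))\leq c\|(x-x_s,k_f(x)-u_s)\|^2\leq c\big(c_{\delta,l}^{-2}+\kappa_{\max}^2\big)V_\delta^2(x,x_s)$ using \eqref{eq:increm_a}--\eqref{eq:increm_b}, so any $\alpha\geq c(c_{\delta,l}^{-2}+\kappa_{\max}^2)(1-\gamma^2)/(1-\lambda\gamma^2)$ makes $\tfrac{\alpha(1-\lambda\gamma^2)}{1-\gamma^2}V_\delta^2(x,x_s)\geq\ell(x,k_f(x))$ and closes \eqref{eq:term_5}. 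I expect the only genuinely delicate step to be the bookkeeping of the first paragraph: one must see that the possible growth of $\rho_{\overline{\theta}}$ under a model update is exactly offset by the forced shrinkage of $L_{\widetilde{\Theta}}$ (conditions \eqref{eq:w_tilde_d}--\eqref{eq:w_tilde_e}), so that the single constant $\gamma<1$ dominates $\rho_{\overline{\theta}}+L_{\widetilde{\Theta}}$ uniformly over all reachable $(\overline{\theta},\widetilde{\Theta})$; once that uniform bound is available, the terminal-set and terminal-cost analysis is just the standard one-step contraction argument, made especially clean here because $\mathcal{X}_f$ and $V_f$ are chosen independent of the parameter estimates.
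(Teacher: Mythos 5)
Your proposal is correct and follows essentially the same route as the paper's proof: the same monotonicity bookkeeping giving $\rho_{\overline{\theta}}+L_{\widetilde{\Theta}}\leq\rho_{\overline{\theta}_0}+L_{\widetilde{\Theta}_0}$ and $\tilde{w}_{\widetilde{\Theta},\mathbb{D}}(x_s,u_s)\leq\tilde{w}_{\widetilde{\Theta}_0,\mathbb{D}}(x_s,u_s)$ via \eqref{eq:rho_Lipschitz}, \eqref{eq:w_tilde_c}--\eqref{eq:w_tilde_e}, the same one-step contraction of $V_{\delta}(\cdot,x_s)+s$ at the steady state using \eqref{eq:increm_c}, \eqref{eq:w_tilde_b} and \eqref{eq:cond_term}, and the same verification of \eqref{eq:term_2}--\eqref{eq:term_4}. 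The only difference is cosmetic: where the paper disposes of \eqref{eq:term_5} by citing \cite{alessandretti2016design}, you spell out the standard Young's-inequality argument for the quadratic terminal cost, and you explicitly flag the $\Psi$-membership needed to invoke \eqref{eq:c_j} and \eqref{eq:increm_b}, which the paper leaves implicit.
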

\begin{proof}
\begin{subequations}
First note that satisfaction of~\eqref{eq:cond_term} implies satisfaction of \eqref{eq:cond_term} with $\overline{\theta}_0,\tilde{\Theta}_0$ replaced by $\overline{\theta},\tilde{\Theta}$, using $\overline{\theta}\oplus\widetilde{\Theta}\subseteq\overline{\theta}_0\oplus\widetilde{\Theta}_0$,
\begin{align}
\label{eq:prop_term_1}
\rho_{\overline{\theta}}+L_{\widetilde{\Theta}} \stackrel{\eqref{eq:rho_Lipschitz},\eqref{eq:w_tilde_d},\eqref{eq:w_tilde_e}}{\leq} & \rho_{\overline{\theta}_0}+L_{\widetilde{\Theta}_0},\\
\label{eq:prop_term_2}
 \tilde{w}_{\widetilde{\Theta},\mathbb{D}}(x_s,u_s)\stackrel{\eqref{eq:w_tilde_c}}{\leq}&\tilde{w}_{\widetilde{\Theta}_0,\mathbb{D}}(x_s,u_s),\quad c_{x_s}\geq 0.
\end{align}
Note that for all $(x,s)\in\mathcal{X}_f$ the uncertainty bound $\tilde{w}$ satisfies
\begin{align}
\label{eq:prop_term_3}
\tilde{w}_{\delta,\widetilde{\Theta},\mathbb{D}}(x,k_f(x),s)\stackrel{\eqref{eq:w_tilde_g},\eqref{eq:def_w_delta_tilde}}{\leq} &\tilde{w}_{\widetilde{\Theta},\mathbb{D}}(x_s,u_s)+L_{\widetilde{\Theta}}(s+V_{\delta}(x,x_s)).
\end{align}
Furthermore, the next state satisfies
\begin{align}
\label{eq:prop_term_4}
V_{\delta}(x^+,x_s)\leq &\|x^+-f_{\overline{\theta}}(x,k_f(x))\|_P+\|f_{\overline{\theta}}(x,k_f(x))-x_s\|_P\nonumber\\
\stackrel{\eqref{eq:increm_c},(Ass.~\ref{ass:term}-\ref{cond_term_e})}{\leq}&\rho_{\overline{\theta}}\|x-x_s\|_P+\tilde{s}. 
\end{align}
Thus, the RPI property~\eqref{eq:term_1} follows as for all $(x,s)\in\mathcal{X}_f$:
\begin{align*}
&s^+ + V_{\delta}(x^+,x_s)\\
\stackrel{\eqref{eq:prop_term_4}, (Ass.~\ref{ass:term}-\ref{cond_term_d})}{\leq}& \rho_{\overline{\theta}}(s+\|x-x_s\|_P)+\tilde{w}_{\delta,\widetilde{\Theta},\mathbb{D}}(x,k_f(x),s)\\
\stackrel{\eqref{eq:prop_term_3}}{\leq} &( \rho_{\overline{\theta}}+L_{\widetilde{\Theta}})(s+\|x-x_s\|_P)+\tilde{w}_{\widetilde{\Theta},\mathbb{D}}(x_s,u_s)\\
\stackrel{\eqref{eq:term_set},\eqref{eq:prop_term_1},\eqref{eq:prop_term_2}}{\leq }&( \rho_{\overline{\theta}_0}+L_{\widetilde{\Theta}_0})c_{x_s}+\tilde{w}_{\widetilde{\Theta}_0,\mathbb{D}}(x_s,u_s)\stackrel{\eqref{eq:cond_term}}{\leq} c_{x_s}.
\end{align*}
 Satisfaction of the tightened constraints~\eqref{eq:term_2} follows with
 \begin{align*}
h_j(x,k_f(x))+c_j s\stackrel{\eqref{eq:c_j}}{\leq}& h_j(x_s,u_s)+c_j(V_{\delta}(x,x_s)+s)\\
\stackrel{\eqref{eq:term_set}}{\leq}&  h_j(x_s,u_s)+c_j\cdot c_{x_s}\stackrel{\eqref{eq:c_xs}}{\leq} 0.
\end{align*}
Condition~\eqref{eq:term_3}, \eqref{eq:term_4} hold due to the considered choice of $\overline{s}$, $\overline{w}$ and $V_{\delta}(x,x_s)\leq c_{x_s}\leq \delta_{loc}$ for all $(x,s)\in\mathcal{X}_f$. 
The fact that the choice of terminal cost in~\eqref{eq:term_cost} is valid follows from~\cite{alessandretti2016design}, the bounds~\eqref{eq:increm_a}--\eqref{eq:increm_c} and $\rho_{\overline{\theta}}\leq\rho_{\overline{\theta}_0}+L_{\widetilde{\Theta}_0}<1$.
\end{subequations}
\end{proof}
This design is quite simple, as it only requires the computation of two scalars ($c_{x_s},\alpha$) and the verification of a scalar condition~\eqref{eq:cond_term}.
Condition~\eqref{eq:cond_term} ensures that $\{x|~V_{\delta}(x,x_s)\leq c_{x_s}\}$ is an RPI set. The usage of global incremental bounds $\rho_{\theta}$ instead of properties of some local control Lyapunov function (c.f. design in~\cite{chen1998quasi} or \cite[Prop.~5]{Robust_TAC_19}) can introduce conservatism. 
Additional (possibly less conservative) design procedures can be found in Appendix~\ref{app:term}. 
The requirement of having a fixed steady-state $x_s$ independent of online computed parameters $\theta$ may pose practical limitation in case of dynamic operation.
Some improvements in this direction are discussed in~\cite[App.~B]{Koehler2019Adaptive}, although the issue of setpoint tracking with online changing models seems largely unresolved. 

 \subsubsection*{Overall algorithm}
 The following two algorithms summarize the proposed offline design and the online operation.
\begin{algorithm}[h]
\caption{RAMPC - Offline design}
\label{alg:offline}
 Given model (Ass.~\ref{ass:model}), constraints~\eqref{eq:constraint_definition}, initial set $\overline{\theta}_0,\eta_0 $ (Ass.~\ref{ass:nominal}).
Design $V_{\delta}(x,z)=\|x-z\|_P$, feedback $\kappa$ (e.g. LMIs in~\cite{JK_QINF}). \\
Determine constants  $\rho_{\overline{\theta}_0}$~\eqref{eq:increm_c}, $c_j$~\eqref{eq:c_j} $L_{\mathbb{B},\rho}$~\eqref{eq:L_B_rho}, $\delta_{loc}>0$.\\
Set parameter update gain $\mu>0$~\eqref{eq:mu}. \\
Compute $\overline{d}$~\eqref{eq:design_w_1_2}, $\alpha$~\eqref{eq:term_cost}, $c_{x_s}$ \eqref{eq:c_xs} (possible $L_{\mathbb{B}}$~\eqref{eq:design_w_2_3}).\\
Check if condition~\eqref{eq:cond_term} holds.
\end{algorithm}

The main complexity in the offline design is the choice of a suitable function $V_{\delta},\kappa$, e.g. using LMIs in~\cite{JK_QINF}, while the different constants can be computed similarly to a Lipschitz constant. 
If Algorithm~\ref{alg:offline} is executed successfully (condition~\eqref{eq:cond_term} holds), then Asumptions~\ref{ass:nominal}--\ref{ass:term} hold. 
\begin{algorithm}[h]
\caption{RAMPC - Online}
\label{alg:online}
Execute at each time step $t\in\mathbb{N}$:
\begin{algorithmic}
\State Measure state $x_t$.
\State Update $\eta_t,\overline{\theta}_t$ using Algorithm~\ref{alg:HC}. 
\State Update $\rho_{\overline{\theta}_t}$ using $\rho_{\overline{\theta}_t}=\rho_{\overline{\theta}_0}+(\eta_0-\eta_t)L_{\mathbb{B},\rho}$. 
\State Update $\hat{\theta}_t$ using~\eqref{eq:hat_theta_update}. 
\State Solve MPC optimization problem~\eqref{eq:RAMPC}.
\State Apply control input $u_t=u^*_{0|t}$.
\end{algorithmic}
\end{algorithm}

Note that compared to a nominal MPC scheme, the proposed nonlinear RAMPC scheme additionally requires executing Alg.~\ref{alg:HC} (solving LPs in case of polytopic $\mathbb{D}$) and uses additional constraints in~\eqref{eq:RAMPC} (which are also needed for robust MPC approaches, e.g.~\cite{Robust_TAC_19}). 
A similar design for the special case of linear systems with polytopic tubes can be found in~\cite{Koehler2019Adaptive}.

%\newpage
%!TEX root = ./Adaptive_Nonlin.tex
%%%%%%%%%%%%%%%%%%%%%%%%%%%%%%%%%%%%%%%%%%%%%%%%%%%%%%%%%%%%%%%%%%%%%%%%%%%%%%%
\subsection{Special case - Lipschitz approaches} 
\label{sec:special_guay}
In the following, we briefly detail the nonlinear RAMPC approach considered in~\cite{adetola2011robust,guay2015robust,gonccalves2016robust} and demonstrate that it is contained in the presented framework as a special case.
\begin{subequations}
\label{eq:Limon}

 \subsubsection*{Parameter estimation} 
 The estimation procedure in~\cite{gonccalves2016robust} uses a filtered regressor $\omega_{t}\in\mathbb{R}^{n\times p}$ of $G(x_t,u_t)$, a recursively updated identifier matrix $\Sigma_t\in\mathbb{R}^{p\times p}$ and a recursive least squares (RLS) like nominal parameter update $\hat{\theta}_t$. 
The result is a parameter set of the form $\tilde{\Theta}_t=\nu_t\mathbb{B}_2$ with some online computed constant $\nu_t\geq 0$ that satisfies Assumption~\ref{ass:nominal}. 
We would like to point out that this update uses the stability properties of the Lyapunov function $V_{\theta,t}=\|\theta-\hat{\theta}_t\|_{\Sigma_t}^2$ combined with a case distinction to only update the \textit{ball-shaped} parameter set $\overline{\theta}_t\oplus\nu_t\mathbb{B}_2$, if it satisfies condition~\eqref{eq:nominal_param} in Assumption~\ref{ass:nominal}. 

 \subsubsection*{Tube propagation} 
In~\cite{adetola2011robust,guay2015robust,gonccalves2016robust} the tube propagation is done with a \textit{ball-shaped} tube without any stabilization, which is equivalent to considering $V_{\delta}(x,z)=\|x-z\|$, $\kappa(x,z,v)=v$ and $\delta_{loc}=\infty$. 
Define the Lipschitz constants as follows
\begin{align}
\label{eq:Limon_Lipschitz_L}
\mathcal{L}_f:=&\max_{(x,u)\in\mathcal{Z},~(z,u)\in\mathcal{Z}}\frac{\|f(x,u)-f(z,u)\|}{\|x-z\|},\\
\label{eq:Limon_Lipschitz_G}
\mathcal{L}_G:=&\max_{(x,u)\in\mathcal{Z},~(z,u)\in\mathcal{Z}}\frac{\|G(x,u)-G(z,u)\|}{\|x-z\|}.
\end{align}
The contraction rate $\rho_\theta$ satisfies $\rho_{\overline{\theta}}\leq \mathcal{L}_f+\mathcal{L}_G\|\overline{\theta}\|$.
Furthermore, similar to Prop.~\ref{prop:properties_2_quadratic}, for $\Delta\Theta=\Delta\nu\mathbb{B}_2$, inequality~\eqref{eq:rho_Lipschitz} holds with $L_{\rho,\theta,\Delta\Theta}:=\Delta\nu \mathcal{L}_G$.
Correspondingly, we have $\rho_{\overline{\theta}_t}+L_{\widetilde{\Theta}_t}\leq \mathcal{L}_f+\mathcal{L}_G\Pi$, with $\Pi=\overline{\theta}_0+\nu_0$.

\subsubsection*{Uncertainty characterization}
The uncertainty characterization is done analogous to Prop.~\ref{prop:design_w_2} with $\eta_tc_{\mathbb{B}}$ replaced by $\nu_t$ and $P=I_n$, resulting in
\begin{align}
\label{eq:Limon_w}
\tilde{w}_{\widetilde{\Theta}_t,\mathbb{D}}(z,v):=\nu_t\|G(z,v)\|+\overline{d},\\
\overline{d}:=\max_{d\in\mathbb{D}}\|d\|,~L_{\widetilde{\Theta}_t}:=\nu_t \mathcal{L}_{G}.\nonumber
\end{align}
As a result, the tube propagation~\eqref{eq:RAMPC_dyn_s}--\eqref{eq:RAMPC_w} is equivalent to
\begin{align}
\label{eq:Limon_s}
&s_{k+1|t}\geq (L_f+L_G\Pi)s_{k|t}+\|G(\overline{x}_{k|t},\overline{u}_{k|t})\|\nu_t+\overline{d}.
\end{align}
Furthermore, in case the constraint set $\mathcal{Z}$ is polytopic, the tightened constraints~\eqref{eq:RAMPC_con} are equivalent to
\begin{align}
\label{eq:Limon_tighten}
&(\overline{x}_{k|t}\oplus s_{k|t}\mathbb{B}_2,\overline{u}_{k|t})\subseteq\mathcal{Z}.
\end{align}
The constraints~\eqref{eq:Limon_s}--\eqref{eq:Limon_tighten} correspond to the formulation proposed in~\cite{gonccalves2016robust}.
The constraints~\eqref{eq:RAMPC_con_sw} are not considered in~\cite{adetola2011robust,guay2015robust,gonccalves2016robust} ($\overline{w}=\overline{s}=\infty$). 

Thus, we have demonstrated that the formulation in~\cite{gonccalves2016robust} is a special case of the proposed RAMPC framework. 
Furthermore, this also implies that the computational complexity of the MPC optimization problem~\eqref{eq:RAMPC} with the parametrization $\tilde{w}$ considered in Prop.~\ref{prop:design_w_2} is equivalent to the approach in~\cite{gonccalves2016robust}.
However, the constant $\mathcal{L}_f+\mathcal{L}_G\Pi$, which characterizes the dynamics of the tube size $s$ in~\eqref{eq:Limon_s}, is often significantly larger then the constant $\rho_{\overline{\theta}_0}+\eta_0L_{\mathbb{B}}$ used with the formulation in Prop.~\ref{prop:design_w_2}. 
Hence, especially for larger horizons $N$, the simple Lipschitz based formulation~\cite{gonccalves2016robust} can become overly conservative, compare the numerical example in Sec.~\ref{sec:num} and  numerical comparisons in~\cite{Robust_TAC_19,kohler2018novel,IncremStochComparison_Mesbah_19}.

 \subsubsection*{Terminal ingredients} 
 Regarding the terminal ingredients, in~\cite{adetola2011robust,guay2015robust,gonccalves2016robust} a robust control invariant (RCI) set $\mathcal{X}_f$ is designed offline and the following terminal constraint $\overline{x}_{N|t}\oplus s_{N|t}\mathbb{B}_2\subseteq\mathcal{X}_f$ is used. 
This condition can be used to ensure robust recursive feasibility for computationally intractable $\min$--$\max$ RAMPC approaches and it also ensures that applying the open-loop optimal input trajectory is feasible. 
However, this simple terminal constraint does not ensure recursive feasibility for tube-based MPC schemes. 
Furthermore, to the best knowledge of the authors, to this date there exists no suitable design procedure that guarantees the desired properties (Ass.~\ref{ass:term}) for such simple Lipschitz based RAMPC approaches. 
We would like to point out that Prop.~\ref{prop:terminal_design} can only be applied if $\mathcal{L}_{f}+\mathcal{L}_G\Pi<1$, which is quite restrictive. 
Alternative designs for the terminal ingredients that may alleviate this restriction are discussed in Appendix~\ref{app:term}. 
 
\end{subequations}
 
%extensions
%!TEX root = ./Adaptive_Nonlin.tex
%%%%%%%%%%%%%%%%%%%%%%%%%%%%%%%%%%%%%%%%%%%%%%%%%%%%%%%%%%%%%%%%%%%%%%%%%%%%%%%
\subsection{Extensions and open issues} 
\label{sec:main_6}
In the following, we briefly discuss some possible extensions and open issues of the proposed RAMPC framework.

\begin{remark}(Time-varying parameters)
\label{rk:stability_time_varying}
A natural extension of the proposed RAMPC framework is to consider time-varying parameters $\theta^*_{t+1}\in(\theta_t^*\oplus\Omega)\cap\Theta_0$ with $\Omega=\omega\mathbb{B}_\infty$, similar to~\cite{bujarbaruah2019adaptive,dahliwal2014set,tanaskovic2019adaptive}. 
%set membership
In this case the set membership update (Alg.~\ref{alg:HC}) needs to consider the non-falsified set $\Delta_{k|t}=\Delta_k\oplus(t-k)\Omega$. 
%robust
Furthermore, in the tube propagation~\eqref{eq:RAMPC_dyn_s}--\eqref{eq:RAMPC_w} a growing parameter set $\Theta_{k|t}=(\overline{\theta}_t\oplus(\eta_t+k\omega)\mathbb{B}_{\infty})\cap \Theta_0$ needs to be considered as also done in~\cite{tanaskovic2019adaptive}. 
%change rho
Note that in the considered nonlinear tube propagation, the intersection with the initial parameter set $\Theta_0$ typically requires a re-centering of the nominal parameters $\overline{\theta}_{k|t}$, which complicates the analysis of robust recursive feasibility~\cite[Sec.~5.2]{Elisa}. 
%stability
Furthermore, in case of time-varying parameters $\theta^*_t\in\overline{\theta}_t\oplus\tilde{\Theta}_t$, the stability result in Theorem~\ref{thm:stability} changes to finite-gain stability w.r.t. both the additive disturbances $d_t$ and $\sqrt{\|\Delta\theta^*_t\|^2+c\|\Delta\theta^*_{t}\|}$ with the change in parameters $\Delta\theta^*_t:=\theta_{t+1}^*-\theta_t^*$ and a suitable constant $c>0$ depending on $\Theta_0$, compare~\cite[Sec.~5.2]{Elisa} for details.  
This weaker property is due to the fact that the bound~\eqref{eq:intermediate_LMS} does not hold for time-varying parameters. 
Considering this, in~\cite[Prop.~2]{bujarbaruah2019adaptive} (LMS updates with time-varying parameters) a term corresponding to $\|\Delta\theta^*_{t}\|$ seems to be missing.
\end{remark}

\begin{remark}
\label{rk:collision}(Nonlinear constraints)
Using Lipschitz continuity of the constraints $h_j$ (Ass.~\ref{ass:model}) with constants $c_j\geq 0$ can be conservative in case of strongly nonlinear constraints $h_j$ and does not allow for non-smooth collision avoidance constraints. 
At the cost of additional computational complexity, the considered tube formulation can be extended to use more general nonlinear continuity conditions in the constraint tightening (c.f.~\cite[Sec.~IV.C]{Robust_TAC_19}) and even general collision avoidance formulations~\cite{Soloperto2019Collision}.
\end{remark}

\begin{remark}
\label{rk:offset} (Offset free control)
Offsets, as considered in~\cite{bujarbaruah2019adaptive}, are a special case of the considered approach with $G,\rho_\theta,\tilde{w}_{\delta}$ constant, which allows for a simpler implementation with a fixed constraint tightening similar to~\cite{kohler2018novel,IncremStochComparison_Mesbah_19}.
\end{remark}

\begin{remark}
\label{rk:affine_parameters} (Class of nonlinear systems)
The assumed model structure (Ass.~\ref{ass:model}) with additive bounded disturbances $d$ and affine uncertain parameter $\theta$ is one of the main restrictions in the considered approach for general uncertain nonlinear systems. 
For nonlinearly parametrized systems, providing set membership estimates (Ass.~\ref{ass:nominal}) is already challenging~\cite{adetola2014adaptive}. 
The set membership approach in Algorithm~\ref{alg:HC} can, for example, be directly extended to systems of the form $E_\theta x^+=f_\theta(x,u)+d$ with $f_\theta,E_\theta$ affine in $\theta$, which is typical for mechanical systems with an uncertain mass/inertia.
However, it is not yet clear how a simple design for $\tilde{w}$  for such nonlinearly parametrized systems can be obtained. 
Furthermore, considering non-parameteric uncertainty descriptions, e.g. using Lipschitz bounds~\cite{manzano2018robust,bujarbaruah2019semi}, would be an interesting extension. 
\end{remark}

%!TEX root = ./Adaptive_Nonlin.tex
%%%%%%%%%%%%%%%%%%%%%%%%%%%%%%%%%%%%%%%%%%%%%%%%%%%%%%%%%%%%%%%%%%%%%%%%%%%%%%%
\section{Numerical example}
\label{sec:num}
The following example demonstrates the flexibility of the proposed framework and compares the proposed approach with the state of the art formulation in nonlinear RAMPC~\cite{adetola2011robust,guay2015robust,gonccalves2016robust}. 
Further details regarding the following numerical example can be found in Appendix~\ref{app:example}. 
\subsubsection*{System model}
We consider the following nonlinear discrete-time system
\begin{align}
\label{eq:example_model}
x^+=&x+T_0\begin{pmatrix}
1/2(1+x_1)u-x_2\theta_1+d_1\\
1/2(1-4x_2)u+x_1\theta_2+d_2
\end{pmatrix},
\end{align}
with $x=(x_1,x_2)\in\mathbb{R}^2$, $u\in\mathbb{R}$, $\theta=(\theta_1,\theta_2)\in\mathbb{R}^2$, $d=(d_1,d_2)\in\mathbb{R}^2$, sampling time $T_0=0.05$, which is a modified version of the example in~\cite{pin2009robust}.  
The considered constraint set is $\mathcal{Z}=[-0.1,0.1]^2\times [-2,2]$. 
The disturbance set and initial parametric uncertainty are $\mathbb{D}=0.5\cdot 10^{-4}\cdot \mathbb{B}_{\infty}$ and $\overline{\theta}_0\oplus\eta_0\mathbb{B}_{\infty}=[1,1.02]\times [0.98,1]$, with the true (unknown) parameters $\theta^*=(1,1)$. 
 The control goal is to stabilize the origin $(x,u)=(0,0)$ with the quadratic stage cost $\ell(x,u)=0.1\|x\|^2+u^2$.

\subsubsection*{Offline design}
We compute a function $V_{\delta}(x,z)=\|x-z\|_P$ and feedback $\kappa(x,z,v)=v+K(z,v)(x-z)$ offline using LMIs (c.f. App.~\ref{app:example}, \cite{JK_QINF}).
 For the function $\tilde{w}_\delta$, we consider the three different choices based on Prop.~\ref{prop:design_w_1}, Prop.~\ref{prop:design_w_2} and the approach in~\cite{gonccalves2016robust} (c.f. Sec.~\ref{sec:special_guay}), where we use $\nu_t=c_{\mathbb{B}}\eta_t$ for simplicity. 
We consider the terminal set design in Prop.~\ref{prop:terminal_design}. 
The following table summarizes the numerical values, where $\overline{d}_{\max}/\overline{d}$ denotes by how much the size of the disturbances set $\mathbb{D}$ can be increased before condition~\eqref{eq:cond_term} becomes invalid. 
\begin{center}
\begin{tabular}{c|c|c|c}
Approach& Prop.~\ref{prop:design_w_1}&Prop.~\ref{prop:design_w_2}&\cite{gonccalves2016robust} (Sec.~\ref{sec:special_guay})\\\hline
$\rho_{\overline{\theta}_0}+L_{\widetilde{\Theta}_0}$&0.995&0.994&$1.271$\\
$\overline{d}_{\max}/\overline{d}$&$126\%$&$152\%$&NaN
\end{tabular} 
\end{center}
We would like to point out that the condition~\eqref{eq:cond_term} for the terminal ingredients in Prop.~\ref{prop:terminal_design} can be conservative, which is the main reason why we do not consider a larger parametric uncertainty in the considered example. 
Furthermore, this design is only applicable if $\rho_{\overline{\theta}_0}<1$ (independent of the size of $\widetilde{\Theta}_0,\mathbb{D}$) and is thus typically not applicable to the Lipschitz based approach~\cite{gonccalves2016robust} (c.f. Sec.~\ref{sec:special_guay}), since $\mathcal{L}_f>1$ for most nonlinear systems. 
We would like to point out that the terminal set suggested in~\cite{adetola2011robust,guay2015robust,gonccalves2016robust} is in general not sufficient to ensure robust recursive feasibility except for computationally intractable $\min$--$\max$ MPC approaches. 
One alternative design for the terminal set $\mathcal{X}_f$ can be found in Appendix~\ref{app:term}.

\subsubsection*{Tube propagation}
To compare the conservatism of the different RAMPC formulations, we consider an exemplary open-loop trajectory $x^*_{\cdot|0}$, $u^*_{\cdot|0}$ with initial condition $x_0=(0.1,0.1)$ and compute the tube size $s$ for the Lipschitz based approach~\cite{gonccalves2016robust}, and the formulations in Prop.~\ref{prop:design_w_1} and \ref{prop:design_w_2}, which can be seen in Figure~\ref{fig:tube}. 
In addition,  to demonstrate the effect of the set membership update (Algorithm~\ref{alg:HC}), we show the tube size corresponding to Prop.~\ref{prop:design_w_2} using the updated parameter set $\widetilde{\Theta}_{40}$ after $40$ steps with $\eta_{40}/\eta_0\approx 2.5\% $. 
Note that, due to the difference in shape of the two tube formulations, we display the semi-axes of the ellipses, which are given by the interval  $s_{\cdot|0}\cdot [c_{\delta,l},c_{\delta,u}]$. 
While the computational complexity of the formulation using Prop.~\ref{prop:design_w_2}  and the Lipschitz based approach~\cite{gonccalves2016robust} (c.f. Sec.~\ref{sec:special_guay}) are equivalent, we can see that especially for larger horizons (e.g.  $N> 12$) the proposed approach is clearly less conservative. % and thus superior. 
The formulation in Prop.~\ref{prop:design_w_1} reduces the conservatism (as measured by the final tube size $s_{N|t}$) again by approximately $25\%$ compared to Prop.~\ref{prop:design_w_2}. 
This comes, however, at the cost of increased computational complexity as the function $\tilde{w}_{\delta}$ in  Prop.~\ref{prop:design_w_1} requires $2^p/2=2$ inequality constraints, compared to $1$ inequality constraint for Prop.~\ref{prop:design_w_2}.
Finally, the reduction of conservatism  (evaluated using $s_{N|t}$) due to the set membership estimation is more than $55\%$ (using the same robust tube propagation), while the additional online computational complexity required for the parameter estimation is minimal. 
\begin{figure}[hbtp]
\begin{center}
\includegraphics[width=0.5\textwidth]{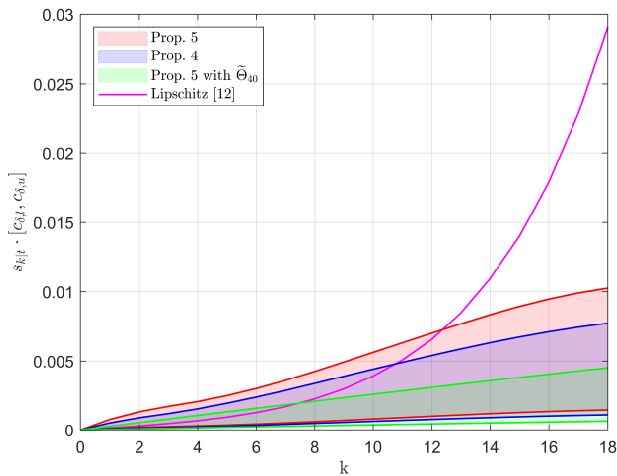}
\end{center}
\caption{%
Spectrum of the elliptic side length of the tube ($s_{\cdot|0}\cdot [c_{\delta,l},c_{\delta,u}]$) for the proposed RAMPC formulations in Prop.~\ref{prop:design_w_1} and \ref{prop:design_w_2}, the Lipschitz based approach~\cite{gonccalves2016robust} (c.f. Sec.~\ref{sec:special_guay}) and Prop.~\ref{prop:design_w_2} with the updated parameter set $\widetilde{\Theta}_{40}$. }
\label{fig:tube}
\end{figure}

\subsubsection*{Region of attraction}
In order to further compare the proposed formulation (for simplicity we only consider Prop.~\ref{prop:design_w_2}) with the Lipschitz based approach~\cite{gonccalves2016robust} (c.f. Sec.~\ref{sec:special_guay}) we compare the region of attraction (ROA) for different horizons $N$. 
As briefly discussed above, the terminal set suggested in~\cite{adetola2011robust,guay2015robust,gonccalves2016robust} is, to the best of our knowledge, not sufficient for ensuring recursive feasibility, and the design proposed in Prop.~\ref{prop:terminal_design} is not applicable to~\cite{adetola2011robust,guay2015robust,gonccalves2016robust} (at least in case that the Lipschitz constant $\mathcal{L}$ is greater than one). Hence, in order to still allow for a meaningful comparison, for the following analysis we consider the terminal set constraint $\mathbb{X}_{N|t}\subseteq\mathcal{X}_f$ for both approaches, with some RPI set $\mathcal{X}_f$ (which ensures open-loop feasibility, but not necessarily closed-loop recursive feasibility).
 In Figure~\ref{fig:ROA}, we see that for small horizons $N$, the ROA is similar, but as the prediction horizon $N$ increases, the ROA increases monotonically for the proposed approach, while after $N\geq 16$ the ROA shrinks drastically for the Lipschitz based approach (and is empty for $N=25$). 
 The following table compares the size of the ROA (all states $x$, for which the optimization problem~\eqref{eq:RAMPC} is feasible)  relative to the constraint set $\mathcal{X}=[-0.1,0.1]^2$. \\
\begin{tabular}{c|c|c|c|c|c|c}
 $\%$ ROA&N=1&N=4&N=9&N=16&N=20&N=25\\\hline
Prop.~\ref{prop:design_w_2}&40&54&67&80&84&88\\
\cite{gonccalves2016robust}&41&55&69&77&27&0
\end{tabular} \\
\begin{figure}[hbtp]
\begin{center}
\includegraphics[width=0.5\textwidth]{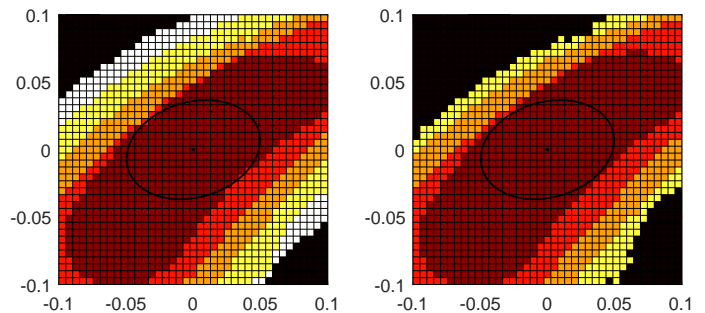}
\end{center}
\caption{Region of attraction with prediction horizons $N\in\{1,4,9,16,25\}$ (dark red, red, orange, yellow, white), infeasible initial conditions (black), terminal region $\mathcal{X}_f$ (black ellipse). Left: Approach using Prop.~\ref{prop:design_w_1}. Right:  Lipschitz approach~\cite{gonccalves2016robust} (Sec.~\ref{sec:special_guay}). }
\label{fig:ROA}
\end{figure}
 
\subsubsection*{Closed-loop performance}
Finally, we focus on the performance in terms of the stage cost $\ell$ and the impact of the LMS point estimate compared to a RMPC approach without parameter adaptation. 
In the given example with $\pm 1\%$ parametric uncertainty, the proposed RAMPC scheme with LMS update improves the performance in terms of the closed-loop stage cost $\sum_{t=0}^{T-1}\ell(x_t,u_t)$ over the first $T=50$ steps by $3.5\%$ compared to a robust MPC implementation without any parameter adaptation.

To summarize, in the considered numerical example we have demonstrated:
(i) reduced conservatism compared to state of the art nonlinear RAMPC approaches~\cite{adetola2011robust,guay2015robust,gonccalves2016robust}, especially  for larger prediction horizons $N$; 
(ii) the degree of freedom in the design of $\tilde{w}_\delta$, determining conservatism and
computational complexity; 
(iii) reduced conservatism and improved performance using online model adaptation (set membership estimation (Alg.~\ref{alg:HC}) and LMS updates~\eqref{eq:hat_theta_update}).

%!TEX root = ./Adaptive_Nonlin.tex
%%%%%%%%%%%%%%%%%%%%%%%%%%%%%%%%%%%%%%%%%%%%%%%%%%%%%%%%%%%%%%%%%%%%%%%%%%%%%%%
\section{Conclusion}
\label{sec:sum}
We have presented a framework for nonlinear RAMPC using a simple tube propagation based on incremental Lyapunov functions, thereby extending~\cite{Robust_TAC_19} to online adapted models. 
The proposed approach ensures robust constraint satisfaction and recursive feasibility despite disturbances and parametric uncertainty.
Furthermore, we utilize set membership estimation to reduce the conservatism online, while a LMS point estimate is used to improve performance. 
The framework allows for a flexible trade-off between computational demand and conservatism.
In addition, we have shown that state of the art approaches~\cite{adetola2011robust,guay2015robust,gonccalves2016robust} are a special case of the proposed framework.
We have demonstrated the advantages of the proposed framework with a numerical example.

Extending the proposed framework to a larger class of nonlinear systems (e.g. nonlinearly parametrized) is an open issue.

%%%%%%%%%%%%%%%%%%%%%%%%%%%%%%%%%%%%%%%%%%%%%%%%%%%%%%%%%%%%%%%%%%%%%%%%%%%%%%%%
\bibliographystyle{ieeetran}  
\bibliography{Literature}  
\clearpage 
\appendix
In Appendix~\ref{app:term}, an alternative design for the terminal ingredients (Ass.~\ref{ass:term} is presented. 
Additional details regarding the numerical example (Sec.~\ref{sec:num}) can be found in Appendix~\ref{app:example}. 
%!TEX root = ./Adaptive_Nonlin.tex
%%%%%%%%%%%%%%%%%%%%%%%%%%%%%%%%%%%%%%%%%%%%%%%%%%%%%%%%%%%%%%%%%%%%%%%%%%%%%%%
\subsection{Alternative terminal set}
\label{app:term}    
In the following, we provide a design for the terminal ingredients (Ass.~\ref{ass:term}), as an alternative to Prop.~\ref{prop:terminal_design}. 
While the design in Prop.~\ref{prop:terminal_design} is quite simple, condition~\eqref{eq:cond_term} can be quite restrictive, compare the numerical example. 
In particular, this design does not differentiate between local stability properties of a suitable control Lyapunov function (CLF) and the global incremental stability property using $\rho_\theta$. 
Furthermore, since the design in Prop.~\ref{prop:terminal_design} cannot be applied for systems with $\rho_\theta>1$, it is typically not applicable to Lipschitz based approaches (Sec.~\ref{sec:special_guay}, \cite{adetola2011robust,guay2015robust,gonccalves2016robust}).
In the following proposition we use the fact that $\tilde{w}$ in Prop.~\ref{prop:design_w_1} and \ref{prop:design_w_2} is affine in $\eta$, which allows for simpler bounds.
\begin{proposition}
\label{prop:terminal_design_2}
Suppose the following conditions hold
\begin{subequations}
\label{eq:term2_clf}
\begin{enumerate}[label=\alph*)]
\item There exists a Lipschitz continuous Lyapunov function $V_s(x)$ with Lipschitz continuous feedback $k_f(x)$, a contraction rate $\rho_f\in(0,1)$ and constants $\overline{\gamma}$, $c_{f,l}$, $c_{f,u}>0$, such that the following conditions hold for all $x\in\mathbb{R}^n:$ $V_s(x)\leq \overline{\gamma}$:
\begin{align}
\label{eq:term2_contract}
V_s(f_{\overline{\theta}}(x,k_f(x)))\leq \rho_fV_s(x),\\
\label{eq:term2_bound}
c_{f,l}\|x\|\leq V_s(x)\leq c_{f,u}\|x\|.
\end{align}
\label{cond_a_prop_term_app}
\item Assumption~\ref{ass:model} holds.
\label{cond_b_prop_term_app}
\item Assumption~\ref{ass:nominal} holds with sets $\widetilde{\Theta}_t=\eta_t\mathbb{B}_{\infty}$.
\label{cond_c_prop_term_app}
\item  Assumption~\ref{ass:increm} holds with $V_{\delta}(x,z)=\|x-z\|_P$, $P\succ 0$.
\label{cond_d_prop_term_app}
\item Assumption~\ref{ass:w_tilde} holds with  
\label{cond_e_prop_term_app}
$\tilde{w}_{\widetilde{\Theta},\mathbb{D}}(z,v)=\overline{d}+\eta\tilde{w}_{\mathbb{B}}(z,v)$, $L_{\widetilde{\Theta}}=\eta L_{\mathbb{B}}$, with some $L_{\mathbb{B}}\geq 0$ and  $\tilde{w}_{\mathbb{B}}$ Lipschitz continuous. 
\end{enumerate}
\end{subequations}
Then the following results hold
\begin{subequations}
\label{eq:term2_clf_prop}
\begin{enumerate}
\item There exist constants $\tilde{c}_j\geq 0$, $j=1,\dots,r$, $\tilde{L}_{\mathbb{B}}$, $c_{f,\delta}\geq 0$, such that for any $V_s(x)\leq \overline{\gamma}$, we have: 
\begin{align}
\label{eq:tilde_c_j} 
h_j(x,k_f(x))\leq&  h_j(0,0)+ \tilde{c}_jV_s(x),\\ 
\label{eq:termProp_bound_w}
\tilde{w}_{\widetilde{\Theta},\mathbb{D}}(x,k_f(x))\leq &\overline{d}+\eta(\tilde{w}_{\mathbb{B}}(0,0)+\tilde{L}_{\mathbb{B}}V_s(x)),\\
\label{eq:term2_cont}
V_s(x+d_w)\leq &V_s(x)+c_{f,\delta}V_{\delta}(x+d_w,x).
\end{align}
\item There exist constants $\underline{\rho},\overline{\rho}\geq 0$, such that for any $\overline{\theta}\oplus\eta \mathbb{B}_{\infty}\subseteq\overline{\theta}_0\oplus\eta_0\mathbb{B}_\infty$, we have
\begin{align}
\label{eq:rho_bound}
\underline{\rho}\leq \rho_{\overline{\theta}}+\eta L_{\mathbb{B}}\leq \overline{\rho}.
\end{align}
\end{enumerate}
\end{subequations}
Furthermore, the results in Theorem~\ref{thm:main} remain true if the properties in Assumption~\ref{ass:term} only hold for $\tilde{s}$ satisfying the more restrictive condition:
\begin{align}
\label{eq:s_tilde_more_restrictive}
&\tilde{s}- \underline{\rho}^N\overline{d}\\
\geq&\dfrac{\Delta \eta}{\eta}\left(\frac{\underline{\rho}}{\overline{\rho}}\right)^N\left[(\rho_{\overline{\theta}}+L_{\mathbb{B}}\eta)s+\eta\tilde{w}_{\mathbb{B}}(x,k_f(x))-\sum_{j=1}^{N}\overline{\rho}^j\overline{d}\right],\nonumber
\end{align}
Consider $\overline{w}_{\widetilde{\Theta}}=\overline{d}+\eta\overline{w}_{\mathbb{B}}$ and the terminal set
\begin{align}
\label{eq:term_set_def}
\mathcal{X}_{f,\eta}=\{(x,s)\in\mathbb{R}^{n+1}|~s\in[0,\overline{s}_{f,\eta}],~V_s(x)\leq \gamma_{\eta}\}
\end{align}
with $\gamma_{\eta}:=\gamma_0-\eta \gamma_1$, $\overline{s}_{f,\eta}:=\overline{s}_{f,0}+\eta \overline{s}_{f,1}$ and constants
 $\overline{w}_{\mathbb{B}}$, $\overline{s}_{f,0}$, $\overline{s}_{f,1}$, $\gamma_0$, $\gamma_1\geq 0$.  
Define the polytope $\Omega:=\{(\eta,\Delta\eta\in[0,\eta_0]^2|~\eta-\Delta\eta\geq 0\}$
Suppose the following conditions hold for $(\eta,\Delta\eta)\in\text{vert}(\Omega)$:
\begin{subequations}
\label{eq:termProp_cond_all}
\begin{align}
\label{eq:termProp_cond_constraints}
&h_j(0,0)+c_j \overline{s}_{f,\eta}+\tilde{c}_j\gamma_{\eta} \leq 0,\\
\label{eq:termProp_cond_constraint_s}
&\overline{s}_{f,\eta_0}=\overline{s}_{f,0}+\eta_0 \overline{s}_{f,1}\leq \overline{s}:=\delta_{loc}.\\
\label{eq:termProp_cond_constraint_gamma}
&\gamma_0\leq \overline{\gamma},\\
\label{eq:termProp_cond_RPI_gamma}
&(\rho_f-1)\gamma_\eta+c_{f,\delta}\overline{\rho}^N(\overline{d}+\eta\overline{w}_{\mathbb{B}})\leq \Delta\eta(\gamma_1-c_{f,\delta}\overline{w}_{\mathbb{B}}\sum_{k=0}^{N-1}\overline{\rho}^k)\\
\label{eq:termProp_cond_RPI_s}
&(\overline{\rho}-1)\overline{s}_{f,\eta}+\eta\tilde{w}_{\mathbb{B}}(0,0)+\eta L_{\mathbb{B}}\gamma_{\eta_0}\leq  (\underline{\rho}^N-1)\overline{d},\\
\label{eq:termProp_cond_RPI_s_3}
&0\leq \overline{s}_{f,1}\leq \dfrac{1}{\eta_0}(\underline{\rho}/\overline{\rho})^N\left[\overline{s}_{f,0}-\overline{d}\sum_{j=0}^{N-1}\overline{\rho}^j\right],\\
\label{eq:termProp_cond_w_bar}
&\overline{w}_{\mathbb{B}}-\tilde{w}_{\mathbb{B}}(0,0)\geq L_{\mathbb{B}}\overline{s}_{f,\eta}+\tilde{L}_{\mathbb{B}}\gamma_{\eta}.
\end{align}
\end{subequations}
Then conditions~\eqref{eq:term_1}--\eqref{eq:term_4}, \eqref{eq:term_6} in Assumption~\ref{ass:term} under restriction~\eqref{eq:s_tilde_more_restrictive} are satisfied. 
\end{proposition}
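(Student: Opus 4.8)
The plan is to establish, in order: (i) the auxiliary bounds \eqref{eq:tilde_c_j}--\eqref{eq:term2_cont} and the constants in \eqref{eq:rho_bound}; (ii) that Theorem~\ref{thm:main} remains valid when Assumption~\ref{ass:term} is imposed only under the restriction \eqref{eq:s_tilde_more_restrictive}; and (iii) that the terminal set \eqref{eq:term_set_def} with the affine-in-$\eta$ parametrization satisfies \eqref{eq:term_1}--\eqref{eq:term_4}, \eqref{eq:term_6} whenever the vertex conditions \eqref{eq:termProp_cond_constraints}--\eqref{eq:termProp_cond_w_bar} hold. Step~(i) is pure continuity/compactness: for \eqref{eq:tilde_c_j} and \eqref{eq:termProp_bound_w} I would use local Lipschitz continuity of $h_j$, $k_f$ (conditions~\ref{cond_b_prop_term_app}, \ref{cond_a_prop_term_app}) and of $\tilde{w}_{\mathbb{B}}$ (condition~\ref{cond_e_prop_term_app}) on the compact sublevel set $\{x:\,V_s(x)\le\overline{\gamma}\}$, converting the resulting $\|x\|$-estimates to $V_s(x)$-estimates via $c_{f,l}\|x\|\le V_s(x)$ from \eqref{eq:term2_bound}; for \eqref{eq:term2_cont} I would combine Lipschitz continuity of $V_s$ with $V_{\delta}(x+d_w,x)=\|d_w\|_P\ge c_{\delta,l}\|d_w\|$ from \eqref{eq:increm_a}; and for \eqref{eq:rho_bound}, the map $(\overline{\theta},\eta)\mapsto\rho_{\overline{\theta}}+\eta L_{\mathbb{B}}$ is continuous (continuity of $\rho_{\overline{\theta}}$ follows from the Lipschitz-type estimate in Proposition~\ref{prop:properties_2}) and strictly positive (as $\rho_\theta>0$ by \eqref{eq:rho_theta}) on the compact set of admissible $(\overline{\theta},\eta)$, so its minimum $\underline{\rho}>0$ and maximum $\overline{\rho}$ work.

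For step~(ii) I would revisit the candidate-solution construction in the proof of Theorem~\ref{thm:main}. The full $\tilde{s}$-range of Assumption~\ref{ass:term}-\ref{cond_term_c} is used only in Part~III, where \eqref{eq:term_1} is applied with $\tilde{s}=\tilde{s}_{N|t+1}$; it therefore suffices to check that this particular $\tilde{s}_{N|t+1}$ obeys the sharper bound \eqref{eq:s_tilde_more_restrictive}, since its upper bound is unchanged. Under the affine structure of condition~\ref{cond_e_prop_term_app}, the recursion \eqref{eq:def_tilde_s} collapses to $\tilde{s}_{k+1|t+1}=(\rho_{\overline{\theta}_t}+\Delta\eta L_{\mathbb{B}})\tilde{s}_{k|t+1}+\Delta\eta\,\tilde{w}_{\mathbb{B}}(\overline{x}^*_{k+1|t},\overline{u}^*_{k+1|t})$ with $\tilde{s}_{0|t+1}=w^*_{0|t}\ge\overline{d}$. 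I would unroll it, use $\underline{\rho}\le\rho_{\overline{\theta}_t}+\eta_t L_{\mathbb{B}}\le\overline{\rho}$ (which follows from \eqref{eq:rho_bound} together with $\rho_{\overline{\theta}_{t+1}}\le\rho_{\overline{\theta}_t}+\Delta\eta L_{\mathbb{B}}$ via \eqref{eq:rho_Lipschitz}, \eqref{eq:w_tilde_e}), and rewrite the bracket of \eqref{eq:s_tilde_more_restrictive} through the identity $(\rho_{\overline{\theta}_t}+\eta_t L_{\mathbb{B}})s^*_{N|t}+\eta_t\tilde{w}_{\mathbb{B}}(\overline{x}^*_{N|t},k_f(\overline{x}^*_{N|t}))=s^*_{N+1|t}-\overline{d}$, which comes from the extended solution \eqref{eq:extended_solution}; a term-by-term comparison then delivers \eqref{eq:s_tilde_more_restrictive}. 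Everything else in the proof of Theorem~\ref{thm:main} carries over unchanged.

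For step~(iii), fix $(x,s)\in\mathcal{X}_{f,\eta}$ and $\eta^+=\eta-\Delta\eta$, and verify the properties one at a time. Property \eqref{eq:term_6} is immediate from the affine $\eta$-dependence, since $(\tilde{w}_{\delta,\widetilde{\Theta},\mathbb{D}}(x,u,s)-\overline{d})/\eta$ is $\eta$-independent. Property \eqref{eq:term_2} follows by inserting \eqref{eq:tilde_c_j}, $V_s(x)\le\gamma_\eta$, $s\le\overline{s}_{f,\eta}$ and reducing to \eqref{eq:termProp_cond_constraints}. Property \eqref{eq:term_4}, together with the side conditions $V_s(x)\le\overline{\gamma}$, $s\le\delta_{loc}$ needed to invoke Assumptions~\ref{ass:increm}, \ref{ass:w_tilde}, holds by $\overline{s}_{f,\eta}\le\overline{s}_{f,\eta_0}\le\delta_{loc}$ and $\gamma_\eta\le\gamma_0\le\overline{\gamma}$ using \eqref{eq:termProp_cond_constraint_s}, \eqref{eq:termProp_cond_constraint_gamma}. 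Property \eqref{eq:term_3} follows from \eqref{eq:termProp_bound_w} and reduces to \eqref{eq:termProp_cond_w_bar}. The $V_s$-part of \eqref{eq:term_1} follows by combining \eqref{eq:term2_cont}, \eqref{eq:term2_contract} and $V_{\delta}(x^+,f_{\overline{\theta}}(x,k_f(x)))\le\tilde{s}$ (Ass.~\ref{ass:term}-\ref{cond_term_e}) with the upper $\tilde{s}$-bound of Ass.~\ref{ass:term}-\ref{cond_term_c} and $\rho_{\overline{\theta}}+\Delta\eta L_{\mathbb{B}}\le\overline{\rho}$, reducing exactly to \eqref{eq:termProp_cond_RPI_gamma}. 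The $s$-part of \eqref{eq:term_1} follows by bounding $s^+\le(\rho_{\overline{\theta}}+\eta L_{\mathbb{B}})s+\overline{d}+\eta\tilde{w}_{\mathbb{B}}(0,0)+\eta\tilde{L}_{\mathbb{B}}V_s(x)-\tilde{s}$ from Ass.~\ref{ass:term}-\ref{cond_term_d} and \eqref{eq:termProp_bound_w}, then inserting $s\le\overline{s}_{f,\eta}$, $V_s(x)\le\gamma_\eta$, $\rho_{\overline{\theta}}+\eta L_{\mathbb{B}}\le\overline{\rho}$ and the lower bound \eqref{eq:s_tilde_more_restrictive} on $\tilde{s}$, which after using $\overline{s}_{f,\eta^+}=\overline{s}_{f,\eta}-\Delta\eta\,\overline{s}_{f,1}$ reduces to \eqref{eq:termProp_cond_RPI_s}--\eqref{eq:termProp_cond_RPI_s_3}. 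Finally, each of \eqref{eq:termProp_cond_constraints}--\eqref{eq:termProp_cond_w_bar}, after substituting $\gamma_\eta=\gamma_0-\eta\gamma_1$, $\overline{s}_{f,\eta}=\overline{s}_{f,0}+\eta\overline{s}_{f,1}$ and $\overline{w}_{\widetilde{\Theta}}=\overline{d}+\eta\overline{w}_{\mathbb{B}}$, is affine in $(\eta,\Delta\eta)$, so verification at $\text{vert}(\Omega)$ certifies it on all of $\Omega$, which contains every admissible pair $(\eta,\Delta\eta)$.

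The hard part will be step~(ii) combined with the $s$-part of \eqref{eq:term_1}: the restricted lower bound \eqref{eq:s_tilde_more_restrictive} must simultaneously be \emph{provable} for the candidate $\tilde{s}_{N|t+1}$ and be exactly \emph{tight enough} to damp the one-step growth of the tube $s^+$ inside the terminal region. Reconciling the $(\underline{\rho}/\overline{\rho})^N$ correction factors and the $\Delta\eta/\eta$ scaling --- which appear precisely because the bracket in \eqref{eq:s_tilde_more_restrictive} is formed with the full uncertainty $\eta$ and worst-case rate $\overline{\rho}$, whereas $\tilde{s}_{N|t+1}$ accumulates with the increment $\Delta\eta$ and the actual contraction rate --- is where essentially all the bookkeeping lies; the remaining reductions to \eqref{eq:termProp_cond_constraints}--\eqref{eq:termProp_cond_w_bar} are routine once the affine substitutions have been made.
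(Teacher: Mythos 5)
Your proposal is correct and follows essentially the same route as the paper's proof: compactness/Lipschitz arguments for the constants in \eqref{eq:term2_clf_prop}, showing that the candidate $\tilde{s}_{N|t+1}$ from Theorem~\ref{thm:main} satisfies \eqref{eq:s_tilde_more_restrictive} by relating it to $s^*_{N+1|t}$ through the extended solution \eqref{eq:extended_solution}, and reducing each terminal property \eqref{eq:term_1}--\eqref{eq:term_4}, \eqref{eq:term_6} to the vertex conditions \eqref{eq:termProp_cond_all} via the affine dependence on $(\eta,\Delta\eta)$. The only deviation is that you unroll the $\tilde{s}$-recursion and compare term by term where the paper establishes the corresponding intermediate inequality by induction over $k$, which is an equivalent bookkeeping choice.
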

\begin{proof}

\begin{subequations}
\textbf{Part I. } Inequality~\eqref{eq:tilde_c_j}  (similar to  $c_j$ in~\eqref{eq:c_j}) is satisfied with 
\begin{align}
\label{eq:tilde_c_j_compute}
\tilde{c}_j:=\max_{\{x|V_s(x)\leq \overline{\gamma}\}} \dfrac{h_j(x,k_f(x))-h_j(0,0)}{V_s(x)},
\end{align}
 where existence is guaranteed using $k_f,h_j$ Lipschitz continous and the  bound~\eqref{eq:term2_bound} on $V_{s}$. 
Consider the constant 
\begin{align}
\label{eq:tilde_L_theta}
\tilde{L}_{\mathbb{B}}:=\max_{\{x|V_s(x)\leq \overline{\gamma}\}}\dfrac{\tilde{w}_{\mathbb{B}}(x,k_f(x))-\tilde{w}_{\mathbb{B}}(0,0)}{V_s(x)},
\end{align}
 which is finite due to $\tilde{w}_{\mathbb{B}}$ Lipschitz and $V_s$ lower bounded.
Inequality~\eqref{eq:termProp_bound_w} follows by definition. 
Condition~\eqref{eq:term2_cont}  holds with 
\begin{align}
\label{eq:c_f_delta}
c_{f_\delta}:=\max_{\{x|V_s(x)\leq \overline{\gamma}\},d_w\in\mathbb{R}^n}\dfrac{V_s(x+d_w)-V_s(x)}{V_{\delta}(x+d_w,x)},
\end{align}
which is finite since $V_s$ is Lipschitz and $V_\delta$ is lower bounded using~\eqref{eq:increm_a}. 
Regarding inequality~\eqref{eq:rho_bound}, we have $\rho_{\overline{\theta}}+\eta L_{\mathbb{B}}\leq\rho_{\overline{\theta}_0}+\eta_0 L_{\mathbb{B}}:=\overline{\rho}$ due to condition~\eqref{eq:rho_Lipschitz}, \eqref{eq:w_tilde_e} and $\overline{\theta}\in\overline{\theta}_0\oplus(\eta_0-\eta)\mathbb{B}_{\infty}$. 
Similar to Proposition~\ref{prop:terminal_design}, the following lower bound holds using the reverse triangular inequality
\begin{align*}
&\rho_{\overline{\theta}}=\max_{(x,z,v)\in\Psi}\dfrac{\|(G(x,\kappa(x,z,v))-G(z,v))\overline{\theta}\|_P}{\|x-z\|}\\
\geq &\max_{(x,z,v)\in\Psi,\tilde{\theta}\in\eta\mathbb{B}_{\infty}} \dfrac{\|(G(x,\kappa(x,z,v))-G(z,v))\overline{\theta}_0\|_P}{\|x-z\|}\\
&-\dfrac{\|(G(x,\kappa(x,z,v))-G(z,v))\tilde{\theta}\|_P}{\|x-z\|}\\
\stackrel{\eqref{eq:L_B_rho}}{\geq} &\rho_{\overline{\theta}_0}-\eta L_{\mathbb{B},\rho}=:\underline{\rho}.
\end{align*}
\textbf{Part II. }
In order to show that we can consider the more restrictive condition~\eqref{eq:s_tilde_more_restrictive}, it suffices to show that the candidate solution in Theorem~\ref{thm:main} satisfies this inequality.
In particular, noting that in the proof $\tilde{s}_{N|t+1}\equiv \tilde{s}$ and $s^*_{N+1|t}= (\rho_{\overline{\theta}_t}+\eta_t L_{\mathbb{B}})s^*_{N|t}+\eta_t\tilde{w}_{\mathbb{B}}(\overline{x}^*_{N|t},k_f(\overline{x}^*_{N|t}))+\overline{d}$, this is equivalent to showing that the following inequality holds for $k=N$
\begin{align}
\label{eq:term_less_cons}
\tilde{s}_{k|t+1}-\underline{d}_k\geq \dfrac{\Delta\eta_t}{\eta_t} (\underline{\rho}/\overline{\rho})^k \left[s^*_{k+1|t}-\overline{d}_k\right].
\end{align}
with $\underline{d}_k:=\underline{\rho}^j\overline{d}$, $\overline{d}_k:=\sum_{j=0}^k\overline{\rho}^j\overline{d}$. 
We prove~\eqref{eq:term_less_cons} using induction. 
For $k=0$, we have
\begin{align*}
\tilde{s}_{0|t+1}-\overline{d}=s^*_{1|t}-\overline{d}\geq \dfrac{\Delta\eta_t}{\eta_t}[s^*_{1|t}-\overline{d}].
\end{align*}
Induction step:
Suppose~\eqref{eq:term_less_cons} holds from some $k\in\{0,\dots,N-1\}$, then we have 
\begin{align*}
&\tilde{s}_{k+1|t+1}-\underline{d}_{k+1}\\
\stackrel{\eqref{eq:def_tilde_s},\eqref{eq:rho_bound}}\geq& \underline{\rho}\tilde{s}_{k|t+1}-\underline{d}_{k+1}+\Delta\eta_t\tilde{w}_{\mathbb{B}}(\overline{x}^*_{k+1|t},\overline{u}^*_{k+1|t})\\
=&\underline{\rho}(\tilde{s}_{k|t+1}-\underline{d}_k)+\Delta\eta\tilde{w}_{\mathbb{B}}(\overline{x}^*_{k+1|t},\overline{u}^*_{k+1|t})\\
\stackrel{\eqref{eq:term_less_cons}}{\geq} &(\underline{\rho}^{k+1}/\overline{\rho}^k)\dfrac{\Delta\eta_t}{\eta_t}[s^*_{k+1|t}-\overline{d}_k]+\Delta\eta \tilde{w}_{\mathbb{B}}(\overline{x}^*_{k+1|t},\overline{u}^*_{k+1|t})\\
=&(\underline{\rho}/\overline{\rho})^{k+1}\dfrac{\Delta\eta_t}{\eta_t}[\overline{\rho}_ks^*_{k+1|t}-\overline{\rho}\overline{d}_k]+\Delta\eta \tilde{w}_{\mathbb{B}}(\overline{x}^*_{k+1t},\overline{u}^*_{k+1|t})\\
\stackrel{\eqref{eq:RAMPC_dyn_s},\eqref{eq:rho_bound}}{\geq} &(\underline{\rho}/\overline{\rho})^{k+1}\frac{\Delta\eta_t}{\eta_t}[s^*_{k+2|t}-\eta_t\tilde{w}_{\mathbb{B}}(\overline{x}^*_{k+1|t},\overline{u}^*_{k+1|t}) -\overline{d}\\
&-\overline{\rho}\overline{d}_k]
+\Delta\eta \tilde{w}_{\mathbb{B}}(\overline{x}^*_{k+1t},\overline{u}^*_{k+1|t})\\
\geq &(\underline{\rho}/\overline{\rho})^{k+1}\dfrac{\Delta\eta_t}{\eta_t}[s^*_{k+2|t}-\overline{d}_{k+1}]\\
&+(1-(\underline{\rho}/\overline{\rho})^N)\Delta\eta\tilde{w}_{\mathbb{B}}(x^*_{k+1|t},u^*_{k+1|t})\\
\geq &(\underline{\rho}/\overline{\rho})^{k+1}\dfrac{\Delta\eta_t}{\eta_t}[s^*_{k+2|t}-\overline{d}_{k+1}].
\end{align*}
\textbf{Part III. } Assumption~\ref{ass:term}:
Property~\eqref{eq:term_2} holds if 
\begin{align*}
&h_j(x,k_f(x))+c_js\stackrel{\eqref{eq:tilde_c_j},\eqref{eq:term_set_def}}{\leq} h_j(0,0)+\tilde{c}_j\gamma_{\eta}+ c_j\overline{s}_{f,\eta}\\
=&h_j(0,0)+c_j (\overline{s}_{f,0}+\eta \overline{s}_{f,1})+\tilde{c}_j (\gamma_0-\eta\gamma_1)\leq 0,
\end{align*}
for all $\eta\in[0,\eta_0]$
Since this condition is affine in $\eta$, it holds   \textit{iff} it holds for $\eta=0$ and $\eta=\eta_0$ and is thus satisfied due to~\eqref{eq:termProp_cond_constraints}. 
Condition~\eqref{eq:term_4} holds with $s_{f,\eta}\leq s_{f,\eta_0}=\overline{s}\leq \delta_{loc}$~\eqref{eq:termProp_cond_constraint_s}.\\
Condition~\eqref{eq:termProp_cond_constraint_gamma} ensures that the properties~\eqref{eq:term2_clf}--\eqref{eq:term2_clf_prop} can be invoked in the terminal set.\\
The robust positive invariance condition~\eqref{eq:term_1}:
The RPI condition in $x,\gamma$ holds with 
\begin{align*}
&V_s(x^+)\stackrel{\eqref{eq:term2_cont}}{\leq}  V_s(f_{\overline{\theta}}(x,k_f(x)) )+c_{f,\delta}V_{\delta}(x^+,f_{\overline{\theta}}(x,k_f(x)))\\
\stackrel{\eqref{eq:term2_contract}}{\leq}& \rho_fV_s(x)+c_{f,\delta}\tilde{s}\\
\stackrel{\eqref{eq:rho_bound},\eqref{eq:term_set_def}}{\leq}&\rho_f\gamma_{\eta}+c_{f,\delta}\left[(\overline{\rho}^N\overline{w}_{\widetilde{\Theta}}+\Delta\eta\overline{w}_{\mathbb{B}}\sum_{k=0}^{N-1}(\rho_{\overline{\theta}}+\Delta\eta L_{\mathbb{B}})^k\right]\\
\stackrel{\eqref{eq:rho_bound}}{\leq} &\rho\gamma_\eta+c_{f,\delta}\overline{\rho}^N(\eta\overline{w}_{\mathbb{B}}+\overline{d})+\Delta\eta c_{f,\delta}\overline{w}_{\mathbb{B}}\sum_{k=0}^{N-1}\overline{\rho}^k\\
\stackrel{\eqref{eq:termProp_cond_RPI_gamma}}{\leq} &\gamma_{\eta}+\Delta\eta\gamma_1=\gamma_{\eta^+}.
\end{align*}
where the last inequality follows from the fact that~\eqref{eq:termProp_cond_RPI_gamma}  is affine in $(\eta,\Delta \eta)$ and thus it suffices to verify the inequality at the vertices.
Note that the following bound holds
\begin{align}
\label{eq:term_prop_alternative_s_intermediate}
&\rho_{\overline{\theta}} s+\tilde{w}_{\delta,\widetilde{\Theta},\mathbb{D}}(x,k_f(x),s)\\
=&(\rho_{\overline{\theta}}+\eta L_{\mathbb{B}})s+\eta\tilde{w}_{\mathbb{B}}(x,k_f(x))+\overline{d}\nonumber\\
\stackrel{\eqref{eq:termProp_bound_w}}{\leq}& (\rho_{\overline{\theta}}+\eta L_{\mathbb{B}}) {s}+ \eta(\tilde{w}_{\mathbb{B}}(0,0)+L_{\mathbb{B}}V_s(x))+\overline{d}\nonumber\\
\stackrel{\eqref{eq:rho_bound}}{\leq}& \overline{\rho} \overline{s}_{f,\eta}+\overline{d}+\eta\tilde{w}_\mathbb{B}(0,0)+\eta  L_{\mathbb{B}} \gamma_{\eta_0}.\nonumber
\end{align}
First, we derive the following bound 
\begin{align*}
&s^+\leq (\rho_{\overline{\theta}}+\eta L_{\mathbb{B}}) s+\eta\tilde{w}_{\mathbb{B}}(x,k_f(x))+\overline{d}-\tilde{s}\\
\stackrel{\eqref{eq:s_tilde_more_restrictive}}{\leq}&  \left[(\rho_{\overline{\theta}}+\eta L_{\mathbb{B}}) s+\eta\tilde{w}_{\mathbb{B}}(x,k_f(x))\right]
(1-\frac{\Delta\eta}{\eta}(\underline{\rho}/\overline{\rho})^N)\\
&+\overline{d}\left(1-\underline{\rho}^N+\dfrac{\Delta\eta}{\eta}(\underline{\rho}/\overline{\rho})^N\sum_{j=1}^N \overline{\rho}\right).\\
\stackrel{\eqref{eq:term_prop_alternative_s_intermediate}}{\leq}&\left[\overline{\rho}\overline{s}_{f,\eta}+\eta\tilde{w}_{\mathbb{B}}(0,0)+\eta L_{\mathbb{B}}\gamma_{\eta_0}\right]
(1-\frac{\Delta\eta}{\eta}(\underline{\rho}/\overline{\rho})^N)\\
&+\overline{d}\left(1-\underline{\rho}^N+\dfrac{\Delta\eta}{\eta}(\underline{\rho}/\overline{\rho})^N\sum_{j=1}^N \overline{\rho}^j\right).\\
\end{align*}
For $\Delta\eta=0$, the RPI condition then directly follows using
\begin{align}
\label{eq:term_prop_alternative_s_intermediate2}
&\overline{\rho}\overline{s}_{f,\eta}+\eta\tilde{w}_{\mathbb{B}}(0,0)+\eta L_{\mathbb{B}}\gamma_{\eta_0}+\overline{d}(1-\underline{\rho}^N).\nonumber\\
\stackrel{\eqref{eq:termProp_cond_RPI_s}}{\leq} &\overline{s}_{f,\eta},
\end{align}
where we use the fact, that the term is affine in $\eta$ and thus attains its maximum for $\eta\in\{0,\eta_0\}$. 
To show robust positive invariant in case $\Delta\eta\neq 0$, we use the same bound to obtain:
\begin{align*}
&\left[\overline{\rho}\overline{s}_{f,\eta}+\eta\tilde{w}_{\mathbb{B}}(0,0)+\eta L_{\mathbb{B}}\gamma_{0}\right]
(1-\frac{\Delta\eta}{\eta}(\underline{\rho}/\overline{\rho})^N)\\
&+\overline{d}\left(1-\underline{\rho}^N+\dfrac{\Delta\eta}{\eta}(\underline{\rho}/\overline{\rho})^N\sum_{j=1}^N \overline{\rho}\right).\\
\leq&\left[  \overline{s}_{f,\eta}-\overline{d}(1-\overline{\rho}^N) \right] (1-\frac{\Delta\eta}{\eta}(\underline{\rho}/\overline{\rho})^N)\\
&+\overline{d}\left(1-\underline{\rho}^N+\dfrac{\Delta\eta}{\eta}(\underline{\rho}/\overline{\rho})^N\sum_{j=1}^N \overline{\rho}\right)\\
=&\dfrac{\Delta\eta}{\eta}(\underline{\rho}/\overline{\rho})^N\left[\overline{d}(1-\overline{\rho}^N+\sum_{j=1}^N\overline{\rho})-\overline{s}_{f,\eta}  \right]+\overline{s}_{f,\eta}\\
\stackrel{\eqref{eq:termProp_cond_RPI_s_3}}{\leq}& \overline{s}_{f,\eta}-\Delta\eta\overline{s}_{f,1}=\overline{s}_{f,\eta^+}.
\end{align*} 
Condition~\eqref{eq:term_3} follows from
\begin{align*}
&\tilde{w}_{\delta,\widetilde{\Theta},\mathbb{D}}(x,k_f(x),s)=\overline{d}+\eta(\tilde{w}_{\mathbb{B}}(x,k_f(x))+\eta L_{\mathbb{B}}s\\
\stackrel{\eqref{eq:w_tilde_b}\eqref{eq:termProp_bound_w}}{\leq}&\overline{d}+\eta(\tilde{w}_{\mathbb{B}}(0,0)+ L_{\mathbb{B}}\overline{s}_{f,\eta}+ \tilde{L}_{\mathbb{B}}\gamma_{\eta})\\
\stackrel{\eqref{eq:termProp_cond_w_bar}}{\leq} &\overline{d}+\eta \overline{w}_{\mathbb{B}}.
\end{align*}
using the fact that $\gamma_\eta,\overline{s}_{f,\eta}$ are affine in $\eta$ and thus the sum attains the  extremum at $\eta\in\{0,\eta_0\}$. 
\end{subequations}
The parametrization of $\overline{w}_{\widetilde{\Theta}}$  directly ensures satisfaction of~\eqref{eq:term_6}.
\end{proof}
The basic parametrization of the proposed terminal set is similar to the design in~\cite[Prop.~5]{Robust_TAC_19} for robust MPC and can be viewed as a generalization of the nominal design procedure in~\cite{chen1998quasi}. 
The design can then be achieved as follows: 
\begin{enumerate}
\item  Design a standard terminal controller $k_f$ with local Lyapunov function $V_s(x)$, e.g. $k_f(x)=Kx$, $V_s(x)=\|x\|_{P_f}$.
\item  Determine corresponding constants: $c_{f,l},c_{f,u},c_{f,\delta},\overline{\gamma},\rho_f$. 
\item Determine constants with  the following LP with some weighting $\lambda\geq 0$:
\begin{align}
\label{eq:term_design_LP}
\max_{\gamma_0,\gamma_1,\overline{s}_{f,0},\overline{s}_1,\overline{w}_{\mathbb{B}}}\gamma_0-\eta_0\gamma_1+\lambda (s_{f_0}+\eta_0\overline{s}_{f,1})\\
\text{s.t. }\eqref{eq:termProp_cond_all} \text{ holds }\forall (\eta,\Delta\eta)\in\text{vert}(\Omega). \nonumber
\end{align} 
\end{enumerate}
We would like to point out, that in this design the prediction horizon $N$ cannot be changed arbitrarily. 
The resulting terminal set is such, that if the parametric uncertainty decreases $\Delta\eta>0$, the size of the terminal region (in $x$, $\gamma$) increases and the maximal size of the tube $\overline{s}_{f}$ shrinks.
Condition~\eqref{eq:termProp_cond_RPI_s_3} poses a lower bound on the tube size $s$ depending on $\overline{d}$, which corresponds to the constant tube size in case of only additive disturbances ($\eta=0$). 
In the proof, we apply bounds of the form $\gamma_\eta\eta\geq \gamma_{\eta_0}\eta$ to arrive at simple linear expression, which introduces some conservatism.

%!TEX root = ./Adaptive_Nonlin.tex
%%%%%%%%%%%%%%%%%%%%%%%%%%%%%%%%%%%%%%%%%%%%%%%%%%%%%%%%%%%%%%%%%%%%%%%%%%%%%%%
\subsection{Numerical example - additional details} 
\label{app:example}  
In the following, we provide additional details regarding the numerical example in Section~\ref{sec:num}, including the offline design of $V_{\delta}$, $\kappa$. 
\subsubsection*{Offline Computations}
First, we describe the computation of $V_{\delta}$, $\kappa$ satisfying Assumption~\ref{ass:increm}. 
The following procedure is similar to the offline design proposed in~\cite{JK_QINF}, utilizing a quasi-LPV parametrization and LMIs.
The following proposition provides sufficient condition for Assumption~\ref{ass:increm} using conditions on the Jacobian of the dynamics and a suitable parametrization of $V_{\delta}$, $\kappa$. 
\begin{proposition}
\label{prop:Jacobian}
Let Assumption~\ref{ass:model} hold and suppose $f$ and $G$ are twice continuously differentiable. 
Consider the Jacobian matrices $A_{\theta}:\mathcal{Z}\rightarrow\mathbb{R}^{n\times n}$, $B_{\theta}:\mathcal{Z}\rightarrow\mathbb{R}^{n\times m}$ defined as
\begin{align*}
A_{\theta}(z,v):=\left[\dfrac{\partial f+G\theta}{\partial x}\right]_{(z,v)},~ 
B_{\theta}(z,v):=\left[\dfrac{\partial f+G\theta}{\partial u}\right]_{(z,v)}.
\end{align*}
Assume there exists a parametrized continuous feedback $K:\mathcal{Z}\rightarrow\mathbb{R}^{m\times n}$ and a positive definite matrix $P\in\mathbb{R}^{n\times n}$, such that the following inequality holds for all $\theta\in\overline{\theta}_0\oplus\widetilde{\Theta}_0$, and all $(z,v)\in\mathcal{Z}$:
\begin{align}
\label{eq:cond_Jac}
A_{cl,\theta}^\top (z,v) P A_{cl,\theta}(z,v)-\tilde{\rho}_\theta^2 P\leq 0,
\end{align}
with the closed-loop system
\begin{align*}
A_{cl,\theta}(z,v):=A_{\theta}(z,v)+B_{\theta}(z,v)K(z,v),
\end{align*}
and some contraction constant $\tilde{\rho}$. 
Then for any constant $\epsilon>0$, there exists a small enough constant $\delta_{loc}>0$, such that the quadratic incremental Lyapunov function $V_{\delta}(x,z)=\|x-z\|_P$ and the control law $\kappa(x,z,v)=v+K(z,v)\cdot (x-z)$ satisfy Assumption~\ref{ass:increm} with $\rho_{\theta}=\tilde{\rho}_\theta+\epsilon$ and constants $c_{\delta,l}$, $c_{\delta,u}$, $\kappa_{\max}>0$, $L_{\delta}=0$.
\end{proposition}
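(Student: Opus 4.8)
The plan is to verify each requirement of Assumption~\ref{ass:increm} in turn, treating the quadratic form $V_{\delta}(x,z)=\|x-z\|_P$ and the feedback $\kappa(x,z,v)=v+K(z,v)(x-z)$ as given. The bounds~\eqref{eq:increm_a} are immediate with $c_{\delta,l}=\sqrt{\lambda_{\min}(P)}$, $c_{\delta,u}=\sqrt{\lambda_{\max}(P)}$, and~\eqref{eq:increm_d}, \eqref{eq:increm_e} follow from the triangle inequality for the norm $\|\cdot\|_P$ with $L_{\delta}=0$. The Lipschitz bound~\eqref{eq:increm_b} on the feedback holds with $\kappa_{\max}\le \max_{(z,v)\in\mathcal{Z}}\|K(z,v)\|/c_{\delta,l}$, using that $K$ is continuous and $\mathcal{Z}$ compact, together with $\|x-z\|\le V_{\delta}(x,z)/c_{\delta,l}$. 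The only nontrivial property is the contraction~\eqref{eq:increm_c}, i.e.\ showing $V_{\delta}(x^+,z^+)\le(\tilde{\rho}_\theta+\epsilon)V_{\delta}(x,z)$ for $x^+=f_{\overline\theta}(x,\kappa(x,z,v))$, $z^+=f_{\overline\theta}(z,v)$ when $V_{\delta}(x,z)\le\delta_{loc}$; this is where the main work lies.

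For the contraction step I would use a parametrized mean-value / fundamental theorem of calculus argument along the line segment connecting $(z,v)$ to $(x,\kappa(x,z,v))$. Write $x-z=:\Delta x$ and $\kappa(x,z,v)-v=K(z,v)\Delta x$, and parametrize $x_\tau=z+\tau\Delta x$, $u_\tau=v+\tau K(z,v)\Delta x$ for $\tau\in[0,1]$. Then $x^+-z^+=\int_0^1 \big(A_{\overline\theta}(x_\tau,u_\tau)+B_{\overline\theta}(x_\tau,u_\tau)K(z,v)\big)\,d\tau\,\Delta x$, which I would rewrite as $A_{cl,\overline\theta}(z,v)\Delta x$ plus a remainder term that is $O(\|\Delta x\|^2)$, using that $f,G$ are twice continuously differentiable (so $A_\theta,B_\theta$ are $C^1$) and that $K$ is continuous — more precisely, I would split the integrand into its value at $\tau=0$, namely $A_\theta(z,v)+B_\theta(z,v)K(z,v)=A_{cl,\theta}(z,v)$, plus $\int_0^1\big(A_\theta(x_\tau,u_\tau)-A_\theta(z,v)\big)d\tau + \int_0^1\big(B_\theta(x_\tau,u_\tau)-B_\theta(z,v)\big)K(z,v)\,d\tau$, each factor of which is bounded in norm by a constant times $\|\Delta x\|$ on the compact set, uniformly in $\theta\in\overline{\theta}_0\oplus\widetilde{\Theta}_0$. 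Hence $\|x^+-z^+\|_P\le \|A_{cl,\overline\theta}(z,v)\Delta x\|_P + c\,\|\Delta x\|^2$ for some constant $c$ depending only on $P$, the derivative bounds, and $\max\|K\|$.

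Now invoke~\eqref{eq:cond_Jac}: it says precisely $\|A_{cl,\theta}(z,v)w\|_P^2\le\tilde\rho_\theta^2\|w\|_P^2$ for all $w$, i.e.\ $\|A_{cl,\overline\theta}(z,v)\Delta x\|_P\le\tilde\rho_{\overline\theta}\|\Delta x\|_P$. Combining, and using $\|\Delta x\|\le\|\Delta x\|_P/c_{\delta,l}$, gives $V_{\delta}(x^+,z^+)\le\big(\tilde\rho_{\overline\theta}+ (c/c_{\delta,l})\|\Delta x\|\big)V_{\delta}(x,z)$. Since $\|\Delta x\|\le V_{\delta}(x,z)/c_{\delta,l}\le\delta_{loc}/c_{\delta,l}$ on the region $V_{\delta}(x,z)\le\delta_{loc}$, choosing $\delta_{loc}$ small enough that $(c/c_{\delta,l}^2)\delta_{loc}\le\epsilon$ yields $V_{\delta}(x^+,z^+)\le(\tilde\rho_{\overline\theta}+\epsilon)V_{\delta}(x,z)$, establishing~\eqref{eq:increm_c} with $\rho_\theta=\tilde\rho_\theta+\epsilon$. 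The main obstacle, and the step deserving care, is making the remainder estimate uniform in $(z,v)\in\mathcal{Z}$ and in $\theta\in\overline\theta_0\oplus\widetilde\Theta_0$ simultaneously — this uses compactness of $\mathcal{Z}$ and of the parameter set plus continuity of the second derivatives of $f,G$ and of $K$ — and in checking that the segment $(x_\tau,u_\tau)$ stays inside $\mathcal{Z}$ (or a fixed compact neighbourhood of it on which the derivative bounds hold) for $\delta_{loc}$ small; one also needs $(x,\kappa(x,z,v))\in\mathcal{Z}$ which is part of the hypotheses under which~\eqref{eq:increm_c} is required to hold, so convexity of $\mathcal{Z}$ or a slight enlargement suffices. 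Everything else is routine.
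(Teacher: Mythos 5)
Your proposal is correct and follows essentially the same route as the paper: the paper likewise dispatches~\eqref{eq:increm_a}, \eqref{eq:increm_b}, \eqref{eq:increm_d}, \eqref{eq:increm_e} as trivial for quadratic $V_{\delta}$ and bounded $K$, and establishes the contraction~\eqref{eq:increm_c} via a first-order Taylor expansion whose remainder is bounded on a small enough $\delta_{loc}$-neighbourhood using the twice continuous differentiability of $f$ and $G$ (deferring details to an external lemma). Your integral-form mean-value argument and the explicit choice $(c/c_{\delta,l}^2)\delta_{loc}\leq\epsilon$ simply fill in the details the paper leaves to the cited reference, including the uniformity over the compact sets $\mathcal{Z}$ and $\overline{\theta}_0\oplus\widetilde{\Theta}_0$ that the paper takes for granted.
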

\begin{proof}
First note, that conditions~\eqref{eq:increm_a},\eqref{eq:increm_b},\eqref{eq:increm_d},\eqref{eq:increm_e} are trivially satisfied for any positive definite matrix $P$ and any bounded feedback $K(z,v)$ with $c_{\delta,l}=\sqrt{\lambda_{\min}(P)}$, $c_{\delta,u}=\sqrt{\lambda_{\max}(P)}$, $\kappa_{\max}=\max_{(z,v)\in\mathcal{Z}}\sigma_{\max}(K(z,v))$, where $\sigma_{\max}$ denotes the maximal singular value. 
Satisfaction of~\eqref{eq:increm_b} with $\rho_\theta$ can be shown by using a first order taylor approximation and bounding the remainder in  a small enough neighbourhood $\delta_{loc}$ using the fact that $f$ and $G$ are twice-continuous differentiable, compare~\cite[Lemma 1]{JK_QINF} for similar arguments.  
\end{proof}

In the following we formulate as set of LMIs to compute the matrices $P$, $K(z,v)$ using condition~\eqref{eq:cond_Jac}.
Given a desired contraction rate $\rho$, the polytopic constraint set $\mathcal{Z}$ defined by $L_{j,x},~L_{j,u}$ and the parameter set $\overline{\theta}_o\oplus\widetilde{\Theta}_0$, we determine $P$, $K$ by solving the semidefinit program (SDP) given in~\eqref{eq:LMI}.
Therein $X\in\mathbb{R}^{n\times n}$ is a constant positive definite matrix and $Y(z,v)\in\mathbb{R}^{m\times n}$ is parametrized as $Y(z,v)=Y_0+\sum_i\vartheta_i(z,v) Y_i$, $\vartheta(z,v)=(v,z_1,z_2,v^2,z_1^2,z_2^2,vz_1,vz_2,z_1z_2)^\top\in\mathbb{R}^9$.
The matrices $P$, $K(z,v)$ are then given by $P=X^{-1}$, $K(z,v)=Y(z,v)P$. 
The constraints~\eqref{eq:LMI_2} ensure the desired contraction in~\eqref{eq:cond_Jac} and hence condition~\eqref{eq:increm_b} using Prop.~\ref{prop:Jacobian}.
The constraints~\eqref{eq:LMI_3} ensure that the constants $c_j$ in~\eqref{eq:c_j} satisfy $c_j\leq 1$. 
The objective~\eqref{eq:LMI_1} minimizes $P$.
Conditions on $\overline{d}$, $L_{\mathbb{B}}$ can also be formulated in terms of LMIs, compare e.g. the numerical example in~\cite{Robust_TAC_19}. 
Note that~\eqref{eq:LMI} as stated needs to be verified for all $(z,v)\in\mathcal{Z}$. 
For the considered low dimensional example, we simply gridded the constraint $\mathcal{Z}$ and then solved the resulting LMIs using  SeDuMi-1.3~\cite{sturm1999using}. 
Alternatively, an approach based on quasi-convexity can be used, compare~\cite[Prop.~1]{JK_QINF}. 
In the numerical example, condition~\eqref{eq:increm_b} is satisfied with $\delta_{loc}=22.81$, $\rho_{\overline{\theta}_0}=0.99$, which is verified numerically (similar to~\cite[Alg.~1]{JK_QINF}). 

\begin{table*}
\small{
\begin{subequations}
\label{eq:LMI}
\begin{align}
\label{eq:LMI_1}
&\min_{X,Y_i}-\log\det(X)\\
\label{eq:LMI_2}
\text{s.t. }& \begin{pmatrix}
\rho^2 X&(A_\theta(z,v)X+B_\theta(z,v)Y(z,v))^\top\\
A_\theta(z,v)X+B_\theta(z,v)Y(z,v)&X
\end{pmatrix}\geq 0,\\
\label{eq:LMI_3}
&\begin{pmatrix}
1&L_{j,x}X+L_{j,u}Y(z,v)\\
(L_{j,x}X+L_{j,u}Y(z,v))^\top &X
\end{pmatrix}\geq 0,\\
&j=1,\dots,q,\quad \forall (z,v)\in\mathcal{Z},\quad \forall \theta\in\overline{\theta}_0\oplus\widetilde{\Theta}_0.
\end{align}
\end{subequations}
}
\end{table*} 
\subsubsection*{Terminal set $\mathcal{X}_f$}
As discussed in Section~\ref{sec:num}, in order to compare the ROA we consider a terminal constraint of the form $\mathbb{X}_{N|t}\subseteq\mathcal{X}_f$, with some RPI set $\mathcal{X}_f$.
We consider the RPI set $\mathcal{X}_f=\{x|~\|x\|^2_{P_f}\leq \gamma\}$, $\gamma=0.63$ and a linear terminal controller $k_f x$, with $k_f,~P_f$ according to~\cite{pin2009robust}. 
Given the ellipsoidal tube $\mathbb{X}_{k|t}=\{x|~\|x_{k|t}-x\|_P\leq s_{k|t}\}$, the constraint $\mathbb{X}_{N|t}\subseteq\mathcal{X}_f$ can be formulated  as
\begin{align}
\|\overline{x}_{N|t}\|_{P_f}+c_f s_{N|t}\leq \gamma,
\end{align}
with the constant $c_f=\sqrt{\lambda_{\max}(P_f,P)}$, where $\lambda_{\max}(A,B)$ denotes the maximal generalized eigenvalue satisfying $Av=\lambda B v$. 
For the Lipschitz-based approach we have $P=I$ and thus $c_f=\sqrt{\lambda_{\max}(P_f)}$.

\subsubsection*{Parameter Estimation}
The successively improving set-membership estimate based on Algorithm~\ref{alg:HC} can be seen in Figure~\ref{fig:ParameterEstimation}.
The hypercube contains the true parameter $\theta^*$ for all iterations and shrinks with the time.
\begin{figure}[H]
	\centering
	% This file was created by matlab2tikz.
%
%The latest updates can be retrieved from
%  http://www.mathworks.com/matlabcentral/fileexchange/22022-matlab2tikz-matlab2tikz
%where you can also make suggestions and rate matlab2tikz.
%
\begin{tikzpicture}

\begin{axis}[%
width=0.55*4.521in,
height=0.55*3.566in,
at={(0.758in,0.481in)},
scale only axis,
xmin=0.999,
xmax=1.021,
ymin=0.979,
ymax=1.001,
axis background/.style={fill=white},
axis x line*=bottom,
axis y line*=left,
xmajorgrids,
ymajorgrids,
ticklabel style={/pgf/number format/precision=3}
]

\addplot[area legend, line width=1.0pt, draw=black, fill=white!95!lightgray]
table[row sep=crcr] {%
x	y\\
1.02	1\\
1.02	0.98\\
1	0.98\\
1	1\\
}--cycle;
\addlegendentry{t=0}

\addplot[area legend, line width=1.0pt, draw=black, fill=white!75!lightgray]
table[row sep=crcr] {%
x	y\\
1.01109	0.999997833059578\\
1.01109	0.988907833059578\\
1	0.988907833059578\\
1	0.999997833059578\\
}--cycle;
\addlegendentry{t=5}

\addplot[area legend, line width=1.0pt, draw=black, fill=white!75!gray]
table[row sep=crcr] {%
x	y\\
1.00461058628337	0.999996215919351\\
1.00461058628337	0.995386215919351\\
1.00000058628337	0.995386215919351\\
1.00000058628337	0.999996215919351\\
}--cycle;
\addlegendentry{t=10}

\addplot[area legend, line width=1.0pt, draw=black, fill=lightgray]
table[row sep=crcr] {%
x	y\\
1.00048279298899	0.999996215919351\\
1.00048279298899	0.999516215919351\\
1.00000279298899	0.999516215919351\\
1.00000279298899	0.999996215919351\\
}--cycle;
\addlegendentry{t=20}

\addplot[area legend, line width=1.0pt, draw=black, fill=gray]
table[row sep=crcr] {%
x	y\\
1.00048279298899	0.999996215919351\\
1.00048279298899	0.999516215919351\\
1.00000279298899	0.999516215919351\\
1.00000279298899	0.999996215919351\\
}--cycle;
\addlegendentry{t=40}

\end{axis}
\end{tikzpicture}%
	\caption{Shrinking parameter sets $\overline{\theta}_t\oplus\eta_t\mathbb{B}_\infty$ at time $t\in\{0,5,10,20,40\}$.}
	\label{fig:ParameterEstimation}
\end{figure}
 

\subsubsection*{Open-loop tube}
In Figure~\ref{fig:2D_openloop}, we can see the phase-plot of the open-loop trajectory considered in Figure~\ref{fig:tube}. 
Here, the difference in shape of the tube of the two formulations can be seen more clearly. 
Furthermore, the conservatism of the Lipschitz-based approach is clearly visible.  
\begin{figure}[H]
	\centering
	\includegraphics[width=0.45\textwidth]{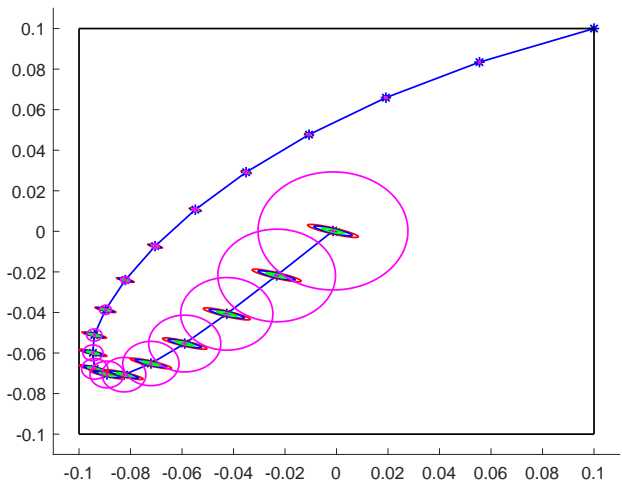}
	\caption{Initially predicted open-loop trajectory $x^*_{\cdot|0}$ (blue), tube $\mathbb{X}_{\cdot|0}^*$ via Prop.~\ref{prop:design_w_2} (red), its adaptive one (green),  tube via Prop.~\ref{prop:design_w_1} (blue),  the Lipschitz based approach~\cite{gonccalves2016robust} (c.f. Sec.~\ref{sec:special_guay}) (magenta), state constraints (black).}
	\label{fig:2D_openloop}
\end{figure}

\subsubsection*{Closed loop performance}
To compare the performance improvement relative to a purely robust MPC formulation (without parameter estimation), we consider a prediction horizon of $N=12$ and repeatedly reinitialize the system at initial points $x_{0}=(0.1,0.1)$ and $x_{0}=(-0.1,-0.1)$ (without reinitializing the parameter updates) and then simulate the system over $T=50$ steps. 
The difference in performance due to the LMS update can be seen in Figure~\ref{fig:LMS_perf}.
There, we can a robust MPC formulation (without parameter adaptation), the proposed RAMPC starting at $t=0$ and also the proposed RAMPC reinitialized at $x_0$ after multiple runs of being reinitialized and thus continuously improving the parameter estimates. 
We can see that the main difference due to improved parameter estimation is already visible at $t=5$. 
Compared to the robust formulation without adaptation, the RAMPC formulation decreases the overshoot. 
Considering the cost $\sum_{t=0}^{T-1}\ell(x_t,u_t)$ with $T=50$, we can see a relative performance improvement of $3.5~\%$ by utilizing the LMS update. 
This performance improvement is expected to increase for larger parametric uncertainty. 
\begin{figure}[hbtp]
\begin{center}
\includegraphics[width=0.45\textwidth]{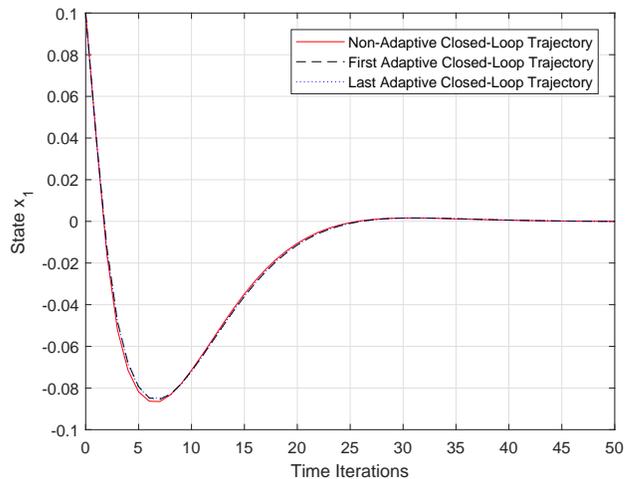}
\end{center}
\caption{Closed-Loop Trajectories with $x_{0}=(0.1,0.1)$:  Robust MPC (without parameter adaptation) (red, solid), RAMPC trajectory (black, dashed) and RAMPC trajectory after multiple simulations (blue, dotted), i.e.,  for $t\in[400,450]$.  }
\label{fig:LMS_perf}
\end{figure}

\end{document}